\documentclass[letterpaper, 11pt]{article}
\usepackage[margin=1in]{geometry}
\usepackage{latexsym}
\usepackage{amssymb}
\usepackage{times}
\usepackage{xcolor}
\usepackage{color}
\usepackage{amsmath,amsfonts,amsthm}
\usepackage{graphicx,float}
\usepackage{enumitem}
\usepackage[ruled,vlined,resetcount, algosection]{algorithm2e}
\usepackage{setspace}

\usepackage{hyperref}
\makeatletter
\def\namedlabel#1#2{\begingroup
    #2%
    \def\@currentlabel{#2}%
    \phantomsection\label{#1}\endgroup
}
\makeatother

\hypersetup{
bookmarks=true,
colorlinks=true,
linkcolor=blue,
urlcolor=blue,
citecolor=blue,
pdftex,
linktocpage=true, 
hyperindex=true
}

\newcommand{\poly}{\mathsf{poly}}

\newcommand{\AC}{\mathsf{AC}}

\newcommand{\eps}{\varepsilon}
\newcommand{\bit}{\{0, 1\}}

\newcommand{\N}{\mathsf{N}}

\newcommand{\ED}{\mathsf{ED}}
\newcommand{\LCS}{\mathsf{LCS}}

\newtheorem{lemma}{Lemma}[section]
\newtheorem{theorem}[lemma]{Theorem}

\newtheorem{construction}[lemma]{Construction}

\newtheorem{definition}[lemma]{Definition}
\newtheorem{corollary}[lemma]{Corollary}

\newtheorem{remark}[lemma]{Remark}

\begin{document}

\allowdisplaybreaks

\begin{titlepage}
\def\thepage{}

\title{Deterministic Document Exchange Protocols, and Almost Optimal Binary Codes for Edit Errors}
\author{Kuan Cheng  \thanks{kcheng17@jhu.edu.\ Department of Computer Science, Johns Hopkins University. Supported by NSF Grant CCF-1617713.} \and Zhengzhong Jin  \thanks{zjin12@jhu.edu.\ Department of Computer Science, Johns Hopkins University.\ Partially supported by NSF Grant CCF-1617713.}\and Xin Li  \thanks{lixints@cs.jhu.edu.\ Department of Computer Science, Johns Hopkins University. Supported by NSF Grant CCF-1617713.} \and Ke Wu \thanks{AshleyMo@jhu.edu.\ Department of Computer Science, Johns Hopkins University.} }

\maketitle \thispagestyle{empty}


\begin{abstract}
We study two basic problems regarding edit errors (insertions and deletions). The first one is document exchange, where two parties Alice and Bob hold two strings $x$ and $y$ with a bounded edit distance $k$. The goal is to have Alice send a short sketch to Bob, so that Bob can recover $x$ based on $y$ and the sketch. The second one is the fundamental problem of designing error correcting codes for edit errors, where the goal is to construct an explicit code to transmit a message $x$ through a channel that can add at most $k$ worst case insertions and deletions, so that the original message $x$ can be successfully recovered at the other end of the channel. Both problems have been extensively studied for decades, and in this paper we focus on deterministic document exchange protocols and binary codes for insertions and deletions (insdel codes). If the length of $x$ is $n$, then it is known that for small $k$ (e.g., $k \leq n/4$), in both problems the optimal sketch size or the optimal number of redundant bits is $\Theta(k \log \frac{n}{k})$. In particular, this implies the existence of binary codes that can correct $\eps$ fraction of insertions and deletions with rate $1-\Theta(\eps \log (\frac{1}{\eps}))$. However, known constructions are far from achieving these bounds.

In this paper we significantly improve previous results on both problems. For document exchange, we give an efficient deterministic protocol with sketch size $O(k \log^2 \frac{n}{k})$. This significantly improves the previous best known deterministic protocol, which has sketch size $O(k^2 + k \log^2 n)$ \cite{Belazzougui2015EfficientDS}. For binary insdel codes, we obtain the following results:

\begin{enumerate}
\item An explicit binary insdel code which encodes an $n$-bit message $x$ against $k$ errors with redundancy $O(k \log^2 \frac{n}{k})$. In particular this implies an explicit family of  binary insdel codes that can correct $\eps$ fraction of insertions and deletions with rate $1-O(\eps \log^2 (\frac{1}{\eps}))=1-\widetilde{O}(\eps)$. This significantly improves the previous best known result which only achieves rate $1-\widetilde{O}(\sqrt{\eps})$ \cite{7835185}, \cite{7541373}, and is optimal up to a $\log (\frac{1}{\eps})$ factor.

\item An explicit binary insdel code which encodes an $n$-bit message $x$ against $k$ errors with redundancy $O(k \log n)$. This significantly improves the previous best known result of \cite{8022906}, which only works for constant $k$ and has redundancy $O(k^2 \log k \log n)$; and that of \cite{Belazzougui2015EfficientDS}, which has redundancy $O(k^2 + k \log^2 n)$. Our code has optimal redundancy for $k \leq n^{1-\alpha}$, any constant $0<\alpha<1$. This is the first explicit construction of binary insdel codes that has optimal redundancy for a wide range of error parameters $k$, and this brings our understanding of binary insdel codes much closer to that of standard binary error correcting codes.
\end{enumerate}

In obtaining our results we introduce several new techniques. Most notably, we introduce the notion of \emph{$\eps$-self matching hash functions} and \emph{$\eps$-synchronization hash functions}. We believe our techniques can have further applications in the literature.
\end{abstract}

\end{titlepage}

\section{Introduction}
Given two strings $x, y$ over some finite alphabet $\Sigma$, the edit distance between them $\ED(x, y)$ is defined as the minimum number of  edit operations (insertions, deletions and substitutions) to change $x$ into $y$. Being one of the simplest metrics, edit distance has been extensively studied due to its wide applications in different areas.  For example, in natural language processing, edit distance is used in automatic spelling correction to determine possible corrections of a misspelled word; and in bioinformatics it can be used to measure the similarity between DNA sequences.\ In this paper, we study the general question of recovering from errors caused by edit operations. Note that without loss of generality we can only consider insertions and deletions, since a substitution can be replace by a deletion followed by an insertion, and this at most doubles the number of operations. Thus from now on we will only be interested in insertions and deletions, and we define $\ED(x, y)$ to be the minimum number of such operations required to change $x$ into $y$.

Insertion and deletion errors happen frequently in practice. For example, they occur in the process of reading magnetic and optical media, in genetic mutation where DNA sequences may change, and in internet protocols where some packets may get lost during routing. Another typical situation where these errors can occur is in distributed file systems,  e.g.,  when a file is stored in different machines and being edited by different people working on the same project. These files then may have different versions that need to be synchronized to remove the edit errors. In this context, we study the following two basic problems regarding insertion and deletion errors.

\begin{itemize}
\item \emph{Document exchange.} In this setting, two parties Alice and Bob each holds a string $x$ and $y$, and we assume that their edit distance is bounded by some parameter $k$. The goal is to have Alice send a sketch to Bob based on her string $x$ and the edit distance bound $k$, such that Bob can recover Alice's string $x$ based on his string $y$ and the sketch. Naturally, we would like to require both the message length and the computation time of Alice and Bob to be as small as possible.

\item \emph{Error correcting codes.} In this setting, two parties Alice and Bob are linked by a channel where the number of worst case insertion and deletions is bounded by some parameter $k$. Given any message, the goal is to have Alice send an encoding of the message to Bob through the channel, so that despite any possible insertion and deletion errors that may happen, Bob can recover the correct message after receiving the (possibly modified) codeword. Again, we would like to minimize both the codeword length (or equivalently, the number of redundant bits) and the encoding/decoding time. This is a generalization of the classical error correcting codes for Hamming errors.
\end{itemize}

It can be seen that these two problems are closely related. In particular, a solution to the document exchange problem can often be used to construct an error correcting code for insertion and deletion errors. In this paper, we focus on the setting where the strings have a binary alphabet, arguably the most popular and important setting in computer science.\footnote{Although, our document exchange protocols can be easily extended to larger alphabets, we omit the details here.}\ In this case, assume that Alice's string (or the message she wants to send) has length $n$, then it is known that for small $k$ (e.g., $k \leq n/4$) both the optimal sketch size in document exchange and the optimal number of redundant bits in an error correcting code is $\Theta(k \log (\frac{n}{k}))$, and this is true even for Hamming errors. In addition, both optimum can be achieved using exponential time, with the first one using a greedy coloring algorithm and the second one using a greedy sphere packing algorithm (which is essentially what gives the Gilbert-Varshamov bound).

It turns out that in the case of Hamming errors, both optimum (up to constants) can also be achieved efficiently in polynomial time.\ This is done by using sophisticated linear Algebraic Geometric codes \cite{hoholdt1998algebraic}. As a special case, one can use Reed-Solomon codes to achieve $O(k \log n)$ in both problems. However, the situation becomes much harder once we switch to edit errors, and our understanding of these two basic problems lags far behind the case of Hamming errors.\ We now survey related previous work below.

\paragraph{Document exchange.} Historically, Orlitsky \cite{185373} was the first one to study the document exchange problem. His work gave protocols for generally correlated strings $x, y$ using a greedy graph coloring algorithm, and in particular he obtained a deterministic protocol with sketch size $O(k \log n)$ for edit errors.  However, the running time of the protocol is exponential in $k$. The main question left there is whether one can design a document exchange protocol that is both communication efficient and time efficient.

There has been considerable progress afterwards \cite{CormodePSV00}, \cite{irmak2005improved}, \cite{Jowhari2012EfficientCP}. Specifically, Irmak et al. \cite{irmak2005improved} gave a randomized protocol that achieves sketch size $O(k \log(\frac{n}{k}) \log n)$ and running time $\tilde{O}(n)$. Independently, Jowhari \cite{Jowhari2012EfficientCP} also obtained a randomized protocol with sketch size $O(k \log^2 n \log^* n)$ and running time $\tilde{O}(n)$. A recent work by Chakraborty et al. \cite{Chakraborty2015LowDE} introduced a clever randomized embedding from the edit distance metric to the Hamming distance metric, and thus obtained a protocol with sketch size $O(k^2 \log n)$ and running time $\tilde{O}(n)$. Using the embedding in \cite{Chakraborty2015LowDE}, Belazzougui and Zhang \cite{BelazzouguiZ16} gave an improved randomized protocol with sketch size $O(k (\log^2 k+\log n))$ and running time $\tilde{O}(n+\poly(k))$, where the sketch size is asymptotically optimal for $k =2^{O(\sqrt{\log n})}$.

All of the above protocols, except the exponential time protocol of Orlitsky \cite{185373}, are however randomized. In practice, a deterministic protocol is certainly more useful than a randomized one. Thus one natural and important question is whether one can construct a deterministic protocol for document exchange with small sketch size (e.g., polynomial in $k \log n$) and efficient computation. This question is also important for applications in error correcting codes, since a randomized document exchange protocol is not very useful in designing such codes. It turns out that this question is quite tricky, and no such deterministic protocols are known even for $k>1$ until the work of Belazzougui \cite{Belazzougui2015EfficientDS} in 2015, where he gave a deterministic protocol with sketch size $O(k^2 + k \log^2 n)$ and running time  $\tilde{O}(n)$.

\paragraph{Error correcting codes.} Error correcting codes are fundamental objects in both theory and practice. Starting from the pioneering work of Shannon, Hamming and many others, error correcting codes have been intensively studied in the literature. This is true for both standard Hamming errors such as symbol corruptions and erasures, and edit errors such as insertions and deletions. While the study of codes against standard Hamming errors has been a great success, leading to a near complete knowledge and a powerful toolbox of techniques together with explicit constructions that match various bounds, our understanding of codes for insertion and deletion errors (insdel codes for short) is still rather poor. Indeed, insertion and deletion errors are strictly more general than Hamming errors, and the study of codes against such errors has resisted progress for quite some time, as demonstrated by previous work which we discuss below.

Since insertion and deletion errors are strictly more general than Hamming errors, all the upper bounds on the rate of standard codes also apply to insdel codes.\ Moreover, by using a similar sphere packing argument, similar lower bounds on the rate (such as the Gilbert-Varshamov bound) can also be shown. In particular, one can show (e.g., \cite{Levenshtein66}) that for binary codes, to encode a message of length $n$ against $k$ insertion and deletion errors with $k \leq n/2$, the optimal number of redundant bits is $\Theta(k \log (\frac{n}{k}))$; and to protect against $\eps$ fraction of insertion and deletion errors, the optimal rate of the code is $1-\Theta(\eps \log (\frac{1}{\eps}))$. On the other hand, if the alphabet of the code is large enough, then one can potentially recover from an error fraction approaching $1$ or achieve the singleton bound: a rate $1-\eps$ code that can correct $\eps$ fraction of insertion and deletion errors.

However, achieving these goals have been quite challenging.\ In 1966, Levenshtein \cite{Levenshtein66} first showed that the Varshamov-Tenengolts code \cite{VT65} can correct one deletion with roughly $\log n$ redundant bits, which is optimal. Since then many constructions of insdel codes have been given, but all constructions are far from achieving the optimal bounds. In fact, even correcting two deletions requires $\Omega(n)$ redundant bits, and even the first explicit asymptotically good insdel code (a code that has constant rate and can also correct a constant fraction of insertion and deletion errors) over a constant alphabet did not appear until the work of  Schulman and Zuckerman in 1999 \cite{796406}, who gave such a code over the binary alphabet. We refer the reader to the survey by Mercier et al. \cite{MercierBT10} for more details about the extensive research on this topic.

In the past few years, there has been a series of work trying to improve the situation for both the binary alphabet and larger alphabets. Specifically, for larger alphabets, a line of work by Guruswami et.\ al \cite{7835185}, \cite{7541373}, \cite{BukhV16} constructed explicit insdel codes that can correct $1-\eps$ fraction of errors with rate $\Omega(\eps^5)$ and alphabet size $\poly(1/\eps)$; and for a fixed alphabet size $t \geq 2$ explicit insdel codes that can correct $1-\frac{2}{t+1}-\eps$ fraction of errors with rate $(\eps/t)^{\poly(1/\eps)}$. These works aim to tolerate an error fraction approaching $1$ by using a sufficiently large alphabet size. Another line of work by Haeupler et al \cite{haeupler2017synchronization}, \cite{HS17c}, \cite{CHLSW18} introduced and constructed a combinatorial object called \emph{synchronization string}, which can be used to transform standard error correcting codes into insdel codes, at the price of increasing the alphabet size.\ Using explicit constructions of synchronization strings, \cite{haeupler2017synchronization} achieved explicit insdel codes that can correct $\delta$ fraction of errors with rate $1-\delta-\eps$ (hence approaching the singleton bound), although the alphabet size is exponential in $\frac{1}{\eps}$.

For the binary alphabet, which is the focus of this paper, it is well known that no code can tolerate an error fraction approaching $1$ or achieve the singleton bound. Instead, the major goal here is to construct explicit insdel codes for some small  fraction ($\eps$) or some small number ($k$) of errors that can achieve the optimal rate of $1-\Theta(\eps \log (\frac{1}{\eps}))$ or the optimal redundancy of $\Theta(k \log (\frac{n}{k}))$, which is analogous  to achieving the Gilbert-Varshamov bound for standard error correcting codes. Slightly less ambitiously, one can ask to achieve redundancy $O(k \log n)$, which is optimal when $k \leq n^{1-\alpha}$ for any constant $\alpha>0$, and easy to achieve in the case of Hamming errors by using Reed-Solomon codes. In this context, Guruswami et.\ al \cite{7835185}, \cite{7541373} constructed explicit insdel codes that can correct $\eps$ fraction of errors with rate $1-\tilde{O}(\sqrt{\eps})$, which is the best possible by using code concatenation.\ For any fixed constant $k$, another work by Brakensiek et.\ al \cite{8022906} constructed an explicit insdel code that can encode an $n$-bit message against $k$ insertions/deletions with $O(k^2 \log k \log n)$ redundant bits, which is asymptotically optimal when $k$ is a fixed constant. We remark that the construction in \cite{8022906} only works for constant $k$, and does not give anything when $k$ becomes larger (e.g., $k=\log n$). Finally, using his deterministic document exchange protocol, Belazzougui \cite{Belazzougui2015EfficientDS} constructed an explicit insdel code that can encode an $n$-bit message against $k$ insertions/deletions with $O(k^2 + k \log^2 n)$ redundant bits. In summary, there remains a huge gap between the known constructions and the optimal bounds in the case of a binary alphabet. In particular, even achieving $O(k \log n)$ redundancy has been far out of reach.

\subsection{Our results}

In this paper we significantly improve the situation for both document exchange and binary insdel codes. Our new constructions of explicit insdel codes are actually almost optimal for a wide range of error parameters $k$. First, we have the following theorem which gives an improved deterministic document exchange protocol.

\begin{theorem}\label{thm1}
There exists a single round deterministic protocol for document exchange with communication complexity (sketch length) $O(k \log^2 \frac{n}{k}) $, time complexity $\poly(n)$, where $n$ is the length of the string and $k$
is the edit distance upper bound.
\end{theorem}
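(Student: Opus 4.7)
The target bound $O(k \log^2(n/k))$ naturally decomposes as $O(\log(n/k))$ levels of refinement, each of cost $O(k \log(n/k))$. I would therefore organize Alice's string $x$ as a balanced binary tree whose root is $x$ itself, of depth $d = O(\log(n/k))$, with leaves of length $\Theta(n/k)$. A routine counting argument shows that at every level of this tree, at most $O(k)$ nodes are ``disturbed'' by the $k$ insertions/deletions, in the sense that the substring at that node has no exact copy at the shifted position in $y$ that the correct alignment would dictate; every other node matches $y$ at some bounded offset determined by edits to the left.

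At each level $i$, Alice sends two items. First, a Reed--Solomon-type syndrome of $O(k)$ symbols over the sequence of hash labels of the level-$i$ nodes, where each label is a short pseudorandom fingerprint of length $O(\log(n/k))$. Second, the actual contents of the $O(k)$ disturbed nodes at level $i$, again packed via a syndrome so that only $O(k \log(n/k))$ bits are spent. Bob decodes level by level: having already fixed Alice's substrings down to level $i-1$, he uses $y$ together with a sliding-window evaluation of the hash function to produce candidate labels for every level-$i$ node; by the design of the hash, the resulting sequence differs from Alice's hash sequence in at most $O(k)$ coordinates, which the Reed--Solomon syndrome repairs. Combined with the contents of the disturbed nodes, Bob thereby recovers the full level-$i$ description of $x$. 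Summing, the sketch has length $O(\log(n/k)) \cdot O(k\log(n/k)) = O(k \log^2(n/k))$, and every step is $\poly(n)$ time via standard Reed--Solomon decoding and brute-force window hashing.

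The hard part, and the place where genuinely new ideas are needed, is the deterministic construction of the hash functions. In the randomized setting Karp--Rabin style fingerprints suffice with high probability, but derandomization is subtle because Bob's sliding-window strategy must distinguish each true substring from every nearby shift of it in $y$, uniformly over all possible inputs. I would introduce hash families with the two properties the paper calls $\eps$-\emph{self matching} (no short substring of a fixed string collides with a nearby shift of itself under the hash) and $\eps$-\emph{synchronization} (consistent identification of block boundaries across an unknown offset induced by prior edits), and build them explicitly from small-bias spaces or pseudorandom generators with output length $O(\log(n/k))$. Once such hash functions are in hand, the remainder of the argument is a careful bookkeeping along the tree: propagate Bob's ``base alignment'' from one level to the next, verify the $O(k)$ disturbed-node bound, and plug into standard Reed--Solomon encoding and decoding.
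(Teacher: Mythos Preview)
Your proposal has the right overall shape---$O(\log(n/k))$ levels each costing $O(k\log(n/k))$ bits, with deterministic short-output hashes derandomized via small-bias or almost $\kappa$-wise independent spaces---but there is a genuine gap in the per-level budget and a structural confusion in the tree.

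First, the tree as you describe it is inverted. If the root is all of $x$ and the leaves have length $\Theta(n/k)$, the depth is $\log k$, not $\log(n/k)$; the correct picture (and the one the paper uses) starts already at $\Theta(k)$ blocks of size $\Theta(n/k)$ and refines down to blocks of size $\Theta(\log(n/k))$, giving $O(\log(n/k))$ levels. Second, and more seriously, your ``second item'' at each level---sending the actual contents of the $O(k)$ disturbed nodes ``packed via a syndrome''---does not fit in $O(k\log(n/k))$ bits. A Reed--Solomon (or AG) syndrome that corrects $O(k)$ symbol errors costs $\Theta(k)$ symbols, and here each symbol is an entire block of length $b_i$; at the top level $b_i=\Theta(n/k)$ and the redundancy is $\Theta(n)$. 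Moreover, Alice does not know which blocks are disturbed, so she cannot send them selectively. The paper's fix is that no block contents are transmitted at intermediate levels at all: Alice sends only the redundancy of the \emph{hash values} (each $O(\log(n/k))$ bits) at every level, and Bob maintains a partially-correct $\tilde{x}$ with at most $O(k)$ wrong blocks; content redundancy is sent only once, at the final level, when blocks have shrunk to $O(\log(n/k))$ bits. For Bob's recovery step, note also that simple per-block hash comparison is not enough: Bob must compute a \emph{maximum monotone matching} (via dynamic programming) between the recovered hash sequence and sliding windows of $y$, because the $\eps$-self-matching property only bounds the number of bad pairs in monotone self-matchings. Finally, $\eps$-synchronization hash functions are not used for this theorem; they appear only in the $O(k\log n)$ insdel code construction. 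Here $\eps$-self-matching hashes with $\eps=k/n$ suffice, and the key point you should make explicit is why output length $O(\log(1/\eps))=O(\log(n/k))$ is enough: you relax from ``no collisions'' to ``at most $\eps n=k$ bad pairs in any monotone self-matching,'' and show via a halving argument that this reduces to ruling out short completely-wrong matchings, which an almost $\kappa$-wise independent generator on $O(\log n)$ seed bits handles.
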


Note that this theorem significantly improves the sketch size of the deterministic protocol in \cite{Belazzougui2015EfficientDS}, which is $O(k^2 + k \log^2 n)$. In particular, our protocol is interesting for $k$ up to $\Omega(n)$ while the protocol in \cite{Belazzougui2015EfficientDS} is interesting only for $k < \sqrt{n}$.

We can use this theorem to get improved binary insdel codes that can correct $\eps$ fraction of errors.

\begin{theorem} \label{thm2}
There exists a constant $0<\alpha<1$ such that for any  $0<\eps \leq \alpha$ there exists an explicit family of binary error correcting codes with codeword length $n$ and message length $m$, that can correct up to $k=\eps n$ edit errors with rate $m/n=1- O(\eps \log^2 \frac{1}{\eps})$. 
\end{theorem}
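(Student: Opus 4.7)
The plan is to derive Theorem~\ref{thm2} from Theorem~\ref{thm1} via the natural ``sketch-and-attach'' reduction from document exchange to insdel codes. Given a message $x$ of length $m$, I would set $k = \eps n$, compute a sketch $s$ of $x$ using the protocol of Theorem~\ref{thm1} so that $|s| = O(k \log^2(n/k)) = O(\eps n \log^2(1/\eps))$, and then encode $s$ with an asymptotically good binary insdel code $\Enc$ of constant rate $\rho$ and constant error-correcting fraction $\beta$ (for instance the Schulman--Zuckerman code~\cite{796406}), giving $|\Enc(s)| = O(|s|)$. Define the codeword $C(x) = x \circ \Enc(s)$; it has length $n = m + |\Enc(s)|$ and rate $m/n = 1 - O(\eps \log^2(1/\eps))$, matching the statement of the theorem.

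For decoding, given a received word $y$ with $\ED(y, C(x)) \le k$, I would try all $O(k)$ potential split positions $j \in [m-k,m+k]$. For each split $y = y_1 \circ y_2$ with $|y_1| = j$, I would run the insdel decoder $\Dec$ of $\Enc$ on $y_2$ to recover a candidate sketch $s'$, then apply the recovery algorithm $\Rec$ of Theorem~\ref{thm1} to $(y_1, s')$ to obtain a candidate message $x'$, and finally output any $x'$ passing the verification $\ED(C(x'), y) \le k$. At the split induced by an optimal alignment of $y$ with $C(x)$, correctness is immediate: provided $\eps$ is small enough that $\log^2(1/\eps) \ge 1/(\beta\rho)$, we have $k \le \beta |\Enc(s)|$, so $y_2$ lies within the decoding radius of $\Enc$ and $\Dec$ returns $s$; then $y_1$ is within edit distance $k$ of $x$, so $\Rec(y_1, s) = x$ by Theorem~\ref{thm1}; and the verification trivially holds.

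The main obstacle is unique decodability, equivalently showing that $C$ has minimum edit distance strictly greater than $2k$. I would argue by contradiction: suppose $\ED(C(x_1), C(x_2)) \le 2k$ with $x_1 \ne x_2$, fix an optimal alignment between the two codewords, and consider the position $p$ in $C(x_2)$ to which the internal boundary at position $m$ in $C(x_1)$ is mapped. In the aligned case $p = m$, restricting the alignment to each half gives $\ED(x_1, x_2) + \ED(\Enc(s_1), \Enc(s_2)) \le 2k$; the second summand lies below the minimum distance $2\beta|\Enc(s)|$ of the insdel code once $\eps$ is small, forcing $s_1 = s_2$. Taking a midpoint $w$ of an optimal edit transcript between $x_1$ and $x_2$ gives $\ED(w, x_1), \ED(w, x_2) \le k$; applying Theorem~\ref{thm1} twice with the common sketch $s_1 = s_2$ forces $\Rec(w, s_1) = x_1$ and simultaneously $\Rec(w, s_1) = x_2$, hence $x_1 = x_2$. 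The misaligned case $p \ne m$ requires a more delicate charge of the $2k$ budget across the shift $|p-m|$, since a length-$|p-m|$ portion of one codeword's message part is forced to align with a portion of the other codeword's $\Enc(s)$ part; careful accounting of the edits on each side of the boundary, combined again with the large distance of $\Enc$, ultimately reduces the situation to the aligned case. Once the case analysis is in place, both the encoder and the $O(k)$-branch decoder are straightforwardly seen to run in $\poly(n)$ time, completing the proof.
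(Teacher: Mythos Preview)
Your encoding $C(x)=x\circ\Enc(s)$ is exactly what the paper does (Theorem~\ref{codeFromSk}), but your decoding is substantially more complicated than necessary, and the part of your argument that carries the real weight---the misaligned case $p\neq m$ of the minimum-distance claim---is only sketched.

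The paper sidesteps the entire minimum-distance discussion with two small tweaks: it computes the sketch $s$ for $3k$ (not $k$) edit errors, and it takes the inner Schulman--Zuckerman code to correct $4k$ (not $k$) errors. Decoding then uses a single \emph{fixed} pair of overlapping windows: run the inner decoder on $y[m{-}k{+}1,\,|y|]$, which is within edit distance $4k$ of $\Enc(s)$ (Lemma~\ref{concatenationlem}), to recover $s$; then run the document-exchange recovery on $y[1,\,m{+}k]$, which is within edit distance $3k$ of $x$, using $s$. There is no enumeration of split points, no verification step, and no need to argue about the minimum edit distance of $C$ at all.

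Your route can be completed, but as written it has a gap. In the misaligned case, if the alignment uses $a$ edits before the boundary and $b$ after (so $a+b\le 2k$), the length constraints force $|m-p|\le a$ and $|m-p|\le b$; two applications of the triangle inequality then give $\ED(x_1,x_2)\le a+|m-p|\le 2k$ and $\ED(\Enc(s_1),\Enc(s_2))\le b+|m-p|\le 2k$, which indeed reduces to your aligned case. You should write this out rather than defer it to ``careful accounting.'' Even once filled in, the argument---and the $O(k)$-branch decoder with verification---is strictly more work than the paper's two-line decoding, and buys nothing extra for the theorem as stated.
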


Note that the rate of the code is $1- O(\eps \log^2 \frac{1}{\eps})=1-\widetilde{O}(\eps)$, which is optimal up to an extra $\log(\frac{1}{\eps})$ factor. This significantly improves the rate of $1-\widetilde{O}(\sqrt{\eps})$ in \cite{7835185}, \cite{7541373}.

For the general case of $k$ errors, the document exchange protocol also gives an insdel code.

\begin{theorem} \label{thm3}
For any $n, k \in \mathsf{N}$ with $k \leq n/4$, there exists an explicit binary error correcting code with message length $n$, codeword length $n+O(k \log^2 \frac{n}{k})$ that can correct up to $k$ edit errors. 

\end{theorem}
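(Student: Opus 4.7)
The plan is to compose the deterministic document exchange protocol of Theorem~\ref{thm1} with a short explicit asymptotically good binary insdel code to build the desired error-correcting code. I would take for the inner code $C_{\mathrm{in}}$ the Schulman--Zuckerman construction~\cite{796406}: an explicit binary insdel code with constant rate and constant error-correction fraction that encodes $m$-bit messages into codewords of length $O(m)$ while correcting $\Omega(m)$ edit errors.

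The construction is as follows. Given a message $x \in \bit^n$, first compute the sketch $s$ of length $O(k \log^2 \frac{n}{k})$ produced by the document exchange protocol of Theorem~\ref{thm1}, invoked with error parameter $2k$ rather than $k$ (this only changes hidden constants in the sketch length). Next, pad $s$ to a length linear in $k$ if necessary and encode it with $C_{\mathrm{in}}$, so that $C_{\mathrm{in}}(s)$ has length $L = O(k \log^2 \frac{n}{k})$ and $C_{\mathrm{in}}$ corrects at least $2k$ edit errors inside that block. The final codeword is $\Enc(x) = x \,\|\, C_{\mathrm{in}}(s)$, of total length $n + O(k \log^2 \frac{n}{k})$ as claimed.

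Given a received word $y$ at edit distance at most $k$ from $\Enc(x)$, the decoder splits $y$ at position $|y|-L$, decodes the length-$L$ suffix via $C_{\mathrm{in}}$ to obtain $s$, and then runs the document exchange decoder of Theorem~\ref{thm1} on the prefix together with $s$ to recover $x$. The main obstacle is that the decoder cannot see the true boundary between the corrupted $x$-part and the corrupted sketch-part of $y$, so one must verify that this naive split is close enough to the truth. I would argue this by fixing an optimal edit alignment of $\Enc(x)$ to $y$ and letting $m$ be the image in $y$ of the boundary between $x$ and $C_{\mathrm{in}}(s)$. Then $\ED(x, y[1..m]) + \ED(C_{\mathrm{in}}(s), y[m{+}1..|y|]) \le k$, and since edit distance dominates length difference we also have $\bigl||y|-m-L\bigr| \le k$. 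Hence the decoder's split point $|y|-L$ differs from the correct boundary $m$ by at most $k$, and a triangle-inequality step shows that each of the two pieces of $y$ lies within edit distance $2k$ of its intended target. Since both sub-decoders are configured to handle $2k$ errors, both succeed, and $x$ is correctly recovered in $\poly(n)$ time.
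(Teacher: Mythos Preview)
Your proposal is correct and follows essentially the same approach as the paper: concatenate the message with a Schulman--Zuckerman encoding of the document-exchange sketch, then decode by first recovering the sketch from a suffix of the received word and running the document-exchange decoder on a prefix. The only cosmetic differences are your choice of split point (you cut at $|y|-L$, whereas the paper takes the overlapping pieces $y[1,n+k]$ and $y[n{+}1{-}k,|y|]$) and the resulting constants ($2k$ for each sub-decoder versus the paper's $3k$ and $4k$).
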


When $k$ is small, e.g., $k=n^{\alpha}$ for some constant $\alpha<1$, the above theorem gives  $O(k \log^2 n)$ redundant bits. Our next theorem shows that we can do better, and in fact we can achieve redundancy $O(k \log n)$, which is asymptotically optimal for small $k$. 

\begin{theorem}\label{thm4}
For any $n, k \in \mathsf{N}$, there exists an explicit binary error correcting code with message length $n$, codeword length $n+O(k \log n)$ that can correct up to $k$ edit errors. 

\end{theorem}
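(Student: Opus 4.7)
The plan is to shave an extra $\log(n/k)$ factor off the redundancy of Theorem~3 by a concatenated construction.  I restrict to the nontrivial regime $k \le n/\log n$, since otherwise $k\log n = \Omega(n)$ and the claim follows from a trivial repetition code.  First, partition the message $x$ into $N = n/B$ blocks of length $B = \Theta(\log n)$, view each block as a symbol of $\mathrm{GF}(2^{O(\log n)})$, and append $O(k)$ Reed--Solomon parity symbols.  This contributes $O(k\log n)$ redundant bits and lets us correct $\Omega(k)$ block-level Hamming errors or erasures once we have correctly aligned the received stream into blocks.

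The remaining task is therefore to convert the $k$ bit-level insdel events into at most $O(k)$ block-level errors for the outer Reed--Solomon decoder, using only $O(k\log n)$ additional bits.  For this I attach an $\eps$-synchronization hash sketch $\sigma$, built from the hash-function machinery introduced earlier in the paper, that from the corrupted received string pins down the boundaries of all but $O(k)$ of the $N$ blocks.  The key gain over Theorem~3 is that the $\eps$-self-matching property lets $\sigma$ target only the $O(k)$ boundaries that are actually shifted by some insdel operation, rather than hashing every position; this cuts the sketch length from $O(k\log^2(n/k))$ down to $O(k\log n)$.  Blocks whose boundaries cannot be recovered are declared erasures and absorbed by the outer Reed--Solomon decoder.

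Finally, $\sigma$ itself must survive the channel.  I encode it using the Theorem~3 code with parameter $k$; since $|\sigma| = O(k\log n)$, the added redundancy is $O\!\left(k\log^2\tfrac{|\sigma|}{k}\right) = O(k\log^2\log n) = o(k\log n)$, so the total redundancy remains $O(k\log n)$.  Decoding then proceeds in three stages: first extract and decode the auxiliary codeword to obtain $\sigma$; then use $\sigma$ to realign the outer codeword into its $N$ blocks (marking unrecoverable ones as erasures); finally run the standard Reed--Solomon decoder on the resulting block-level instance.

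The main obstacle is engineering the synchronization layer so that $O(k\log n)$ bits provably localize $N - O(k)$ block boundaries against any adversarial pattern of $k$ insertions and deletions.  I would establish this by invoking the $\eps$-self-matching property of the hash family to bound the number of spurious alignments across all shift patterns, and by choosing the block size $B = \Theta(\log n)$ large enough that each individual insdel operation can only corrupt $O(1)$ anchor positions, so that the total number of block-level errors exposed to the outer code stays within its $\Omega(k)$ correction budget.
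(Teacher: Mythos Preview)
Your proposal has a genuine gap at the central step. You assert that an $\eps$-synchronization hash sketch $\sigma$ of size $O(k\log n)$ can be built ``from the hash-function machinery introduced earlier in the paper'' and that ``the $\eps$-self-matching property lets $\sigma$ target only the $O(k)$ boundaries that are actually shifted.'' But the only machinery available prior to this theorem is the $\eps$-self-matching hash functions of Section~3, and those do \emph{not} work this way: they hash \emph{every} block in each of $O(\log(n/k))$ levels, and the savings over the naive IMS protocol come from shrinking the hash output to $O(\log(n/k))$ bits, not from hashing fewer positions. There is no mechanism, in the paper or in your outline, that lets the sender identify which $O(k)$ boundaries will be shifted without sending per-block information; the sender does not know where the adversary will strike. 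So as written, your $\sigma$ is exactly the $O(k\log^2(n/k))$ sketch of Theorem~3, and you have not shaved the extra factor.

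The paper's actual argument is substantially more elaborate and does not resemble a Reed--Solomon outer code with a realignment sketch. It first XORs the message with a pseudorandom mask $g(n,u)$, found by exhaustive search over an $O(\log n)$-bit seed, to force the masked string to be $B$-distinct and to satisfy two pattern-occurrence properties (Theorem~5.1). It then runs a two-stage protocol on the masked string: Stage~I uses a fixed pattern $p=1\circ 0^{s-1}$ to content-define blocks and transmits an $O(k\log n)$-bit redundancy of a set-valued fingerprint of the block structure; Stage~II runs a \emph{constant} number of levels (not $O(\log(n/k))$) of $\eps$-\emph{synchronization} hash functions --- a strictly stronger object than self-matching hashes, whose construction crucially exploits the $B$-distinct property --- with output range $O(\log^{0.5} n)$ and with a packing trick so that each level costs $O(k\log n)$ bits. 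None of these ingredients (the mask, the content-defined splitting, the synchronization hashes, the constant-level structure) appears in your outline, and each is needed to get the redundancy down from $O(k\log^2(n/k))$ to $O(k\log n)$.
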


Note that in this theorem, the number of redundant bits needed is $O(k\log n)$, which is asymptotically optimal for $k \leq n^{1-\alpha}$, any constant $0<\alpha<1$. This significantly improves the construction in \cite{8022906}, which only works for constant $k$ and has redundancy $O(k^2 \log k \log n)$, and the construction in \cite{Belazzougui2015EfficientDS}, which has redundancy $O(k^2 + k \log^2 n)$. In fact, this is the first explicit construction of binary insdel codes that have optimal redundancy for a wide range of error parameters $k$, and this brings our understanding of binary insdel codes much closer to that of standard binary error correcting codes.

\begin{remark}
In all our insdel codes, both the encoding function and the decoding function run in time $\poly(n)$. 
\end{remark}

\paragraph{Independent work.} In a recent independent work \cite{haeupler2018optimal}, Haeupler also studied the document exchange problem and binary error correcting codes for edit errors. Specifically, he also obtained a deterministic document exchange protocol with sketch size $O(k \log^2 \frac{n}{k}) $, which leads to an error correcting code with the same redundancy, thus matching our theorems~\ref{thm1}, ~\ref{thm2} and ~\ref{thm3}. He further gave a randomized document exchange protocol that has optimal sketch size $O(k \log \frac{n}{k}) $. However, for small $k$, our Theorem~\ref{thm4} gives a much better error correcting code. Our code is optimal for $k \leq n^{1-\alpha}$, any constant $0<\alpha<1$, and thus better than the code given in \cite{haeupler2018optimal}. 

\subsection{Overview of the techniques}
In this section we provide a high level overview of the ideas and techniques used in our constructions. We start with the document exchange protocol.

\paragraph{Document exchange.} Our starting point is the randomized protocol by Irmak et al. \cite{irmak2005improved}, which we refer to as the IMS protocol. The protocol is one round, but Alice's algorithm to generate the message proceeds in $L=O(\log(\frac{n}{k}))$ levels. In each level Alice computes some sketch about her string $x$, and her final message to Bob is the concatenation of the sketches. After receiving the message, Bob's algorithm also proceeds in $h$ levels, where in each level he uses the corresponding sketch to recover part of $x$. 

More specifically, in the first level Alice divides her string into $\Theta(k)$ blocks where each block has size $O(\frac{n}{k})$, and in each subsequent level every block is divided evenly into two blocks, until the final block size becomes $O(\log n)$. This takes $O(\log(\frac{n}{k}))$ levels. Using shared randomness, in each level Alice picks a set of random hash functions, one for each block which outputs $O(\log n)$ bits, and computes the hash values. In the first level, Alice's sketch is just the concatenation of the $O(k)$ hash values. In all subsequent levels, Alice obtains the sketch in this level by computing the redundancy of a systematic error correcting code (e.g., the Reed-Solomon code) that can correct $O(k)$ erasures and symbol corruptions, where each symbol has $O(\log n)$ bits (the hash value). Note that this sketch has size $O(k \log n)$ and thus the total sketch size is $O(k \log n \log(\frac{n}{k}))$.

On Bob's side, he always maintains a string $\tilde{x}$ which is the partially corrected version of $x$. Initially $\tilde{x}$ is the empty string, and in each level Bob tries to use his string $y$ to fill $\tilde{x}$. This is done as follows. In each level Bob first tries to recover all the hash values of Alice in this level (notice that the hash values of the first level are directly sent to Bob). Suppose Bob has successfully recovered all the hash values, Bob then tries to match every block of Alice's string in this level with one substring of the same length in his string $y$, by finding such a substring with the same hash value. We say such a match is bad if the substring Bob finds is not the same as Alice's block (i.e., a hash collision). The key idea here is that if the hash functions output $O(\log n)$ bits, and they are chosen independently randomly, then with high probability a bad match only happens if the substring Bob finds contains at least one edit error. Moreover, Bob can find at least $l_i-k$ matches, where $l_i$ is the number of blocks in the current level. Bob then uses the matched substrings to fill the corresponding blocks in $\tilde{x}$, and leaves the unmatched blocks blank. From the above discussion, one can see that there are at most $k$ unmatched blocks and at most $k$ mismatched blocks. Therefore in the next level when both parties divide every block evenly into two blocks, $x$ and $\tilde{x}$ have at most $4k$ different blocks. This implies that there are also at most $4k$ different hash values in the next level, and hence Bob can correctly recover all the hash values of Alice using the redundancy of the error correcting code.  

Our deterministic protocol for document exchange is a derandomized version of the IMS protocol, with several modifications. First, we observe that the IMS protocol as we presented above, can already be derandomzied. This is because that to ensure a bad match only happens if the substring Bob finds contains at least one edit error, we in fact just need to ensure that under any hash function, no \emph{block} of $x$ can have a collision with a \emph{substring} of the same length in $x$ itself. We emphasize one subtle point here: Alice's hash function is applied to a block of her string $x$, while when trying to fill $\tilde{x}$, Bob actually checks every substring of the string $y$. Therefore we need to consider hash collisions between blocks of $x$ and substrings of $x$. If the hash functions are chosen independently uniformly, then such a collision happens with probability $1/\poly(n)$, and thus by a union bound with high probability no collision happens. However, notice that if we write out the outputs of all hash functions on all inputs, then any collision is only concerned with two outputs which consists of $O(\log n)$ bits. Thus it's enough to use $O(\log n)$-wise independence to generate these outputs. To further save the random bits used, we can instead use an almost $\kappa$-wise independent sample space with $\kappa=O(\log n)$ and error $\eps=1/\poly(n)$. Using for example the construction by Alon et. al. \cite{alon1992simple}, this results in a total of $O(\log n)$ random bits (the seed), and thus Alice can exhaustively search for a fixed set of hash functions in polynomial time. Now in each level, Alice's sketch will also include the specific seed that is used to generate the hash functions, which has $O(\log n)$ bits. Note this only adds $O(\log n \log \frac{n}{k})$ to the final sketch size. Bob's algorithm is essentially the same, except now in each level he needs to use the seed to compute the hash functions. 

The above construction gives a deterministic document exchange protocol with sketch size $O(k \log n \log \frac{n}{k})$, but our goal is to further improve this to $O(k \log^2 \frac{n}{k})$. The key idea here is to use a relaxed version of hash functions with nice ``self matching" properties. To motivate our construction, first observe that in each level, when Bob tries to match every block of Alice's string with one substring of the same length in his string $y$, it is not only true that Bob can find a matching of size at least $l_i-k$ (where $l_i$ is the number of blocks in this level), but also true that Bob can find a \emph{monotone} matching of at least this size. A monotone matching here means a matching that does not have edges crossing each other. In this monotone matching, there are at most $k$ bad matches caused by edit errors, and thus there exists a \emph{self matching} between $x$ and itself with size at least $l_i-2k$. In the previous construction, we in fact ensure that all these $l_i-2k$ matches are correct. To achieve better parameters, we instead relax this condition and only require that at most $k$ of these self matches are bad. Note if this is true then the total number of different blocks between $x$ and $\tilde{x}$ is still $O(k)$ and we can again use an error correcting code to send the redundancy of hash values in the next level. 

This relaxation motivates us to introduce \emph{$\eps$-self matching hash functions}, which is similar in spirit to $\eps$-self matching strings introduced in \cite{haeupler2017synchronization}. Formally, we have the following definitions.

\begin{definition}(monotone matching)
For every $n, n', t,  p, q  \in \mathbb{N}, q\leq p$, any hash functions $h_1, h_2, \ldots, h_{n'}$ where $\forall i\in [n'], h_i: \{0,1\}^{p } \rightarrow \{0,1\}^{q  }$, given two strings $x\in (\{0,1\}^{p})^{n'}$ and $y \in \{0,1\}^{n}$, a monotone matching of size $t$ between $x, y$ under hash functions $h_1, \ldots, h_{n'}$  is a sequence of pairs of indices $w = ((i_1, j_1), (i_2, j_2), \ldots, (i_t, j_t)) \in ([n'] \times [n])^t$ s.t. 
$i_1 < i_2 < \cdots < i_t$,   $j_{1}+p-1 < j_{2}, \ldots, j_{t-1}+p-1 < j_{t}, j_{t}+p-1 \leq n$ and $ \forall l\in [t], h_{i_l}(x[i_l]) = h_{i_l}(y[j_l,  j_l+p-1] )$. When $h_1, \ldots, h_{n'}$ are clear from the context, we simply say that $w$ is a monotone matching between $x, y$. 

For $l\in [t]$, if $x[i_l] = y[j_l,  j_l+p-1] $, we say $(i_l, j_l)$ is a good match, otherwise we say it is a bad match. We say $w$ is a correct matching if all matches in $w$ are good. We say $w$ is a completely wrong matching is all matches in $w$ are bad.

If $x$ and $y$ are the same in terms of their binary expression, then $w$ is called a self-matching.
\end{definition}

For simplicity, in the rest of the paper, when we say a matching $w$ we always mean a monotone matching.

\begin{definition}($\eps$-self matching hash function)
Let $p, q, n, n' \in \mathbb{N}$ be such that $n=n' p$. For any $0 < \eps < 1$ and $x \in (\{0,1\}^{p})^{n'}$, we say that a sequence of hash functions $h_1, h_2, \ldots, h_{n'}$ where $\forall i\in [n'], h_i: \{0,1\}^{p  } \rightarrow \{0,1\}^{q  }$ is a sequence of $\eps$-self matching hash functions with respect to $x$, if any matching between $x$ and $y \in \{0,1\}^{n}$ under $h_1, h_2, \ldots, h_n$, where $y$ is the binary expression of $x$, has at most $\eps n$ bad matches.
\end{definition}

The advantage of using $\eps$-self matching hash functions is that the output range of the hash functions can be reduced. Specifically, we can show that a sequence of $\eps$-self matching hash functions exists with output range $\poly(1/\eps)$ (i.e., $O(\log(1/\eps))$ bits) when the block size is at least $c \log(1/\eps)$ bits for some constant $c>1$. Furthermore, we can generate such a sequence of $\eps$-self matching hash functions with high probability by again using an almost $\kappa$-wise independent sample space with $\kappa=O(k b_i)$, where $b_i$ is the current block length, and error $\eps=1/\poly(n)$. The idea is that in a monotone matching with $\eps n$ bad matches, we can divide the matching gradually into small intervals such that at least one small interval will have the same fraction of bad matches. Thus in order to ensure the $\eps$-self matching property we just need to make sure every small interval does not have more than $\eps$ fraction of bad matches, and this is enough by using the almost $\kappa$-wise independent sample space. 

As discussed above, we need to ensure that there are at most $k$ bad matches in a self matching, thus we set $\eps=\frac{k}{n}$. Consequently now the output of the hash functions only has $O(\log(n/k))$ bits instead of $O(\log n)$ bits. Now in each level, in order to get optimal sketch size, instead of using the Reed-Solomon code we will be using an Algebraic Geometric code \cite{hoholdt1998algebraic} which has redundancy $O(k \log \frac{n}{k})$. The almost $\kappa$-wise independent sample space in this case again uses only $O(\log n)$ random bits, so in each level Alice can exhaustively search the correct hash functions in polynomial time and include the $O(\log n)$ bits of description in the sketch. This gives Alice's algorithm with total sketch size $O(k \log^2 \frac{n}{k})$. On Bob's side, we need another modification: in each level after Bob recovers all the hash values, instead of simply searching for a match for every block, Bob runs a dynamic programming to find the longest monotone matching between his string $y$ and the sequence of hash values. He then fills the blocks of $\tilde{x}$ using the corresponding substrings of matched blocks. 

\paragraph{Error correcting codes.} Our deterministic document exchange protocol can be used to directly give an insdel code for $k$ edit errors. The idea is that to encode an $n$-bit message $x$, we can first compute a sketch of $x$ with size $r$, and then encode the small sketch using an insdel code against $4k$ edit errors. Since the sketch size is larger than $k$, we can use an asymptotically good code such as the one by Schulman and Zuckerman \cite{796406}, which results in an encoding size of $n_0=O(r)$. The actual encoding of the message is then the original message concatenated with the encoding of the sketch. 

To decode, we can first obtain the sketch by looking at the last $n_0-k$ bits of the received string. The edit distance between these bits and the encoding of the sketch is at most $4k$, and thus we can get the correct sketch from these bits. Now we look at the bits of the received string from the beginning to index $n+k$. The edit distance between these bits and $x$ is at most $3k$, thus if $r$ is a sketch for $3k$ edit errors then we will be able to recover $x$ by using $r$. This gives our first insdel code with redundancy $O(k \log^2 \frac{n}{k})$.

We now describe our second insdel code, which has redundancy $O(k \log n)$ and uses many more interesting ideas. The basic idea here is again to compute a sketch of the message and encode the sketch, as we described above. However, we are not able to improve the sketch size of $O(k \log^2 \frac{n}{k})$ in general. Instead, our first observation here is that if the $n$-bit message is a \emph{uniform random} string, then we can actually do better. Thus, we will first describe how to come up with a sketch of size $O(k \log n)$ for a uniform random string, and then use this to get an encoding for any given $n$-bit string.

To warm up, we first explain a simple algorithm to compute a sketch of size $O(k \log^2 n)$ in this case. A uniform random string has many nice properties. In particular, for some $B=O(\log n)$, one can show that with high probability, every length $B$ substring in a uniform random string is \emph{distinct}. If this property holds (which we refer to as the \emph{$B$-distinct property}), then Alice can compute a sketch as follows. First create a vector of length $2^B=\poly(n)$, where in each entry indexed by the string $s \in \bit^B$, Alice looks at the substring $s$ in $x$ and record the bit left to it and the bit right to it. We need three special symbols, one to indicate the case of no left bit, one to indicate the case of no right bit, and one to indicate the case that there is no such string $s$ in $x$. Thus it is enough to use an alphabet of size $8$ for each entry. Similarly Bob can create a vector $V'$ from his string $y$. Notice that the entries in $V$ have no collisions since we assume that Alice's string is $B$-distinct, while some entries in $V'$ may have collisions due to edit errors, in which case Bob just treats it as there is no corresponding substring in $y$. One can then show that $V$ and $V'$ differ in at most $O(k B)=O(k \log n)$ entries. Now Alice can use the Reed-Solomon code to send a redundancy of size $O(k \log^2 n)$, and Bob can recover the correct $V$. Bob can then recover the string $x$ by picking an entry in $V$ and growing the string on both ends gradually until obtaining the full string $x$.

We now show how to reduce the sketch size. In the above approach,  $V$ and $V'$ can differ in $O(k \log n)$ entries since Alice is looking at \emph{every} substring of length $B$ in $x$. To improve this, instead we will have Alice first partition her string into several blocks, and then just look at the substrings corresponding to each block. Ideally, we want to make sure that each block has length at least $B$ so that again all blocks are distinct. Alice then creates the vector $V$ of length $2^B$ by using the $B$-prefix of each block as the index in $V$, and for each entry in $V$ Alice will record some information. Bob will do the same thing using his string $y$ to create another vector $V'$, and we will argue that $V$ and $V'$ do not differ much so Alice can send some redundancy information to Bob, and Bob can recover the correct $V$ based on $V'$ and the redundancy information. 

However, the partitions need to be done carefully. For example, we cannot just partition both strings sequentially into blocks of size some $T \geq B$, since if so then a single insertion/deletion at the beginning of $x$ could result in the case where all blocks of $x$ and $y$ are distinct. Instead, we will choose a specific string $p$ with length $s$, for some parameter $s$ that we will choose appropriately. We call this string $p$ a \emph{pattern}, and we will use this pattern to divide the blocks in $x$ and $y$. More specifically, in our construction we will simply choose $p=1 \circ 0^{s-1}$, the string with a $1$ followed by $s-1$ $0$'s. We use $p$ to divide the string $x$ as follows. Whenever $p$ appears as a substring in $x$, we call the index corresponding to the bit of $p$ in $x$ a \emph{p-split point}. The set of split points then naturally gives a partition of the string $x$ (and also $y$ into blocks). We note that the idea of using patterns and split points is also used in \cite{8022906}. However, there the construction uses $2k+1$ patterns and takes a majority vote, which is why the sketch has a $k^2$ factor. Here instead we use a single pattern, and thus we can avoid the $k^2$ factor.

We now describe how to choose the pattern length $s$. For the blocks of $x$ and $y$ obtained by the split points, we certainly do not want the block size to be too large. This is because if there are large blocks then an adversary can create errors in these blocks, and large blocks need longer sketches to recover from error. At the same time, we do not want the block size to be too small either. This is because if the block size is too small, then some of the blocks may actually be the same, while we would like to keep all blocks of $x$ to be distinct. In particular, we would like to keep the size of every block in $x$ to be at least $B$. If $x$ is a uniform random string, then the pattern $p$ appears with probability $2^{-s}$ and thus the expected distance between two consecutive appearances of $p$ is $2^s$. Moreover, one can show that with high probability any interval of length some $O(s 2^s \log n)$ contains a $p$-split point. We will call this \emph{property} 1. If this property holds then we can argue that every block of $x$ has length at most $O(s 2^s \log n)$.  

To ensure that each block is not too short, we simply look at a $p$-split point and the immediate next $p$-split point after it. If the distance between these two split points is less than $2^s/2$ then we just ignore the first $p$-split point. In other words, we change the process of dividing $x$ into blocks so that we will only use a $p$-split point if the next $p$-split point is at least $2^s/2$ away from it, and we call such a $p$-split point a good $p$-split point. We again show that if $x$ is uniform random, then with high probability every block of length some $O(2^s \log n)$ contains a good $p$-split point. We call this \emph{property} 2. Combined with the previous paragraph, we can now argue that with high probability every interval of length some $O(s 2^s \log n)+O(2^s \log n)=O(s 2^s \log n)$ will contain a $p$-split point that we will choose. By setting $s=\log \log n+O(1)$, we can ensure that every block of $x$ has length at least $B=O(\log n)$ and at most $O(s 2^s \log n)=\poly \log(n)$.

Now for each block of $x$, Alice creates an entry in $V$ indexed by the $B$-prefix of this block. The entry contains the length of this block and the $B$-prefix of the next block, which has total size $O(\log n)$. Similarly Bob also creates a vector $V'$. We show that our approach of choosing $p$-split points can ensure that $V$ and $V'$ differ in at most $O(k)$ entries, thus Alice can send a string with $O(k \log n)$ redundant bits to Bob (using the Reed-Solomon code) and Bob can recover the correct $V$. The structure of $V$ guarantees that Bob can learn the correct order of the $B$-prefix of all Alice's blocks, and their lengths. At this point Bob will again try to fill a string $\tilde{x}$, where for each of Alice's block Bob searches for a substring with the same $B$-prefix and the same length. We show that after this step, at most $O(k)$ blocks in $\tilde{x}$ are incorrectly filled or missing. This completes state 1 of the sketch.

We now move to stage 2, where Bob correctly recovers the at most $O(k)$ blocks in $\tilde{x}$ that are incorrectly filled or missing. One way to do this is by noticing that every block has size at most $\poly\log(n)$, thus we can use a deterministic IMS protocol as we described before, which will last $O(\log \log n)$ levels and thus have sketch size $O(k \log \frac{n}{k} \log \log n$) (since we start with block size $\poly\log(n)$). However, our goal is to do better and achieve sketch size $O(k \log n)$.

To achieve this, we modify the IMS protocol so that in stage 2, in each level we are not dividing every block into $2$ smaller blocks. Instead, we divide every block evenly into $O(\log^{0.4} n)$ smaller blocks. We continue this process until the final block size is $O(\log n)$, and thus this only takes $O(1)$ levels. In each level, we will do something similar to our deterministic document exchange protocol: Alice sends a description of a sequence of hash functions to Bob, together with some redundancy of the hash values. Bob recovers the correct hash values and uses a dynamic programming to find the longest monotone matching between substrings of $y$ and the hash values. Bob then tries to fill the blocks of $\tilde{x}$ by using the matched substrings. 

In order for the above approach to work, we need to ensure three things. First, Alice's description of the hash functions should be short, ideally only $O(\log n)$ bits. Second, the redundancy of the hash values only uses $O(k \log n)$ bits. Finally, after Bob recovers the hash values, the matching he finds contains at most $O(k)$ unmatched blocks and mismatched blocks. For the second issue, we design the hash functions so that the outputs only have $O(\log^{0.5} n)$ bits. One issue here is that if in some level there are at most $O(k)$ unmatched blocks and mismatched blocks, then in the next level after dividing there may be $O(k \log^{0.4} n)$ unmatched blocks and mismatched blocks. However we observe that these blocks are actually concentrated (i.e., they are smaller blocks in a larger block of the previous level). Thus we can pack all the hash values of the $O(\log^{0.4} n)$ smaller blocks together into a package, and the total size of the hash values is $O(\log^{0.4} n) \cdot O(\log^{0.5} n) = O(\log n)$. Now we can show that again there are at most $O(k)$ different packages between Alice's version and Bob's version, thus it is still enough to use $O(k \log n)$ bits of redundancy. 

The third issue and the first issue are actually related, and require new ideas. Specifically, we cannot use the $\eps$-self matching hash functions as we discussed earlier, since that would require the outputs of the hash functions to have $O(\log(1/\eps))=O(\log (n/k))$ bits, while we can only afford $O(\log^{0.5} n)$ bits. To solve this problem, we strengthen the notion of $\eps$-self matching hash functions and introduce \emph{$\eps$-synchronization hash functions}, similar in spirit to the notion of $\eps$-synchronization strings introduced in \cite{haeupler2017synchronization}. Specifically, we have the following definition.

\begin{definition}
Let $T, n' , B,  R \in \mathbf{N}$ be such that $T \geq B$, and $0 < \eps < 1$.  Let $x$ be a string of length $n = n' T$, and $x_T = (x[1, T], x[T+1, 2T], \dots, x[(n'-1)T+1, n])$ be a partition of $x$ into blocks of size $T$. 
	Let $\Phi = (\Phi[1], \Phi[2], \dots, \Phi[n'])$ be a sequence of functions,
	where for any $k \in [n']$, $\Phi[k]$ is a function from $\bit^B$ to $\{0, 1\}^R$.
	
	For a string $y$ of length $m$, and some indices $0 \le l_1 < r_1 \le n'$, $0 \le l_2 < r_2 \le m$,
	let $\mathsf{MATCH}_{\Phi}(x_T(l_1, r_1], y(l_2, r_2])$ denote the size of the maximum matching between $x_T(l_1, r_1]$
	and $y(l_2, \min(r_2+T, m+1))$ under $\Phi(l_1, r_1]$. We say $\Phi$ is a sequence of  $\eps$-synchronization hash functions with respect to $x$, if it satisfies the following properties:
	\begin{itemize}
		\item For any three integers $i, j, k$ where $i < Tj$ and $j < k$, denote $l_1 = k - j$ and $l_2 = Tj - i$.
		\begin{equation} 
		\mathsf{MATCH}_{\Phi}(x_T(j, k], x(i, Tj]) < \eps \left(
		l_1 + \frac{l_2}{T}
		\right) 
		\end{equation}
		
		\item For any three integers $i, j, k$ where $i < j$ and $k > T(j-1)+1$, denote $l_1 = j - i$ and $l_2 = k - T(j-1) - 1$.
		\begin{equation}
		\mathsf{MATCH}_{\Phi}(x_T(i, j], x(T(j-1)+1, k]) < \eps \left(l_1 + \frac{l_2}{T}
		\right)
		\end{equation}
	\end{itemize}
\end{definition}

It can be seen that $\eps$-synchronization hash functions are indeed stronger than $\eps$-self matching hash functions, and we show that even a sequence of $\eps$-synchronization hash functions with $\eps=\Omega(1)$ is enough to guarantee that there are at most $O(k)$ unmatched blocks and mismatched blocks (in fact, the number of unmatched blocks and mismatched blocks is bounded by $\frac{1+2\eps}{1-2\eps}k$). A similar property for $\eps$-synchronization strings is also shown in \cite{haeupler2017synchronization}.

Our construction of the $\eps$-synchronization hash functions actually consists of two parts. In the first part, Alice generates a sequence of hash functions that ensures the $\eps$-synchronization property holds for relatively large intervals (i.e., when $l_1 + \frac{l_2}{T}$ is large). We show that this sequence of hash functions can again be generated by using an almost $\kappa$-wise independent sample space which uses $O(\log n)$ random bits (the argument is similar to that of $\eps$-self matching hash functions). Thus Alice can exhaustively search for a fixed set of hash functions in polynomial time, and send Bob the description using $O(\log n)$ bits. The output of these hash functions has $O(\log^{0.5} n)$ bits. To protect the small intervals, we compute a hash function for each block such that when the inputs are restricted to length $B$ substrings in a small interval, this function is \emph{injective}. Since all substrings of length $B$ are distinct in $x$, this ensures that the outputs of the same hash function within a small interval are also distinct, and thus the $\eps$-synchronization property also holds for small intervals. The output of these hash functions has $O(\log \log n)$ bits. We show that these functions can be constructed deterministically using an almost $\kappa$-wise independent sample space which uses $O(\log n \log n)$ random bits, and hence also has description size $O(\log n \log n)$. Moreover the set of functions computed by Alice and the set of functions computed by Bob have at most $O(k)$ differences (after packing each successive $\log^{0.6} n$ such functions together, which only needs $O(\log^{0.6} n\log n \log n)$ bits), and thus Alice can again send $O(k \log n)$ redundant bits to Bob and Bob can recover the correct set of hash functions. The final sequence of $\eps$-synchronization hash functions is then the combination of these two sets of functions, where the output has $O(\log^{0.5} n+O(\log \log n))=O(\log^{0.5} n)$ bits. This concludes our algorithm to generate a sketch of size $O(k \log n)$ for a uniform random string.

We now turn to solve the issue of assuming a uniform random string $x$. Put simply, our idea is that given any arbitrary string $x$, we first compute the XOR of $x$ and a \emph{pseudorandom} string $z$ (a mask), to turn $x$ into a pseudorandom string as well. We will use $O(\log n)$ random bits to generate $z$ and show that with high probability over the random bits used, the XOR of $x$ and $z$ satisfies the $B$-distinct property, as well as property 1 and property 2. For this purpose, we construct three pseudorandom generators (PRGs), one for each property. The $B$-distinct property can be ensured by again using an almost $\kappa$-wise independent sample space which uses $O(\log n)$ random bits. For property 1, we divide the string of length $n$ into blocks of length $T = O(s2^s \log n)$ and generate the same mask for every block. Within each block, we can view it as a sequence of $t=O(\log n)$ sub-blocks each of length  $s2^s$. For each sub-block, the test of whether it contains the pattern $p$ can be realized as a DNF with size $\poly \log n$, so we can use a PRG for DNF to fool this test with constant error, and this has length $\poly \log \log (n)$. We then use a random walk on a constant-degree expander graph of length $t$ to generate the full mask, which guarantees that the pattern occurs with probability $1-1/\poly(n)$. A union bound now shows that property 1 holds with high probability, and the PRG has seed length $O(\log n)$. For property 2, we can essentially do the same thing, i.e., divide the string of length $n$ into blocks of length $T = O(2^s \log n)$ and generate the same mask for every block. For each block, again we use a PRG for DNF together with a random walk on a constant-degree expander graph to get a PRG with seed length $O(\log n)$. Finally we take the XOR of the outputs of the three PRGs, and we show that with high probability all three properties hold for $x \oplus z$. 

Since the PRG has seed length $O(\log n)$, again we can exhaustively search for a fixed $z$ that works for the string $x$, and $z$ can be described by $O(\log n)$ bits. The encoding of $x$ is then $x \oplus z$, together with an encoding of the concatenation of the description of $z$ and the sketch. For decoding, one can first recover $x \oplus z$ and then recover $x$ by using the description of $z$ and the PRG. 

\subsection{More on $\eps$-synchronization hash functions}

Our definition and construction of $\eps$-synchronization hash functions turn out to have several tricky issues. First of all, there may be other possible definitions of such hash functions, but for our application the current definition is the most suitable one. Second, in our construction of the $\eps$-synchronization hash functions, we crucially use the fact that the string $x$ has the $B$-distinct property. This is because if $x$ does not have this property, then when we try to find a maximum monotone matching, it may be the case that every pair in the matching is illy matched (e.g., the strings corresponding to the pairs are in different positions but the strings themselves are the same) and this will cause a problem in our analysis. This problem may be solved by modifying the definition of $\eps$-synchronization hash functions, but it will potentially result in a larger output range of the hash functions (e.g., $\Omega(\log n)$ bits), which we cannot afford. 

Finally, our construction of the $\eps$-synchronization hash functions consists of two separate sets of hash functions, one for large intervals and one for small intervals. In the construction of hash functions for small intervals, we again crucially use the fact that he string $x$ has the $B$-distinct property, so that for each block both parties can compute a hash function and Alice can just send the redundancy. We note that here one cannot simply apply the (deterministic) Lov\'asz Local Lemma as in  \cite{haeupler2017synchronization}, \cite{HS17c}, \cite{CHLSW18}, since this will result in a very long description of the hash functions, and Alice cannot just send it to Bob.

\paragraph{Organization of this paper}
In Section \ref{sec:prelim} we introduce some notation and basic techniques used in this paper. In Section \ref{sec:determdocexc} we give the deterministic protocol for document exchange.\ In Section \ref{sec:randdocexc} we give a protocol for document exchange of a uniform random string.\ In Section \ref{sec:InsdelCode} we construct error correcting codes for edit errors using results from previous sections. Finally we conclude with some open problems in Section \ref{sec:open}.

\section{Preliminaries} \label{sec:prelim}
\subsection{Notation}
Let $\Sigma$ be an alphabet (which can also be a set of strings).
For a string $x\in \Sigma^*$,
\begin{enumerate}
\item $|x|$ denotes the length of the string.
\item $x[i,j]$ denotes the substring of $x$ from position $i$ to position $j$ (Both ends included).
\item $x[i]$ denotes the $i$-th symbol of $x$.
\item $x\circ x'$ denotes the concatenation of $x$ and some other string $x'\in \Sigma^*$.
\item $B$-prefix denotes the first $B$ symbols of $x$. (Usually used when $\Sigma = \{0, 1\}$.)
\item $x^N$ the concatenation of $N$ number of string $x$.
\end{enumerate}

We use $U_n$ to denote the uniform distribution on $\{0,1\}^n$.

\subsection{Edit distance and longest common subsequence}

\begin{definition}[Edit distance] For any two strings $x, x'\in\Sigma^n$, the edit distance $ED(x,x')$ is the minimum number of edit operations (insertions and deletions) required to transform $x$ into $x'$.
\end{definition}

\begin{definition}[Longest Common Subsequence] For any strings $x, x'$ over $\Sigma$, the longest common subsequence of $x$ and $x'$ is the longest pair of subsequences of $x$ and $x'$ that are equal as strings. $LCS(x,x')$ denotes the length of the longest common subsequence between $x$ and $x'$.
\end{definition}

Note that $ED(x,x') = |x|+|x'|-2LCS(x,x')$.

\subsection{Almost k-wise independence}
\begin{definition}[$\eps$-almost $\kappa$-wise independence in max norm \cite{alon1992simple}]
Random variables $X_1, X_2, \ldots, X_n \in \{0,1\}^{n}$ are $\eps$-almost $\kappa$-wise independent in max norm if $\forall i_1, i_2, \ldots, i_{\kappa} \in [n]$, $\forall x \in \{0,1\}^{\kappa}$, $|\Pr[ X_{i_1} \circ X_{i_2} \circ \cdots \circ X_{i_{\kappa}} = x] - 2^{-\kappa} | \leq \eps.$

A function $g: \{0,1\}^{d} \rightarrow \{0,1\}^{n}$ is an $\eps$-almost $\kappa$-wise independence generator in max norm if $g(U) = Y = Y_{\kappa} \circ \cdots Y_{n}$ are $\eps$-almost $\kappa$-wise independent in max norm.
\end{definition}
In the following passage, unless specified, when we say $\eps$-almost $\kappa$-wise independence, we mean in max norm.

\begin{theorem}[$\eps$-almost $\kappa$-wise independence generator \cite{alon1992simple}]
\label{almostkwiseg}
There exists an explicit construction s.t. for every $n, \kappa \in \mathbb{N}$, $\eps > 0$, it computes an $\eps$-almost $\kappa$-wise independence generator $g: \{0,1\}^{d} \rightarrow \{0,1\}^n$, where $d = O(\log \frac{\kappa \log n }{\eps})$.

The construction is highly explicit in the sense that, $\forall i\in [n]$, the $i$-th output bit can be computed in time $\poly(\kappa, \log n, \frac{1}{\eps})$ given the seed and $i$.
\end{theorem}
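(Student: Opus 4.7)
The plan is to reduce $\eps$-almost $\kappa$-wise independence in max norm to constructing a small-bias distribution, and then invoke an explicit small-bias generator. Recall that a distribution $D$ on $\bit^n$ is \emph{$\delta$-biased} if for every nonempty $S \subseteq [n]$, its character $\hat{D}(S) = \mathbb{E}_{X \sim D}[(-1)^{\sum_{i \in S} X_i}]$ has magnitude at most $\delta$. My proof would therefore split into (i) a Fourier-analytic reduction and (ii) a plug-in of an explicit small-bias generator.

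For the reduction, I would show that every $\delta$-biased distribution is $\delta$-almost $\kappa$-wise independent in max norm. Fix coordinates $i_1, \ldots, i_\kappa$ and a target $y \in \bit^\kappa$, and expand the indicator of $\{X_{i_1}\circ\cdots\circ X_{i_\kappa} = y\}$ over the characters of $\bit^\kappa$:
\[
\Pr_{X\sim D}\bigl[X_{i_1}\circ\cdots\circ X_{i_\kappa} = y\bigr] \;=\; \frac{1}{2^\kappa}\sum_{T \subseteq \{i_1,\ldots,i_\kappa\}} (-1)^{\langle y,\mathbf{1}_T\rangle}\,\hat{D}(T).
\]
The $T=\emptyset$ term contributes exactly $2^{-\kappa}$, and each of the remaining $2^\kappa-1$ terms has absolute value at most $\delta/2^\kappa$, so the total deviation is at most $\delta(1-2^{-\kappa}) < \delta$. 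Taking $\delta = \eps$ closes the reduction.

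Next I would invoke an explicit $\eps$-biased generator. The cleanest option is the powering construction of \cite{alon1992simple}: pick $m = O(\log n + \log 1/\eps)$, identify $[n]$ with a subset of $\mathbb{F}_{2^m}$, sample a seed $(s, t) \in \mathbb{F}_{2^m}^2$, and for each $i \in [n]$ output the bit $\langle s\cdot\alpha^i,\, t\rangle \bmod 2$, where $\alpha$ is a primitive element. A standard character-sum / linear-algebra argument shows this is $\eps$-biased, and each output bit is computable in $\poly(m)$ time by fast arithmetic in $\mathbb{F}_{2^m}$, which already delivers the claimed per-bit explicitness.

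The \textbf{main obstacle} is sharpening the seed length from the immediate bound $O(\log(n/\eps))$ that the above recipe delivers to the claimed bound $O(\log(\kappa \log n / \eps))$. The key observation is that the Fourier reduction only exploits $\hat{D}(S)$ for $|S| \leq \kappa$, so one does not need full small-bias on $\bit^n$ --- only small-bias restricted to characters of weight at most $\kappa$. This is a strictly weaker object, and can be constructed by combining a $\kappa$-wise independent polynomial-evaluation generator with a small-bias perturbation, or via a BCH-code style direct construction, optimizing the parameters to match the target seed length. Per-bit evaluation remains $\poly(\kappa, \log n, 1/\eps)$ because both ingredients admit fast coordinate-wise evaluation given the seed and index $i$.
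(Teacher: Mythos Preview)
The paper does not prove this statement: it is quoted in the preliminaries as a known result from \cite{alon1992simple}, with no accompanying argument. So there is no ``paper's own proof'' to compare against.

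On its own merits, your proposal is the standard Naor--Naor/AGHP construction and is correct. The Fourier reduction in step (i) is clean and accurate; step (ii) is a correct instantiation of an $\eps$-biased generator. Your identification of the obstacle in step (iii) is exactly right: the naive composition gives seed length $O(\log(n/\eps))$, and the savings to $O(\log(\kappa\log n/\eps))$ comes from needing small bias only on characters of weight at most $\kappa$. The concrete way this is realized in \cite{alon1992simple} is to take the parity-check matrix $H$ of a binary BCH code of designed distance $\kappa+1$ (so any $\kappa$ columns of $H$ are linearly independent and $H$ has $m=O(\kappa\log n)$ rows), sample an $\eps$-biased $Y\in\bit^m$, and output $H^\top Y\in\bit^n$; linearity then carries any weight-$\le\kappa$ character on the output to a nonzero character on $Y$. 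Your sketch gestures at this (``BCH-code style direct construction'') but does not spell it out; filling in that one line would make the argument complete. The per-bit explicitness follows since the $i$-th output bit is the inner product of the $i$-th column of $H$ (computable coordinate-wise in $\poly(\log n)$ time) with $Y$, whose $m=O(\kappa\log n)$ bits are each computable in $\poly(\log(m/\eps))$ time.
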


\subsection{Pseudorandom generator}
\begin{definition}[Pesudorandom generator]
A generator $g:\{0,1\}^r\rightarrow \{0,1\}^n$ is a pseudorandom generator (PRG) against a function $f:\{0,1\}^n\rightarrow \{0,1\}$ with error $\eps$ if
\[\left|\Pr[f(U_n)=1]-\Pr[f(g(U_r))=1] \right|\leq \eps\]
where
 $r$ is called the seed length of $g$.

We also say $g$ $\eps$-fools function $f$. Similarly, $g$ $\eps$-fools a class of function $\mathcal{F}$ if $g$ fools all functions in $\mathcal{F}$.
\end{definition}

\begin{theorem}[PRG for CNF/DNFs \cite{de2010improved}]
\label{PRGforCNF}
There exists an explicit PRG $g$ s.t. for every $n, m\in \mathbb{N}, \eps > 0$,   every CNF/DNF $f$ with $n$ variables,   $m$ terms,
\[ |\Pr[f(U_n) = 1] - \Pr[f(g(n, m, \eps, U_r)) = 1]| \leq \eps\]
where  $|g(n, m, \eps, U_r)| = n$, $r =  O(\log n + \log^2(m/\eps) \cdot \log \log(m/\eps))$.
\end{theorem}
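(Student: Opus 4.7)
The plan is to combine two classical tools. The first is Bazzi's theorem (with Razborov's simplification) showing that every $k$-wise independent distribution on $\bit^n$ fools every $m$-term DNF to error $\eps$ whenever $k = \Theta(\log^2(m/\eps))$. This reduces the construction of the PRG to the much more concrete task of sampling from a distribution with limited independence using few random bits. A direct approach would be to use a standard exact $k$-wise independent generator (e.g., via BCH codes), but this costs $O(k \log n)$ bits, which has the right $\log^2(m/\eps)$ factor yet multiplies by $\log n$ rather than adding it.

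To separate the $\log n$ term, I would replace exact $k$-wise with $(\eps', k)$-almost $k$-wise independence and argue that this still fools DNFs. The argument uses inclusion-exclusion truncated at depth $k$: the acceptance probability of an $m$-term DNF can be written as a signed sum of at most $m^k$ conjunctions of up to $k$ terms, each of which is a fixed test on at most $k$ input bits. A $(\eps', k)$-almost $k$-wise distribution is $\eps'$-close in max norm on every $k$-bit marginal, so by the triangle inequality the overall DNF error is at most the Bazzi--Razborov error plus $O(m^k \eps')$. Setting $\eps' = \eps \cdot m^{-O(k)}$ and plugging into the AGHP construction already stated as Theorem \ref{almostkwiseg} gives seed length $O(\log((k \log n)/\eps')) = O(\log \log n + k \log(m/\eps)) = O(\log n + \log^3(m/\eps))$. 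This already matches the additive $\log n$ in the statement and the $\log^2(m/\eps)$ factor, but is off by a $\log(m/\eps)$ versus $\log \log(m/\eps)$ factor.

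The main obstacle will be shaving this last $\log(m/\eps)$ down to $\log \log(m/\eps)$. The naive inclusion-exclusion loses a $\log(m/\eps)$ factor because it treats all $m^k$ terms uniformly, whereas most of them are highly redundant. I would therefore try a two-component construction: XOR a small-bias generator (with bias $1/\poly(n, m, 1/\eps)$ and seed $O(\log n + \log(m/\eps))$) with an almost $k$-wise generator, and analyze the DNF acceptance using a Fourier-analytic decomposition that balances the low-weight part (fooled by limited independence) against the high-weight part (fooled by small bias). A more delicate handling of long DNF terms via pseudorandom restrictions, in the spirit of Ajtai--Wigderson, may also be needed to reach the $\log \log(m/\eps)$ factor; the intuition is that after a random restriction most terms become constant, so only a small residual subcircuit needs to be fooled with strong independence. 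In all cases per-bit computability in $\poly(n, k, 1/\eps)$ time follows from the corresponding property of each building block.
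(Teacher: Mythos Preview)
The paper does not prove this theorem. It is stated in the preliminaries section with a citation to \cite{de2010improved} and used as a black box; there is no accompanying proof in the paper. So there is nothing to compare your proposal against on the paper's side.

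As for your proposal itself, it is a reasonable outline of the ideas behind the cited construction, but you yourself identify the gap: your argument only reaches seed length $O(\log n + \log^3(m/\eps))$, not the claimed $O(\log n + \log^2(m/\eps)\log\log(m/\eps))$. The speculative fixes you mention (XOR with a small-bias generator, pseudorandom restrictions) are in the right spirit but are not worked out, and the truncation-of-inclusion-exclusion bound $m^k \eps'$ is too crude to recover the $\log\log$ factor. If you actually needed to prove this bound, you would have to follow the finer analysis in the cited reference rather than the coarse union bound over $m^k$ terms. For the purposes of this paper, however, no proof is required: simply cite the result.
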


\subsection{Random walk on expander graphs}

\begin{definition}[$(n,d,\lambda)$-expander graphs]
If $G$ is an $n$-vertex $d$-regular graph and $\lambda(G)\leq \lambda$ for some number $\lambda<1$, then we say
that $G$ is an $(n, d, \lambda)$-expander graph. Here $\lambda(G)$ is defined as the second largest eigenvalue (in the absolute value) of the normalized adjacency matrix $A_G$ of $G$.

A family of graphs $\{G_n\}_{n\in \mathbb{N}}$ is a $(d, \lambda)$-expander graph family if there are some constants
$d\in\mathbb{N}$ and $\lambda<1$ such that for every $n$, $G_n$ is an $(n, d, \lambda)$-expander graph.
\end{definition}

\begin{theorem}[Random walk on expander graphs, \cite{alon1995derandomized}, \cite{hoory2006expander} Theorem 3.11]
\label{ExapanderRWHitting}
Let $A_0, \ldots, A_t$ be vertex sets of densities $\alpha_0,...,\alpha_t$ in an $(n, d, \lambda)$-expander
graph $G$. Let $X_0, \ldots, X_t$ be a random walk on $G$. Then
\[
\Pr\{ \forall i, X_i \in A_i\} \leq \prod_{i=0}^{t-1} (\sqrt{\alpha_i \alpha_{i+1}} + \lambda ).
\]
\end{theorem}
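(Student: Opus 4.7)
The plan is to use the standard spectral approach: identify the transition matrix of the walk with the normalized adjacency matrix $P = A_G$, encode membership in $A_i$ by diagonal projection matrices $\Pi_i$ (ones on coordinates in $A_i$ and zeros elsewhere), and start from the stationary (uniform) distribution $\mathbf{u} = \tfrac{1}{n}\mathbf{1}$. With this notation the quantity of interest becomes a pure matrix product:
\[
\Pr[\forall i,\ X_i \in A_i] \;=\; \mathbf{1}^{\,T} \Pi_t P\, \Pi_{t-1} P \cdots \Pi_1 P\, \Pi_0 \mathbf{u}.
\]
Everything then reduces to controlling the $\ell_2$ norm of the vector obtained by alternately applying $P$ and a projection $\Pi_i$.

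The key step is the one-step operator inequality: for any vector $v$ and any two sets $A, B$ of densities $\alpha, \beta$,
\[
\|\Pi_B P \Pi_A v\|_2 \;\leq\; \bigl(\sqrt{\alpha\beta} + \lambda\bigr)\,\|v\|_2.
\]
I would prove it by splitting $P = \tfrac{1}{n} J + E$, where $J = \mathbf{1}\mathbf{1}^{\,T}$ is the rank-one projection onto the uniform vector. Because $\mathbf{u}$ is a top eigenvector of $P$, the residual $E$ coincides with $P$ on the orthogonal complement and hence has spectral norm at most $\lambda$ by the expander hypothesis. For the rank-one term, a direct computation gives $\tfrac{1}{n}\Pi_B J \Pi_A v = \tfrac{1}{n}\langle \Pi_A \mathbf{1}, v\rangle\, \Pi_B \mathbf{1}$, and Cauchy--Schwarz together with $\|\Pi_A \mathbf{1}\|_2 = \sqrt{\alpha n}$ and $\|\Pi_B \mathbf{1}\|_2 = \sqrt{\beta n}$ bounds its norm by $\sqrt{\alpha\beta}\,\|v\|_2$. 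For the error term, $\|\Pi_B E \Pi_A v\|_2 \leq \|E\|\cdot\|v\|_2 \leq \lambda\,\|v\|_2$ since projections are contractions. Summing the two gives the inequality.

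To finish, I would chain this bound along the walk. Define $w_0 = \Pi_0 \mathbf{u}$ and $w_i = \Pi_i P w_{i-1}$; by construction each $w_i$ is supported on $A_i$, so $\Pi_i w_i = w_i$ and the one-step inequality applies at every step. Iterating yields
\[
\|w_t\|_2 \;\leq\; \|w_0\|_2 \cdot \prod_{i=0}^{t-1}\bigl(\sqrt{\alpha_i \alpha_{i+1}} + \lambda\bigr),
\]
and $\|w_0\|_2 = \sqrt{\alpha_0/n}$ by direct calculation. The desired probability equals $\mathbf{1}^{\,T} w_t = \|w_t\|_1 \leq \sqrt{n}\,\|w_t\|_2$ by Cauchy--Schwarz, which gives $\sqrt{\alpha_0}\prod_{i=0}^{t-1}(\sqrt{\alpha_i\alpha_{i+1}} + \lambda)$, at most the claimed bound since $\alpha_0 \leq 1$.

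The main obstacle is the one-step operator inequality; the rest is iteration plus two applications of Cauchy--Schwarz. The expander hypothesis enters in exactly one place, namely the bound $\|E\| \leq \lambda$, which is what forces each application of $P$ to contract the non-uniform component of any vector by $\lambda$ while only the rank-one uniform contribution survives in the residual.
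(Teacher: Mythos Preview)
Your argument is correct and is essentially the standard spectral proof of this hitting bound (the decomposition $P = \tfrac{1}{n}J + E$, the one-step bound $\|\Pi_B P \Pi_A v\|_2 \le (\sqrt{\alpha\beta}+\lambda)\|v\|_2$, and the telescoping with Cauchy--Schwarz at the endpoints). Note, however, that the paper does not prove this theorem at all: it is stated in the preliminaries as a cited result from \cite{alon1995derandomized} and \cite{hoory2006expander}, so there is no ``paper's own proof'' to compare against. Your write-up is precisely the argument one finds in those references.
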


It is well known that some expander families have strongly explicit constructions.
\begin{theorem}[Explicit expander family \cite{arora2009computational}]
\label{explicitexpander}
For every constant $\lambda \in (0,1)$ and some constant $d \in \mathbb{N} $ depending on $\lambda$, there exists a strongly explicit
 $( d, \lambda)$-expander family.
\end{theorem}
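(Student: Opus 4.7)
The plan is to reduce the statement to two standard ingredients: an explicit \emph{base} expander family with some fixed constants $(d_0, \lambda_0)$ where $\lambda_0 < 1$, and a simple amplification step that trades degree for spectral gap. For the base family I would invoke the Margulis--Gabber--Galil construction on the vertex set $\mathbb{Z}_m \times \mathbb{Z}_m$ (so $n = m^2$), where each vertex $(a,b)$ is connected to the eight neighbors obtained by applying the affine maps $(a,b)\mapsto (a\pm 2b, b), (a, b\pm 2a), (a\pm(2b+1), b), (a, b\pm (2a+1))$ modulo $m$. A classical Fourier-analytic argument (see e.g.\ Arora--Barak, Chapter 21) shows that this gives an $(m^2, 8, \lambda_0)$-expander family for an absolute constant $\lambda_0 < 1$.

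Next I would amplify to obtain an arbitrary target $\lambda \in (0,1)$ by graph powering. Given a $(d_0, \lambda_0)$-expander family $\{G_n\}$, define $G_n^{t}$ to be the graph on the same vertex set where $u$ and $v$ are adjacent iff there is a length-$t$ walk from $u$ to $v$ in $G_n$. Then $G_n^t$ is $d_0^t$-regular and its normalized adjacency matrix is $A_{G_n}^t$, so $\lambda(G_n^t) \le \lambda_0^t$. Choosing $t = \lceil \log(1/\lambda)/\log(1/\lambda_0)\rceil$, which is a constant depending only on $\lambda$, we obtain a $(d, \lambda)$-expander family with $d = d_0^t$ constant. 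This handles any $\lambda \in (0,1)$ uniformly.

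Finally I would verify the \emph{strong explicitness} claim, i.e.\ that the $i$-th neighbor of any vertex $v$ can be computed in time $\poly(\log n)$. In the base construction, each neighbor rule is a single affine map on $\mathbb{Z}_m$, so one neighbor is computable in $O(\log n)$ arithmetic operations. In the powered graph, the $i$-th neighbor of $v$ is obtained by interpreting $i \in [d_0^t]$ as a sequence of $t$ base edges and iteratively applying the base rule; since $t$ is constant this is still $O(\log n)$ time. The only step I expect to require any real care is the eigenvalue bound for the Margulis--Gabber--Galil graph, but this is a well-known calculation I would simply cite; once that is in hand, the powering and explicitness checks are essentially routine bookkeeping.
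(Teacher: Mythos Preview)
The paper does not prove this theorem; it is stated as a preliminary fact cited from the Arora--Barak textbook, with no accompanying argument. Your sketch is correct and is essentially the standard textbook proof: take an explicit base family (Margulis--Gabber--Galil) with some fixed $\lambda_0<1$, then power the graph a constant number of times to push the second eigenvalue below any target $\lambda$, checking that constant powering preserves strong explicitness. Since the paper offers no proof of its own, there is nothing to compare against beyond noting that your approach is exactly the one the cited reference uses.

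One small point worth being explicit about if you write this up: the Margulis--Gabber--Galil family only produces graphs on $n=m^2$ vertices, whereas the paper's applications (Lemmas~\ref{g1} and~\ref{g2}) instantiate expanders on vertex sets of size $2^{\tilde r}$ and $2^{b_0}$. In practice this is handled either by padding to the nearest square (adding dummy vertices with self-loops, which only mildly affects $\lambda$) or by citing a construction that works for all $n$; you may want to mention this so the ``for every $n$'' in the definition of a $(d,\lambda)$-expander family is actually met.
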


\subsection{Error correcting codes (ECC)}


An $(n, m ,d)$-code $C$ is an ECC (for hamming errors) with codeword length $n$, message length $m$. The hamming distance between every pair of codewords in $C$ is at least $d$.

Next we recall the definition of ECC for edit errors.

\begin{definition}
An ECC $C\subseteq \{0,1\}^n$ for edit errors with message length $m$ and codeword length $n$ consists of an encoding mapping $Enc:\{0,1\}^m\rightarrow \{0,1\}^n$ and a decoding mapping $Dec:\{0,1\}^*\rightarrow \{0,1\}^m\cup\{Fail\}$. The code can correct $k$ edit errors if for every $y$, s. t. $ED(y,Enc(x))\leq k$, we have $Dec(y) = x$. The rate of the code is defined as $\frac{m}{n}$.

\end{definition}

An ECC family is explicit (or has an explicit construction) if
both encoding and decoding can be done in polynomial time.


We will utilize linear algebraic geometry codes to compute the redundancy of the hamming error case.
\begin{theorem}[\cite{hoholdt1998algebraic}]
\label{agcode}

There exists an explicit algebraic geometry ECC family  $ \{ (n, m, d)_q\mbox{-code } C \mid n, m \in \mathbb{N}, m \leq n, d = n-m-O(1), q = \poly(\frac{n}{d})\}$ with polynomial-time decoding when the number of errors is less than half of the distance.

Moreover, $\forall n, m \in \mathbb{N}$, for every message $x\in \mathbb{F}_q^{m}$, the codeword is $  x\circ z$ for some redundancy $z \in \mathbb{F}_q^{n-m}$.

\end{theorem}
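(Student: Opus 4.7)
The plan is to realize the code as a classical algebraic-geometry evaluation code on a curve with many rational points and derive its parameters via the Riemann-Roch theorem.

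First I would select an explicit curve $\mathcal{X}$ of genus $g = O(1)$ over $\mathbb{F}_q$ with at least $n+1$ rational points. Such curves are furnished by the Garcia-Stichtenoth tower of function fields over $\mathbb{F}_{q^2}$, which attains the Drinfeld-Vladut bound $N(\mathcal{X})/g \to \sqrt{q}-1$; tuning $g$ to a constant and choosing $q$ appropriately then yields $q = \poly(n/d)$ given the relation $d = n - m - g + 1$ derived below. Then pick $n+1$ distinct $\mathbb{F}_q$-rational points $P_0, P_1, \ldots, P_n$, set $D = P_1 + \cdots + P_n$ and $G = (m+g-1)P_0$, and define $C$ as the image of the evaluation map $f \mapsto (f(P_1), \ldots, f(P_n))$ on the Riemann-Roch space $\mathcal{L}(G)$.

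By the Riemann-Roch theorem, $\dim \mathcal{L}(G) = \deg G - g + 1 = m$ provided $\deg G \geq 2g-1$; and any nonzero $f \in \mathcal{L}(G)$ has at most $\deg G = m + g - 1$ zeros on $\mathcal{X}$, which gives the designed distance $d \geq n - m - g + 1 = n - m - O(1)$. For polynomial-time decoding up to $\lfloor (d-1)/2 \rfloor$ errors, I would invoke the Sakata-Feng-Rao (or Justesen-H\o holdt) syndrome decoder for one-point AG codes, which runs in $\poly(n)$ once a basis of $\mathcal{L}(G)$ is available. To put the code in the systematic form $\Enc(x) = x \circ z$, I would compute the $m \times n$ generator matrix obtained from a basis of $\mathcal{L}(G)$, reorder the evaluation points so that the first $m$ columns are linearly independent (this is possible because the generator has full row rank $m$), and then apply Gaussian elimination to produce a systematic generator $[\, I_m \mid A \,]$.

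The main obstacle is the algorithmic side: one needs explicit equations for the tower (which Garcia-Stichtenoth provide as two-variable recursive equations), an enumeration of its $\mathbb{F}_q$-rational points in $\poly(n)$ time, and a basis of $\mathcal{L}(G)$ constructible in $\poly(n)$ time. All three are classical ingredients, handled for instance by the Brill-Noether algorithm adapted to the tower or by working with one-point Weierstrass semigroups at $P_0$. Once the basis and evaluation matrix are available, systematic encoding and syndrome decoding reduce to linear algebra over $\mathbb{F}_q$ together with the Berlekamp-Massey-Sakata majority voting procedure.
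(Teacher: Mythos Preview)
The paper does not prove this statement; it is quoted from \cite{hoholdt1998algebraic} and used purely as a black box, so there is no argument in the paper for your attempt to be compared against.

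Your outline is the standard AG-code construction and is broadly correct, but one point deserves care. You write that you will take a curve of genus $g = O(1)$ and then invoke the Garcia--Stichtenoth tower; these are in tension, since the tower's whole significance lies in letting the genus grow with the number of points so that $N/g \to \sqrt{q}-1$. If you genuinely fix $g = O(1)$, the Hasse--Weil bound forces $n \le q + 1 + 2g\sqrt{q} = O(q)$, so $q = \Omega(n)$, which is $\poly(n/d)$ only when $d$ is polynomially smaller than $n$. To get constant $q$ with $d = \Theta(n)$ (a regime the paper's applications do hit at some levels) one must let $g$ grow, and then the redundancy is $d + g - 1$ rather than $d + O(1)$; the theorem as stated in the paper is somewhat informal on this point, and your sketch should not paper over it by calling $g$ a constant while simultaneously appealing to the tower. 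The systematic-encoding step (Gaussian elimination on the evaluation matrix) and the decoding step (Feng--Rao/Sakata majority voting) are standard and fine.
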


To construct ECC from document exchange protocol, we need to use a previous result about asymptotically good binary ECC for edit errors given by  Schulman and Zuckerman \cite{796406}.

\begin{theorem}[\cite{796406}]
\label{asympGoodECCforInsdel}
There exists an explicit binary ECC family in which a code  with codeword length $n$, message length $m=\Omega(n)$,  can correct up to $k=\Omega(n)$ edit errors.

\end{theorem}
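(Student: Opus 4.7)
The plan is to establish this via a concatenated code construction, along the lines of the original Schulman--Zuckerman argument. Since the statement promises only constant rate and constant tolerable error fraction, we have a lot of slack and can afford a fairly wasteful design as long as it is explicit and achieves both bounds simultaneously.

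First, I would fix an inner code: a short binary block code of length $n_0 = O(\log n)$ that can correct an $\Omega(1)$ fraction of insertions, deletions, and substitutions, and whose rate is bounded below by some positive constant. Existence of such short codes can be proved by brute force (since $n_0$ is logarithmic the search time is polynomial in $n$), so explicitness is automatic; alternatively, one can take any asymptotically good small-distance insdel code from a prior work and truncate. Next, I would fix an outer code: a Reed--Solomon code (or any explicit constant-rate, constant-relative-distance code) over an alphabet $\Sigma$ of size $2^{\Theta(n_0)}$, so that its symbols correspond to inner codewords. To encode a message, I would first apply the outer code to get a sequence of $\Theta(n/n_0)$ outer symbols, replace each outer symbol by its inner encoding, and then glue successive inner blocks together using a \emph{synchronization marker} (for instance $0^{L} 1^{L}$ for $L = \Theta(n_0)$) between them. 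The marker is chosen long enough and structured enough that the inner codewords cannot masquerade as one, even under a constant fraction of edits.

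For decoding, the receiver scans the received word for markers, which partitions the received word into candidate inner blocks. The key structural claim is that each adversarial insertion or deletion can damage only a constant number of markers and hence only a constant number of inner blocks; therefore, apart from an $O(\eps)$ fraction of blocks, every block appears essentially intact and has edit distance $o(n_0)$ from some genuine inner codeword. I would run the inner decoder on each candidate block; the ones that are clean produce the correct outer symbol, and the corrupted ones we simply flag as erasures (or symbol errors). Finally I run the outer Reed--Solomon decoder, which recovers the original message as long as the fraction of corrupted outer symbols stays below its relative distance. Choosing the inner-code parameters so that its edit-error tolerance exceeds the per-block error rate induced by $\eps n$ adversarial errors, and choosing the outer code's relative distance to exceed the resulting symbol-error rate, yields the desired $m = \Omega(n)$ at $k = \Omega(n)$.

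The main obstacle is the synchronization analysis: one must show that even after $\Omega(n)$ worst-case edits the marker-finding step correctly aligns all but an $O(\eps)$ fraction of blocks. The delicate point is that an adversary can try to forge markers inside inner codewords, or to shift several true markers by a few positions so that many block boundaries are simultaneously misread. This is handled by (i) making markers long enough that creating a spurious marker by edits requires $\Omega(L)$ operations, so the total number of spurious markers is $O(\eps n / L)$, which is a constant fraction of the outer length; and (ii) showing that a block alignment which absorbs a shifted marker still has bounded edit distance to the intended inner codeword, which the inner decoder can handle. Once this combinatorial synchronization lemma is in place, the rate and distance calculations are routine and the theorem follows.
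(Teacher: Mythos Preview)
The paper does not prove this theorem at all: it is stated as a citation to Schulman--Zuckerman \cite{796406} and used as a black box (in Lemma~\ref{concatenationlem} and Theorem~\ref{codeFromSk}). So there is no ``paper's own proof'' to compare against; the authors simply import the result.

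That said, your sketch is a faithful outline of the Schulman--Zuckerman strategy: brute-force inner insdel code of logarithmic length, explicit outer code over a large alphabet, and synchronization markers between inner blocks, followed by a marker-based parsing and two-level decoding. The one place to be careful is the synchronization lemma you flag as the ``main obstacle.'' You need the inner codewords themselves to be marker-free (or far from any string containing the marker), not merely the markers to be long; otherwise an adversary need not spend $\Omega(L)$ edits to manufacture a spurious marker---it could exploit a near-marker already present inside a codeword. In Schulman--Zuckerman this is handled by building the marker-avoidance constraint into the inner code (e.g., choosing inner codewords with no long runs, or prepending/appending buffers), which your brute-force search can also enforce. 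With that refinement your argument goes through, and since the present paper only needs the statement as a black box, your level of detail is already more than what the paper supplies.
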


\section{Deterministic protocol for document exchange}
\label{sec:determdocexc}
We derandomize the IMS protocol given by Irmak et. al. \cite{irmak2005improved}, by first constructing $\eps$-self-matching hash functions and then use them to give a deterministic protocol.






\subsection{$\eps$-self-matching hash functions}
The following describes a matching property between strings under some given hash functions.

\begin{definition}\label{matchdef}
For every $n, n', t,  p, q  \in \mathbb{N}, q\leq p$, any hash functions $h_1, h_2, \ldots, h_{n'}$ where for every $i\in [n'], h_i: \{0,1\}^{p  } \rightarrow \{0,1\}^{q  }$, given two strings $x'\in (\{0,1\}^{p})^{n'}$ and $y \in \{0,1\}^{n}$, a (monotone) matching of size $t$ between $x', y$ under hash functions $h_1, \ldots, h_{n'}$  is a sequence of pairs of indices $w = ((i_1, j_1), (i_2, j_2), \ldots, (i_t, j_t)) \in ([n'] \times [n])^t$ s.t. 
$1\leq i_1 < i_2 < \cdots < i_t\leq n'$,   $j_{1}+p-1 < j_{2}, \ldots, j_{t-1}+p-1 < j_{t}, j_{t}+p-1 \leq n$ and $ \forall l\in [t], h_{i_l}(x'[i_l]) = h_{i_l}(y[j_l,  j_l+p-1] )$.

For $l\in [t]$, if $x'[i_l] = y[j_l,  j_l+p-1] $, we say $(i_l, j_l)$ is a good pair, otherwise we say it is a bad pair. We say $w$ is a correct matching if all pairs in $w$ are good. We say $w$ is a completely wrong matching is all pairs in $w$ are bad.

If parsing $x'$ to be binary we get $x$ which is equal to $y$, then $w$ is called a self-matching of $x'$ (or $x$).

For the match $w$ between $x', y$ under hash functions $h_1, \ldots, h_n$, we simply say a match $w$ if $x', y$ and $h_1, \ldots, h_{n'}$ are clear in the context.
\end{definition}

The next lemma shows that the maximum matching between two strings under some hash functions can be computed in polynomial time, using dynamic programming.
\begin{lemma}
\label{dpformatch}
There is an algorithm s.t. 
for every $n, n',  p, q  \in \mathbb{N}, q\leq p$, any hash functions $h_1, h_2, \ldots, h_{n'}$ where for every $ i\in [n']$, $h_i: \{0,1\}^{p  } \rightarrow \{0,1\}^{q  }$,  every  $x'\in (\{0,1\}^{p})^{n'}$ and $y \in \{0,1\}^{n}$, given $ h_1(x'[1])$, $\ldots$, $h_{n'}(x'[n'])$, $y$, $n$, $n'$, $p$, $q$, it can compute the maximum matching between $x', y$ under hash functions $h_1, \ldots, h_{n'}$  in time $O(n^2 (t_h + \log n))$, if for every $ i\in [n']$, $h_i$ can be computed in time $t_h$. (Note that $x$ is not necessary to be part of the input).
\end{lemma}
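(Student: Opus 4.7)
The plan is to compute the maximum matching by a standard two-dimensional dynamic program over pairs $(i,j)\in\{0,\dots,n'\}\times\{0,\dots,n\}$. Define $f[i][j]$ to be the size of the largest monotone matching that uses only indices from $\{1,\dots,i\}$ on the $x'$ side and whose last window in $y$ ends at a position at most $j$; that is, a matching $((i_1,j_1),\dots,(i_t,j_t))$ with $i_t\le i$ and $j_t+p-1\le j$. Set $f[0][j]=0$ and $f[i][0]=0$, and fill in the table in the natural order.

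For the transition, three cases capture every maximum matching: either block $i$ is not used (giving the value $f[i-1][j]$), or position $j$ is not the right endpoint of the last used window (giving $f[i][j-1]$), or the last pair of the matching is $(i,j-p+1)$. In the last case we require $j\ge p$ and $h_i(x'[i])=h_i(y[j-p+1,j])$, in which event the constraint $j_{t-1}+p-1<j_t$ forces the previous window to end at a position at most $j-p$, so this branch contributes $f[i-1][j-p]+1$. Thus
\[
f[i][j]=\max\bigl\{\,f[i-1][j],\ f[i][j-1],\ \mathbf{1}[h_i(x'[i])=h_i(y[j-p+1,j])]\cdot(f[i-1][j-p]+1)\,\bigr\}.
\]
Correctness of the recurrence is the usual exchange argument: the optimum for $(i,j)$ either omits block $i$, omits position $j$ as the current right endpoint, or ends exactly at $(i,j-p+1)$, and in the latter case the monotonicity conditions $i_{t-1}<i$ and $j_{t-1}+p-1<j-p+1$ mean the remaining matching is feasible for the subproblem $(i-1,j-p)$. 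The answer is $f[n'][n]$, and the matching itself can be recovered by a standard pointer-based traceback.

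For the running time, the table has $(n'+1)(n+1)=O(n^2)$ entries since $n'\le n$. Computing the entry $f[i][j]$ requires the value $h_i(y[j-p+1,j])$, which takes time $t_h$ by assumption, together with one comparison between two length-$q$ bit strings. Because we are given the precomputed hashes $h_i(x'[i])$ as input, the comparison costs $O(q)\le O(p)$ bits; charging bit operations through the standard $\Theta(\log n)$ word model and noting that in the intended regime each hash value occupies $O(\log n)$ words, every update runs in $O(t_h+\log n)$ time. Multiplying through yields the claimed $O(n^2(t_h+\log n))$ bound, and the traceback adds only $O(n^2)$ on top.

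The only potentially delicate point is bookkeeping the strict-inequality constraint $j_{t-1}+p-1<j_t$ correctly, which is why the recurrence jumps by $p$ rather than by $1$ in the third branch; once this is pinned down the rest is routine. No other obstacle is expected, since the dependencies of $f[i][j]$ are on strictly smaller $(i',j')$ in the natural row-by-row order, so a single pass through the table suffices.
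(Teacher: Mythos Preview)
Your proposal is correct and follows essentially the same approach as the paper: the same two-dimensional DP with the same three-way recurrence, and the same $O(n^2(t_h+\log n))$ time bound. The only cosmetic difference is in justifying the $\log n$ term per cell---the paper attributes it to appending an $O(\log n)$-bit index pair when recording the current best matching for the traceback, whereas you attribute it to comparing hash values in the word model; either accounting works.
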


\begin{proof}

We present a dynamic programming  to compute the maximum matching.

For every $ j'\in [n'], j\in [n]$, let $f(j', j)$ be the size of the maximum matching between $x'[1, j']$ and $ y [1, j] $ under $h_1, \ldots, h_{j'}$. We compute $f$ as follows,
\[
f(j', j) =
\left \{
  \begin{array}{ll}
  \max(f(j'-1, j-p) + 1, f(j'-1, j), f(j', j-1) ),  &  \mbox{ if } h_{j'}(x'[j']) = h_{j'}(y[j-p+1, j]); \\
  \max(f(j'-1, j), f(j', j-1)), &  \mbox{ if } h_{j'}(x'[j']) \neq h_{j'}(y[j-p+1, j]).
  \end{array}
\right.
\]

Although $f$ only computes the size of the maximum matching, we can record the corresponding matching every time when we compute $f(j',j), j'\in [n'], j\in [n]$. So finally we can get the maximum matching after computing $f(n', n)$.

We need to compute $f(j', j), j'\in [n'], j\in [n]$ one by one and $n n' = O(n^2)$.  Every time we compute an $f(j', j)$, we need to compute a constant number of evaluations of the hash functions and append a pair of indices to some previous records to create the current record of the maximum matching. That takes $O(\log n + t_h)$. So the overall time complexity is $O(n^2(\log n+ t_h))$.
\end{proof}

Next we show that a matching between two strings with edit distance $k$, induces a self-matching in one of the strings, where the number of bad pairs decreases by at most $k$.
\begin{lemma}
\label{matchToSelfMatch}
For every $n, n', k, m, p, q  \in \mathbb{N}, k\leq n', q\leq p $, any hash functions $h_1, h_2, \ldots, h_{n'}$ where $\forall i\in [n'], h_i: \{0,1\}^{p  } \rightarrow \{0,1\}^{q  }$, given two sequences $x\in  \{0,1\}^{pn'}$ and $y \in \{0,1\}^{n}$ s.t. $\ED(x, y) \leq k$, parsing $x$ to be $x'\in (\{0,1\}^p)^{n'}$, if there exists a matching $w$ between $x'$ and $ y$ under $h_1,\ldots, h_{n'}$ having  at least $m$ bad pairs, then
there is a  self-matching $w'$ of $x$ with size $|w'| \geq |w| -k$, having  at least $m - k$ bad pairs.
%

\end{lemma}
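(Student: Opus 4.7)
The plan is to exploit the alignment witnessing $\ED(x,y)\le k$ to transport each pair $(i_\ell,j_\ell)$ of the matching $w$ (where $j_\ell$ points into $y$) to a pair $(i_\ell,a_\ell)$ where $a_\ell$ points into $x$. Fix any optimal alignment of $x$ with $y$, i.e., a longest common subsequence; let $D_x$ be the set of positions of $x$ not in the LCS (``deleted from $x$'') and $I_y$ the positions of $y$ not in the LCS (``inserted into $y$''). Since $\ED(x,y)=|D_x|+|I_y|$, we have $|D_x|+|I_y|\le k$.

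For each pair $(i_\ell,j_\ell)$ in $w$, look at the window $y[j_\ell,j_\ell+p-1]$. Call the pair \emph{preservable} if (a) every position in this window lies in the LCS, and (b) the $p$ corresponding positions in $x$ are consecutive, i.e., form an interval $[a_\ell,a_\ell+p-1]$ with no element of $D_x$ strictly between $a_\ell$ and $a_\ell+p-1$. When the pair is preservable, the two substrings are bit-for-bit equal, so $x[a_\ell,a_\ell+p-1]=y[j_\ell,j_\ell+p-1]$, which means $h_{i_\ell}(x'[i_\ell])=h_{i_\ell}(x[a_\ell,a_\ell+p-1])$, and moreover $(i_\ell,a_\ell)$ is a good self-matching pair iff $(i_\ell,j_\ell)$ was a good pair in $w$ (their bad/good status is identical because the second string in the pair is unchanged as a bit string). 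Define $w'$ to be the subsequence of $(i_\ell,a_\ell)$ over all preservable pairs.

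Next I verify that $w'$ is a valid monotone self-matching and bound its loss. The indices $i_\ell$ are already strictly increasing by hypothesis on $w$. For the $a_\ell$'s, the windows $y[j_\ell,j_\ell+p-1]$ are pairwise disjoint (since $j_\ell+p-1<j_{\ell+1}$), and monotonicity of the alignment then forces the $x$-intervals $[a_\ell,a_\ell+p-1]$ to be pairwise disjoint with $a_\ell+p-1<a_{\ell+1}$, which is exactly the monotonicity required of a self-matching. For the loss count: each position in $I_y$ lies in at most one window $y[j_\ell,j_\ell+p-1]$ (they are disjoint), so it can destroy condition (a) for at most one pair; similarly each position in $D_x$ lies in at most one $x$-interval $[a_\ell^{(1)},a_\ell^{(p)}]$ (which are again disjoint by monotonicity), so it can destroy condition (b) for at most one pair. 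Hence at most $|D_x|+|I_y|\le k$ pairs are lost, giving $|w'|\ge |w|-k$ and at least $m-k$ surviving bad pairs.

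The main obstacle is the bookkeeping in the two-sided blocking argument: one must be careful that a single edit operation is counted only once, which I handle by exploiting the disjointness of the $y$-windows for $I_y$ and the disjointness of the stretched $x$-intervals $[a_\ell^{(1)},a_\ell^{(p)}]$ for $D_x$ (itself a consequence of monotonicity of the alignment plus disjointness of the $y$-windows). Everything else is a direct translation of indices through the fixed LCS alignment.
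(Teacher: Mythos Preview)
Your proof is correct and follows essentially the same approach as the paper's: fix an optimal alignment between $x$ and $y$, discard the (at most $k$) pairs of $w$ whose $y$-windows are ``touched'' by an edit, and map the remaining windows across the alignment to substrings of $x$. Your version is in fact more careful than the paper's terse argument---you explicitly separate the two failure modes (a position of $I_y$ inside the $y$-window, or a position of $D_x$ falling strictly between the aligned $x$-positions) and use disjointness to charge each edit to at most one pair, and you also verify the monotonicity condition $a_\ell+p-1<a_{\ell'}$ for the resulting self-matching, which the paper leaves implicit.
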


\begin{proof}

Assume w.l.o.g. the $m$ bad pairs in the matching are the pairs $((j'_1, j_1), \ldots, (j'_{m}, j_{m} ) )$.

As the number of edit errors is upper bounded by $ k $, to edit $y$ back to $x$, we only need to do insertions and deletions on at most $k$ of substrings $y[j_{1}, j_1+p-1], y[j_2, j_2+p-1], \ldots, y[j_{m}, j_{m}+p-1]$. The substrings left unmodified induce a self-matching of $x$, which is a subsequence of $w$. Note that only at most $ k$ entries of $w$ are excluded. So the number  of bad pairs in the matching is at least $m - k$ and the size of $w'$ is at least $|w| - k$.

%

\end{proof}

The following property shows that a matching between two long intervals induces two shorter intervals having a matching s.t. the ratio between the matching size and the total interval length is maintained.
\begin{lemma}
\label{largeIntvToSmallOne}

For every $n, n', t,  p, q  \in \mathbb{N}, t\leq n, q\leq p $, any hash functions $h_1, h_2, \ldots, h_{n'}$ where $\forall i\in [n], h_i: \{0,1\}^{p  } \rightarrow \{0,1\}^{q  }$, given two sequences $x' \in (\{0,1\}^p)^{n'}$ and $y \in \{0,1\}^{n}$, if there is a matching $w$ between $x$ and $ y$ under $h_1,\ldots, h_{n'}$ having  size $  t $, then for every $ t^* \leq t$,
there is a matching $w^*$ between $x'_0, y_0$ having size   $ t^*  $, where $x'_0$ is an interval of $x'$, $y_0$ is an interval of $y$, and $ |x_0| + |y_0|/p  \leq \frac{2t^*}{t}(n'+n/p)$. Here $w^*$ is a subsequence of $w$.
\end{lemma}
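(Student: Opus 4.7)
The plan is to prove Lemma~\ref{largeIntvToSmallOne} by a straightforward averaging/pigeonhole argument applied to contiguous sub-matchings of $w$. Write $w = ((i_1, j_1), \ldots, (i_t, j_t))$ and set $K = \lfloor t/t^* \rfloor$. I would partition the first $K t^*$ pairs of $w$ into $K$ consecutive blocks of size exactly $t^*$. For each $s \in [K]$, define the candidate sub-matching $w_s$ to consist of the pairs indexed by $\{(s-1)t^*+1, \ldots, st^*\}$, witnessed by the block
\[
x'_{0,s} = x'[i_{(s-1)t^*+1},\, i_{st^*}], \qquad y_{0,s} = y[j_{(s-1)t^*+1},\, j_{st^*}+p-1].
\]
Each $w_s$ is a subsequence of $w$ of size $t^*$, and it is a legal (monotone) matching between $x'_{0,s}$ and $y_{0,s}$ because $w$ already satisfies the monotonicity and spacing conditions of Definition~\ref{matchdef}.

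Next I would observe disjointness. Since the $K$ blocks of indices are disjoint and $i_1 < i_2 < \cdots < i_t$, the intervals $\{x'_{0,s}\}_{s=1}^K$ are pairwise disjoint subintervals of $[1, n']$. Using the spacing condition $j_l + p - 1 < j_{l+1}$ from the definition of a matching, the intervals $\{y_{0,s}\}_{s=1}^K$ are also pairwise disjoint subintervals of $[1, n]$. Summing gives
\[
\sum_{s=1}^K \bigl(|x'_{0,s}| + |y_{0,s}|/p\bigr) \;\leq\; n' + n/p.
\]

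By averaging, there exists $s^* \in [K]$ with $|x'_{0,s^*}| + |y_{0,s^*}|/p \leq (n' + n/p)/K$. It remains to compare $1/K$ to $2t^*/t$. Since $t \geq t^*$, a short case split gives $K = \lfloor t/t^* \rfloor \geq t/(2t^*)$: if $t/t^* \geq 2$ then $K \geq t/t^* - 1 \geq t/(2t^*)$, and if $1 \leq t/t^* < 2$ then $K = 1 > t/(2t^*)$. Hence
\[
|x'_{0,s^*}| + |y_{0,s^*}|/p \;\leq\; \frac{n' + n/p}{K} \;\leq\; \frac{2t^*}{t}(n' + n/p),
\]
and taking $w^* = w_{s^*}$ finishes the argument. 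No step looks genuinely hard; the only small subtlety is noticing that the strict inequality $j_l + p - 1 < j_{l+1}$ in Definition~\ref{matchdef} is precisely what forces the $y$-intervals to be disjoint, so that the sum of their lengths can be charged to $n$ rather than to something larger.
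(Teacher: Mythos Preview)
Your argument is correct. Both your proof and the paper's reach the same bound, but by different mechanisms, and it is worth recording the contrast. The paper proceeds by a recursive halving: starting from $w_1 = w$ on intervals of total weight $n' + n/p$, at each round it splits the current matching $w_i$ at its midpoint into two contiguous halves and keeps whichever half preserves the density invariant $|x'_0| + |y_0|/p \le \frac{|w_i|}{t}(n' + n/p)$; one of the two halves must satisfy this since their weights sum to the parent weight. The recursion stops once $|w_i| \in [t^*, 2t^*)$, at which point the invariant gives the factor $2t^*/t$, and any $t^*$ pairs of $w_i$ are taken for $w^*$.

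Your approach replaces this multi-round binary search by a single-shot pigeonhole: chop the first $Kt^*$ pairs of $w$ into $K = \lfloor t/t^* \rfloor$ consecutive blocks of size exactly $t^*$, observe that the induced $x'$-intervals and $y$-intervals are pairwise disjoint (using the spacing condition $j_l + p - 1 < j_{l+1}$ for the $y$ side), and average. This is more direct and shorter, and it immediately produces a sub-matching of size exactly $t^*$ rather than first landing in $[t^*, 2t^*)$ and then truncating. The paper's recursive version has the minor conceptual advantage that it never discards any pairs until the very end and maintains a tight invariant at every scale, which is closer in spirit to how similar density arguments are used elsewhere in the paper (e.g., Lemma~\ref{philemma}); but for the statement of Lemma~\ref{largeIntvToSmallOne} itself your averaging argument is the cleaner one and loses nothing.
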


\begin{proof}

We consider the following recursive procedure.

At the beginning (the first round), let $w_1 = w$. We know that  $n'+n/p \leq \frac{t}{t}(n'+n/p) $.

For the $i$-th round, assume we have a matching $w_i$ between $x'[j'_{ 1}, j'_{ 2}]$, $y[j_{ 1}, j_{ 2}]$ with $|w_i | = t_i$ and $  |x'[j'_{ 1}, j'_{ 2}]+y[j_{ 1}, j_{ 2}]|  \leq \frac{t_i}{t}(n'+n/p)  $. Let $w_i = (\rho'_{1}, \rho_2), \ldots, (\rho'_{t_i}, \rho_{t_i})$.

We pick $l = \lfloor t_i /2 \rfloor$. Consider the following two sequences $w_{i}[1, l]$ and $w_{i}[l+1, t_i]$.  Here $w_{i}[1, l]$ is a matching between $ x'_1 = x'[j'_{ 1}, \rho'_l]$ and $y_1 = y[j_{ 1}, \rho_l + p-1] $. Also $w_{i}[l+1, t_i]$ is a matching between $x'_2 = x'[\rho'_{l+1}, j'_{ 2}]$ and $ y_2 = y[ \rho_{l }+p, j_{ 2}] $.


Note that either $ (|x'_1| + |y_1|/p ) \leq \frac{l}{t}(n'+ n/p)$ or $ (|x'_2| + |y_2|/p ) \leq \frac{t_i - l}{t}(n'+ n/p) $.
Because if not,  then
$$ \left |x'[j'_{ 1}, j'_{ 2}] \right | + \left |y[j_{ 1}, j_{ 2}] \right | = (|x'_1| + |y_1|/p )+  (|x'_2| + |y_2|/p ) > \frac{t_i}{t}(n'+n/p),$$
which contradicts the assumption that $  |x'[j'_{ 1}, j'_{ 2}]+y[j_{ 1}, j_{ 2}]|  \leq \frac{t_i}{t}(n'+n/p)  $. Thus we can pick one sequence as $ w_{i+1}$.

We can go on doing this until   the $i^*$-th round in which there is a  matching $w_{i^*}$ of size   $t_{i^*}\in [t^*, 2t^*)$, whose corresponding pair of substrings are $x'_0$ and $y_0$. We know that $ |x'_0| + |y_0|/p  \leq \frac{t_{i^*}}{t}(n+n'/p) $. We pick $t^*$ pairs in $w_{i^*}$ and this gives a matching between $x'_0$ and $y_0$ s.t. $ |x'_0| + |y_0|/p  \leq \frac{2t^*}{t}(n'+n/p)$.

\end{proof}

We now describe an explicit construction for a family of sequences of $\eps$-self-matching hash functions.
\begin{theorem}
\label{selfmatchhash}
There exists an algorithm  which, on input $n,    p, q \in \mathbb{N}$, $\eps \in (0,1)$, $n\geq p \geq q$, $ q = \Theta(\log \frac{1}{\eps})  $, $x\in\{0,1\}^n$, outputs a description of $\eps $-self-matching functions $h_1, \ldots, h_{n' }:\{0,1\}^{p} \rightarrow \{0,1\}^q$, in time $\poly(n)$, where  the description length is $O(\log n)$ and $n' = \frac{n}{p}$.

Also there is an algorithm which, given  the same $ n, p, q\in \mathbb{N}$, the description of  $h_1, \ldots, h_{n'}$, $i\in [n']$ and any $ a\in \{0,1\}^p$, can output $h_{i}(a)$ in time $\poly(n)$.

\end{theorem}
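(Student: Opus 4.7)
The plan is to sample the hash outputs from an $\eps'$-almost $\kappa''$-wise independent distribution (Theorem \ref{almostkwiseg}), show that a strictly positive fraction of seeds yields a sequence that is $\eps$-self-matching with respect to $x$, and then derandomize by exhaustive search. Concretely, I would instantiate a generator $g\colon\{0,1\}^{d}\to\{0,1\}^{N}$ whose output length $N$ is large enough to specify $h_i(a)$ for every $i\in[n']$ and every $a\in\{0,1\}^{p}$, and interpret the $q$-bit block of $g(U_d)$ indexed by $(i,a)$ as $h_i(a)$. Verifying $\eps$-self-matching against $x$ can be done in $\poly(n)$ time via the dynamic program of Lemma \ref{dpformatch}, so once a positive success probability is established, enumerating all $2^{d}=\poly(n)$ seeds completes the construction.

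For the core probabilistic argument I use Lemma \ref{largeIntvToSmallOne} to reduce arbitrary bad self-matchings to short, spatially concentrated ones. If the $h_i$ are not $\eps$-self-matching, then some self-matching has more than $\eps n$ bad pairs; extracting those bad pairs yields a completely wrong self-matching $w$ of size $t>\eps n$ on $x$. Applying Lemma \ref{largeIntvToSmallOne} with $t^{\ast}=\kappa$, and using $n=n'p$ so that $n'+n/p=2n'$, produces a completely wrong matching of size $\kappa$ whose intervals $x'_0,y_0$ satisfy $|x'_0|+|y_0|/p\le 4\kappa/(\eps p)$. I would choose $q=c\log(1/\eps)$ for a sufficiently large constant $c$ and $\kappa=\Theta(\log n/\log(1/\eps))$, so that $q\kappa=\Theta(\log n)$.

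Union-bounding over configurations of these short completely-wrong matchings, the number of configurations is at most
\[
n\cdot n'\cdot\binom{4\kappa/(\eps p)}{\kappa}^{2}\le n^{2}\bigl(16e^{2}/(\eps p)^{2}\bigr)^{\kappa}=n^{O(1)},
\]
since $\kappa\log(1/\eps)=\Theta(\log n)$. For a fixed configuration, the $\kappa$ required collisions $h_{i_l}(x'[i_l])=h_{i_l}(y[j_l,j_l+p-1])$ involve $2q\kappa=\Theta(\log n)$ pairwise-disjoint output bits of $g$, because the $i_l$ are distinct and the two inputs to each $h_{i_l}$ are distinct. Hence under $\eps'$-almost $\kappa''$-wise independence with $\kappa''=2q\kappa$ the event has probability at most $2^{-q\kappa}+2^{q\kappa}\eps'=n^{-\Theta(c)}+n^{\Theta(c)}\eps'$. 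Taking $c$ a large enough constant and $\eps'=1/\poly(n)$ with a sufficiently large polynomial makes the total failure probability $o(1)$, and Theorem \ref{almostkwiseg} then yields seed length $d=O(\log(\kappa''\log N/\eps'))=O(\log n)$.

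The delicate point, and where I expect to spend the most care, is balancing $c,\kappa,\kappa''$ and $\eps'$: the natural choice $\kappa=\Theta(\log n)$ forces $\kappa''$ and $\log(1/\eps')$ up to $\Theta(\log^{2}n)$ and breaks the $O(\log n)$ seed bound, while $\kappa$ too small fails to make $\eps^{c\kappa}$ overcome the configuration count. The sweet spot $\kappa\asymp\log n/\log(1/\eps)$ keeps $q\kappa$ at $\Theta(\log n)$ and makes every relevant quantity polynomial in $n$. For the evaluation algorithm, on input the seed together with $(i,a)$, I compute the $q$ indices of $g$'s output corresponding to $h_i(a)$ and read them off in $\poly(n)$ time via the highly explicit bit access from Theorem \ref{almostkwiseg}; the entire description Alice sends is just the $O(\log n)$-bit seed.
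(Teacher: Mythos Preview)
Your proposal is correct and follows essentially the same route as the paper: index the hash table by $(i,a)\in[n']\times\{0,1\}^{p}$, sample it from an $\eps'$-almost $\kappa''$-wise independent source with $\kappa''=2q\kappa=\Theta(\log n)$ and $\eps'=1/\poly(n)$, invoke Lemma~\ref{largeIntvToSmallOne} to localize any putative completely wrong self-matching of size $>\eps n$ to one of size $\kappa=\Theta(\log n/q)$ inside short windows, union-bound over the $\poly(n)$ such windows and $\binom{O(\kappa/(\eps p))}{\kappa}\binom{O(\kappa/\eps)}{\kappa}=\poly(n)$ placements, and derandomize by enumerating the $2^{O(\log n)}$ seeds. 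The only presentational difference is that the paper searches for a seed satisfying the local property~($\star$) and then argues (Lemma~\ref{selfmatchhashconstructcorrect}) that ($\star$) implies $\eps$-self-matching, whereas you test $\eps$-self-matching directly; for that test note that Lemma~\ref{dpformatch} as stated returns the maximum matching, so you need the obvious variant that only admits bad pairs (i.e., computes the maximum completely wrong self-matching), which runs in the same time.
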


To prove this theorem we consider the following construction.

\begin{construction}
\label{selfmatchhashconstruct}
Let $n,  p, q \in \mathbb{N}$, $\eps \in (0,1)$, $n \geq p \geq q$, $ q = \Theta(\log \frac{1}{\eps}) $, $x\in\{0,1\}^n$ be given parameters.

\begin{enumerate}

\item Divide $x$ into consecutive blocks to get $x' \in (\{0,1\}^{p})^{n'}$,  $n' = n/p $;

\item
Let $g:\{0,1\}^{d} \rightarrow \{0,1\}^{q n' 2^{p}}$ be the $\eps_g$-almost $\kappa$-wise independence generator from Theorem \ref{almostkwiseg}, where $\kappa = O(\eps n q)$,  $\eps_g = 1/\poly(n), d = O(\log n)$;

\item View the output of $g$ as in a two dimension array $(\{0,1\}^{q})^{ [n'] \times \{0,1\}^{p}}$;

\item 
 
 Exhaustively search $u \in \{0,1\}^{d}$ s.t.

\hypertarget{starchekpt}{($\star$)} For every consecutive $t_1 \leq \frac{4m}{p\eps}$-blocks $x'_0 \in (\{0,1\}^{p})^{t_1}$ of $(x'[1], \ldots, x'[n'])$ and every substring $x_0$ of $x$  having length $ t_2 \leq  \frac{4 m}{\eps}$, $x'_0$ and $x_0$ have no completely wrong matching of size $ m = \Theta(\frac{ \log n}{ q}) $ under functions  $h_i(\cdot) = g(u)[i][ \cdot], i\in [n']$, where $g(u)[i][ \cdot]$ is the corresponding entry in the two dimension array  $g(u )$;

\item return $u$ which is the description of the $\eps$-self-matching hash functions.
\end{enumerate}

(Evaluation of a hash function) Given  $u$, $i\in [n'], n, p, q, \eps$,   an input $v\in \{0,1\}^{p}$, one can compute $h_i(v) = g(u)[i][v]$.

\end{construction}

\begin{lemma}
\label{uexist}

There exists an $u$ such that the assertion \hyperlink{starchekpt}{($\star$)} holds.

\end{lemma}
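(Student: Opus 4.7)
The plan is to use the probabilistic method on the seed $u$: sample $u\in\{0,1\}^d$ uniformly and show that $(\star)$ fails with probability strictly less than $1$. This forces some good seed to exist, and the exhaustive search step in Construction~\ref{selfmatchhashconstruct} will find one.

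First I would count the potential ``bad configurations''. A configuration is a choice of an interval of blocks of $x'$ and a substring of $x$ within the allowed ranges $t_1\leq 4m/(p\eps)$, $t_2\leq 4m/\eps$, together with matching indices $((i_1,j_1),\ldots,(i_m,j_m))$ satisfying the monotonicity and spacing constraints of Definition~\ref{matchdef} and the ``completely wrong'' requirement $x'[i_l]\neq x[j_l,j_l+p-1]$ for every $l$. Picking the first index in each sequence freely and the remaining $m-1$ indices inside the appropriate windows gives a count at most
\[
n\cdot n'\cdot\binom{4m/(p\eps)}{m}\cdot\binom{4m/\eps}{m}\leq n^2\cdot\bigl(16e^2/(p\eps^2)\bigr)^m.
\]

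For one fixed configuration, the bad event is that under $g(u)$ the $m$ hash equalities $g(u)[i_l][x'[i_l]]=g(u)[i_l][x[j_l,j_l+p-1]]$ all hold. Because the $i_l$'s are distinct and, by the ``completely wrong'' assumption, $x'[i_l]\neq x[j_l,j_l+p-1]$, this event depends on $2m$ distinct entries of the two-dimensional hash table, i.e., on $2mq$ specific output bits of $g$. Under truly uniform bits the $m$ independent $q$-bit collisions occur simultaneously with probability $2^{-mq}$; applying $\eps_g$-almost $\kappa$-wise independence in max norm to these $2mq\leq\kappa$ bits and summing pointwise over the $2^{mq}$ satisfying assignments, the actual probability is at most $2^{-mq}+2^{mq}\eps_g$. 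A union bound then yields
\[
\Pr_u[(\star)\text{ fails}]\leq n^2\bigl(16e^2/(p\eps^2)\bigr)^m\bigl(2^{-mq}+2^{mq}\eps_g\bigr),
\]
and both summands are $o(1)$ once $q=\Theta(\log(1/\eps))$ and $m=\Theta(\log n/q)$ are instantiated with sufficiently large hidden constants and $\eps_g=1/\poly(n)$ uses a high enough polynomial.

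The main obstacle is choosing these constants consistently so that the two competing scales -- the output length $q$ versus the combinatorial $\binom{t_1}{m}\binom{t_2}{m}$ explosion -- balance out while preserving the claimed parameters $q=\Theta(\log(1/\eps))$, $d=O(\log n)$, and $m=\Theta(\log n/q)$ of Theorem~\ref{selfmatchhash}. The key arithmetic fact needed is that $mq$ must outrun $2\log n+m\log(16e^2/(p\eps^2))$, which works because $p\geq q=\Theta(\log(1/\eps))$ absorbs one power of $\log(1/\eps)$ from the $\eps^{-2}$ numerator, leaving room for $q$ to dominate after the constant in $q=\Theta(\log(1/\eps))$ is taken large enough. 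Once the constants are pinned down, the union bound is strictly below $1$, a good $u$ exists, and Construction~\ref{selfmatchhashconstruct} locates it by brute force over the $2^{O(\log n)}=\poly(n)$ candidate seeds.
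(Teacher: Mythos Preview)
Your proposal is correct and follows essentially the same approach as the paper: draw $u$ uniformly, bound the collision probability for a single fixed configuration via the $\eps_g$-almost $\kappa$-wise independence of the $2mq$ relevant output bits (getting $2^{-mq}+2^{mq}\eps_g$), and union-bound over all $\poly(n)$ configurations. One small imprecision: your remark that ``$p\geq q=\Theta(\log(1/\eps))$ absorbs one power of $\log(1/\eps)$'' overstates the role of $p$ --- at worst $p=q$ and $\log p$ only contributes a lower-order $\log\log(1/\eps)$ term; what actually makes the arithmetic close is taking the hidden constant in $q=\Theta(\log(1/\eps))$ larger than $2$ so that $mq$ beats $2m\log(1/\eps)+2\log n$, exactly as you say in the next clause.
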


\begin{proof}

We consider a uniform random string $u$ and show that with high probability $u$ satisfies assertion \hyperlink{starchekpt}{($\star$)}.

Fix a pair of substrings $x'[j'_1, j'_2]$, $ x[j_1, j_2]$, and a matching $w^* = ( (\rho'_1, \rho_1), \ldots, (\rho'_m, \rho_m) )$ between them of size $m = \Theta(\frac{ \log n}{ q})$. The probability
\begin{align*}
     & \Pr_{u} \{ w^* \mbox{ is a completely wrong matching } \}\\
\leq & \Pr_{u}\{ \forall l \in [m], h_{\rho'_{l}}(x'[\rho'_{l}]) = h_{\rho'_{\ell}} (x[\rho_{\ell}, \rho_{\ell} + p-1]) \} \\
\leq & \sum_{a \in (\{0,1\}^{p})^{m}} \Pr_{u}\{  \forall \ell\in [m], h_{\rho'_{\ell}}(x'[\rho'_{\ell}]) = h_{\rho'_{\ell}} (x[\rho_{\ell}, \rho_{\ell} +  p-1]) = a_{\ell} \}  \\
\leq & 2^{q m} ( (\frac{1}{2^{q }})^{2m} + \eps_g) \\
\leq & \frac{1}{2^{q m}} + 2^{qm}\eps_g \\
= & \frac{1}{\poly(n)},
\end{align*}

as long as we take $\eps_g$ to be  a sufficiently small $1/\poly(n)$. The total number of pairs $x'[j'_1, j'_2]$, $ x[j_1, j_2]$ is at most $\poly(n)$. For each fixed pair there are at most  ${t_1 \choose m} {t_2 \choose m} \leq   (\Theta(\frac{1}{\eps}))^{\Theta(\frac{ \log n}{  q})} =   (\Theta(\frac{1}{\eps}))^{\Theta(\frac{ \log n}{  \log \frac{1}{\eps} }) } = \poly(n )$ number of different matchings of size $m$. Thus by the union bound, the probability that the assertion \hyperlink{starchekpt}{($\star$)} holds is at least $1- 1/\poly(n)$ if we choose the parameters appropriately.

\end{proof}

\begin{lemma}
\label{selfmatchhashconstructcorrect}
Construction \ref{selfmatchhashconstruct} gives a sequence of $\eps$-self-matching hash functions. 

\end{lemma}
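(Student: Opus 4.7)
The plan is to establish the $\eps$-self-matching property by contradiction, leveraging the combinatorial guarantee $(\star)$ that Lemma \ref{uexist} ensures can be certified by the exhaustive search in Construction \ref{selfmatchhashconstruct}. So fix any seed $u$ passing the test $(\star)$ and let $h_1, \ldots, h_{n'}$ be the corresponding hash functions; I will show that under these hash functions every self-matching of $x$ has at most $\eps n$ bad pairs.

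Suppose toward contradiction there is a self-matching $w$ of $x$, i.e., a matching between $x'$ and $y = x$ under $h_1, \ldots, h_{n'}$, with strictly more than $\eps n$ bad pairs. I would first extract the sub-sequence $w_{bad}$ of $w$ consisting only of its bad pairs. Since monotonicity in Definition \ref{matchdef} is a property of indices and ``bad'' is a per-pair property, $w_{bad}$ is itself a monotone matching in which every pair is bad; in other words, $w_{bad}$ is a \emph{completely wrong} matching of size $|w_{bad}| > \eps n$.

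Next I would feed $w_{bad}$ into Lemma \ref{largeIntvToSmallOne} with $t = |w_{bad}|$ and target size $t^* = m = \Theta(\log n / q)$ (feasible because the parameter regime of the construction ensures $\eps n \ge m$). Since $n = n'p$, the interval bound $|x'_0| + |y_0|/p \le \frac{2t^*}{t}(n' + n/p)$ simplifies to
\[
|x'_0| + \frac{|y_0|}{p} \;<\; \frac{4m n'}{\eps n} \;=\; \frac{4m}{\eps p},
\]
so the resulting sub-matching $w^*$ lives inside a window of at most $\frac{4m}{p\eps}$ consecutive blocks of $x'$ and a substring of $x$ of length at most $\frac{4m}{\eps}$. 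Because $w^*$ is a subsequence of $w_{bad}$, every pair in $w^*$ is still bad, so $w^*$ is a completely wrong matching of size exactly $m$ in exactly the range of interval sizes that $(\star)$ forbids. This contradiction yields the desired bound.

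The main subtlety I anticipate is bookkeeping the units in Lemma \ref{largeIntvToSmallOne}: $|x'_0|$ is measured in blocks while $|y_0|$ is measured in bits, and both thresholds must line up exactly with the block-count bound $t_1 \le \frac{4m}{p\eps}$ and the bit-length bound $t_2 \le \frac{4m}{\eps}$ of $(\star)$. Once one checks that these match (as above), and that both ``monotone'' and ``completely wrong'' pass to arbitrary subsequences (both being per-index / per-pair conditions), the argument is essentially a one-line reduction through Lemma \ref{largeIntvToSmallOne} and no further ingredient is required beyond Lemma \ref{uexist} for the existence of a valid seed.
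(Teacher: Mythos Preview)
Your proposal is correct and follows essentially the same approach as the paper's own proof: assume a self-matching with more than $\eps n$ bad pairs, restrict to the bad pairs to obtain a completely wrong matching, apply Lemma~\ref{largeIntvToSmallOne} with target size $m$ to localize it inside intervals of the sizes appearing in $(\star)$, and derive a contradiction. The arithmetic you carry out to match the block-count bound $t_1 \le \frac{4m}{p\eps}$ and the bit-length bound $t_2 \le \frac{4m}{\eps}$ is exactly what the paper does.
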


\begin{proof} 

Let $w$ be a self-matching of $x$, having at least $ \eps n +1  $ bad pairs.  We pick all the wrong pairs in this matching to get
a completely wrong self-matching $\tilde{w}$.

By Lemma \ref{largeIntvToSmallOne},   there are two substrings, $x'[j'_1, j'_2]$ and $x[j_1, j_2]$, $ 1\leq j'_1 \leq j'_2 \leq [n'], 1 \leq j_1 \leq j_2 \leq [n], |x'[j'_1, j'_2]| + |x[j_1, j_2]|/p \leq \frac{2m}{\eps n+1}(n'+n/p)$, having a completely wrong matching of size at least $ m  $, which is a subsequence of $\tilde{w}$. Note that $ |x'[j'_1, j'_2]| \leq \frac{2m}{\eps n+1}(n'+n/p) =   \frac{4m n}{(\eps n+1)p} \leq  \frac{4m}{\eps p}$. Also $|x[j_1, j_2]| \leq \frac{2m}{\eps n+1}(n/p+n/p) p = \frac{4m n}{\eps n+1} \leq  \frac{4m  }{\eps  } $.  This contradicts \hyperlink{starchekpt}{($\star$)}.

\end{proof}

\begin{lemma}
\label{evltime}
The evaluation of a function in the sequence takes polynomial time.
\end{lemma}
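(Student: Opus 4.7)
My plan is to reduce the evaluation time of each $h_i$ directly to the explicit-evaluation guarantee of the almost $\kappa$-wise independence generator $g$ invoked in Construction~\ref{selfmatchhashconstruct}. Recall that by construction $h_i(v) = g(u)[i][v]$, where the output of $g$ is parsed as a two-dimensional array indexed by $[n'] \times \{0,1\}^p$, and each cell holds $q$ output bits. So evaluating $h_i$ on an input $v$ amounts to reading the $q$ bits of $g(u)$ sitting at position $(i,v)$ of this array, i.e.\ a specific contiguous window of $q$ coordinates of the length-$qn'2^p$ output string.

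The first step is to check that the parameters of $g$ fall in the polynomial range. By the construction we have $\kappa = O(\eps n q)$, $\eps_g = 1/\poly(n)$, $q = \Theta(\log(1/\eps))$, and the seed length $d = O(\log n)$ from Theorem~\ref{almostkwiseg}. Since the theorem asserts that the $j$-th output bit of $g$ can be computed in time $\poly(\kappa, \log n, 1/\eps_g)$ given the seed $u$ and $j$, and since each of $\kappa$, $\log n$, $1/\eps_g$ is bounded by $\poly(n)$, we conclude that any individual output bit of $g(u)$ is computable in $\poly(n)$ time.

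The second step is to translate the two-dimensional index $(i,v)$ into the correct one-dimensional coordinate inside the output of $g$: this is just the arithmetic $(i-1)\cdot q\cdot 2^p + (v)\cdot q + j$ for $j = 0,\dots,q-1$, where $v$ is read as an integer in $[0,2^p)$. Each such coordinate can be computed in time $O(p + \log n) = \poly(n)$, and then fed into the bit-level evaluator of $g$. Doing this $q$ times and concatenating yields the full $q$-bit value $h_i(v)$. The only computations involved are therefore $q = O(\log(1/\eps)) \le O(\log n)$ invocations of the bit-level evaluator of $g$, each of cost $\poly(n)$, plus $O(q)$ arithmetic operations on integers of bit length $O(p + \log n)$, for an overall cost $\poly(n)$.

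I do not anticipate any real obstacle here; the lemma is essentially bookkeeping on top of Theorem~\ref{almostkwiseg}. The only mild subtlety is making sure that the implicit parameters (in particular $\eps_g = 1/\poly(n)$ and $\kappa$) are bounded by $\poly(n)$ so that the highly-explicit clause of Theorem~\ref{almostkwiseg} actually delivers $\poly(n)$-time evaluation, which we verified above.
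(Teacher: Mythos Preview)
Your proposal is correct and follows essentially the same approach as the paper: invoke the highly-explicit clause of Theorem~\ref{almostkwiseg} to evaluate each of the $q$ output bits of $g(u)$ at the position corresponding to $(i,v)$, after checking that the relevant parameters are all $\poly(n)$. One small remark: in Theorem~\ref{almostkwiseg} the ``$n$'' in the running-time bound $\poly(\kappa,\log n,1/\eps)$ refers to the \emph{output length} of the generator, which here is $qn'2^{p}$, not the string length; but since $\log(qn'2^{p}) = O(p + \log n) \le O(n)$ this is still $\poly(n)$, so your conclusion stands.
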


\begin{proof}
For evaluation, by Lemma \ref{almostkwiseg}, given a seed and a position index, the corresponding bit in $g$'s output can be computed in time $ \poly(\kappa, \log (qn'2^{ p}), \frac{1}{\eps_g}) = \poly(n)$. The output of a function has $ q$ bits. One can compute the $q$ bits one by one and the total running time is still $\poly(n)$.

\end{proof}

\begin{lemma}
\label{selfmatchhashtime}
The algorithm runs in polynomial time.

\end{lemma}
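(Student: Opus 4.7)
The plan is to show that each of the three nested searches in Construction~\ref{selfmatchhashconstruct} is individually polynomial, so their product is also $\poly(n)$.

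First I would observe that the seed length $d = O(\log n)$ was established via Theorem~\ref{almostkwiseg} with $\kappa = O(\eps n q)$ and $\eps_g = 1/\poly(n)$. Hence the exhaustive search in step 4 iterates over only $2^d = \poly(n)$ candidate seeds $u$. By Lemma~\ref{uexist}, at least one of them satisfies ($\star$), so the loop is guaranteed to terminate with a valid output.

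Next I would argue that the check of ($\star$) for a fixed seed $u$ takes $\poly(n)$ time. The number of candidate substring pairs $(x'_0, x_0)$ is at most $\poly(n)$: there are at most $n' \leq n$ starting indices and at most $4m/(p\eps) \leq \poly(n)$ possible lengths for $x'_0$, and similarly at most $\poly(n)$ choices for $x_0$. For each such pair, I would decide whether a completely wrong matching of size $m$ exists by running a variant of the dynamic program of Lemma~\ref{dpformatch}: modify the recursion so that the ``diagonal'' transition $f(j'-1, j-p)+1$ is only permitted when both $h_{j'}(x'[j']) = h_{j'}(y[j-p+1,j])$ and $x'[j'] \ne y[j-p+1,j]$. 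The size of the maximum completely wrong matching is then compared against $m$. By Lemmas~\ref{dpformatch} and \ref{evltime}, each such DP runs in $\poly(n)$ time since every hash evaluation costs $\poly(n)$ by the highly explicit nature of the generator in Theorem~\ref{almostkwiseg}.

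Multiplying the three factors, step 4 takes $\poly(n) \cdot \poly(n) \cdot \poly(n) = \poly(n)$ time. Step 1 (splitting $x$ into blocks) is trivially linear, and step 2 just fixes a description of the generator without any explicit computation on the full $\poly(n)$-length output. The algorithm therefore runs in polynomial time. I do not anticipate any genuine obstacle here; the only subtle point is confirming that the sub-DP detecting a ``completely wrong'' matching can be realized as a minor modification of the DP in Lemma~\ref{dpformatch}, which I would flag explicitly in the write-up.
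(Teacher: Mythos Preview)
Your argument is correct, but you take a slightly different route than the paper at the inner step. The paper does not use a dynamic program to test ($\star$); instead it directly \emph{enumerates} all size-$m$ matchings between $x'_0$ and $x_0$, bounding their number by
\[
{t_1 \choose m}{t_2 \choose m} \leq \left(\frac{4e}{\eps}\right)^{2m} = \left(\frac{4e}{\eps}\right)^{\Theta(\log n / \log(1/\eps))} = \poly(n),
\]
and checks each one for being completely wrong. Your modified DP is a cleaner and asymptotically faster way to decide the same predicate, and it sidesteps the need for this binomial estimate. Conversely, the paper's brute-force enumeration is more transparently tied to the same counting argument already used in Lemma~\ref{uexist}, so no new subroutine needs to be justified. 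Either approach is fine here.
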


\begin{proof}

Checking the assertion \hyperlink{starchekpt}{($\star$)} takes time $\poly(n)$. To see this, first note that there are $ \poly(n)$ pairs of $x'_0$ and $x_0$. Also for each pair, the total number of matchings of size $m$ is at most
\[ {t_1 \choose m} {t_2 \choose m} \leq {\frac{4m}{p\eps} \choose m} {\frac{4m}{ \eps} \choose m}  \leq (\frac{4 e}{ \eps})^{2m}  = (\frac{4 e}{ \eps} )^{\Theta(\frac{ \log n}{ q})} =   (\frac{4 e}{ \eps} )^{\Theta(\frac{ \log n}{  \log \frac{1}{\eps} }) } = \poly(n ). \]
For a specific matching, Alice can first compute the hash values in time $\poly(n)$ by Lemma \ref{evltime}. Alice can check whether it is a wrong matching and compute the size of the matching in time $\poly(n)$.

Note that there are $\poly(n)$ number of different seeds. So the algorithm of generating the description runs in polynomial time.

\end{proof}

\begin{proof}[Proof of Theorem \ref{selfmatchhash}]

We use Construction \ref{selfmatchhashconstruct}, the theorem
 directly follows from Lemma \ref{uexist}, \ref{selfmatchhashconstructcorrect}, \ref{selfmatchhashtime}.

\end{proof}

\subsection{Deterministic protocol for document exchange}
Our deterministic protocol for document exchange is as follows.

\begin{construction}
\label{dIMS}
The protocol is for every input length $n\in \mathbb{N}$,  every  $k \leq \alpha n$ number of edit errors where $\alpha$ is a constant. For the case $k > \alpha n$, we simply let Alice send her input string.

Both Alice's and Bob's algorithms have $L = O(\log \frac{n}{k})$ levels.

Alice: On input $x \in \{0,1\}^n$;
\begin{enumerate}[label*=\arabic*.]

\item We set up the following parameters;
\begin{itemize}

\item For every $i\in [L]$, in the $i$-th level,
\begin{itemize}
\item The block size is $b_i =  \frac{n}{3\cdot 2^{i} k } $, i.e., in each level we divide a block in the previous level evenly into two blocks. We choose $L$ properly s.t. $b_L = O(\log \frac{n}{k})$;

\item The number of blocks $l_i = n/b_i$;
\end{itemize}

\end{itemize}

\item For the $i$-th level,
\begin{enumerate}[label*=\arabic*.]

\item Divide $x$ into consecutive blocks to get $x'\in (\{0,1\}^{b_i})^{l_i}$;

\item Construct a sequence of $\eps = \frac{k}{n}$-self-matching hash functions $h_1,\ldots, h_{l_i}:\{0, 1\}^{b_i} \rightarrow \{0,1\}^{b^*}$ for $x$ by Theorem \ref{selfmatchhash}, with $b^* = O(\log \frac{n}{k})$. Let  the description of the hash functions be $u[i] \in \{0 ,1\}^{O(\log n)}$ by Theorem \ref{selfmatchhash};

%
%

\item Compute $v[i] = ( h_1(x'[1]), h_2(x'[2]), \ldots, h_{l_i}(x'[l_i]))$;


\item Compute the redundancy $ z[i] \in (\{0,1\}^{b^*})^{\Theta(k)} $ for $v[i]$ by Theorem \ref{agcode}, where the code has distance $14k$;
\end{enumerate}


\item Compute the redundancy $z_{\mathsf{final}} \in (\{0,1\}^{b_L})^{\Theta(k)} $ for the blocks of the $L$-th level  by Theorem \ref{agcode}, where the code has distance $8k$;

\item Send $u = (u[1], u[2], \ldots, u[L])$, $z = (z[1], z[2], \ldots, z[L])$, $v[1]$, $z_{\mathsf{final}}$.

\end{enumerate}

Bob: On input $y \in \{0,1\}^{O(n)}$ and received $u, z$, $v[1]$, $z_{\mathsf{final}}$;

\begin{enumerate}[label*=\arabic*.]

\item Create $\tilde{x} \in \{0, 1, *\}^{n}$ (i.e. his current version of Alice's $x$), initiating it to be $\{*, *, \ldots, *\}$;

\item For the $i$-th level where $1 \leq i \leq L-1$,
\begin{enumerate}[label*=\arabic*.]

\item Divide $\tilde{x}$ into consecutive blocks to get $\tilde{x}' \in (\{0,1\}^{b_i})^{l_i}$;

\item Apply the decoding of Theorem \ref{agcode} on $ h_1(\tilde{x}'[1])\circ h_2(\tilde{x}'[2])\circ \ldots \circ h_{l_i}(\tilde{x}'[l_i]) \circ z_i$ to get the sequence of hash values $ v[i] = ( h_1(x[1])$, $h_2(x[2])$, $\ldots$, $h_{l_i}(x[l_i]))$.  Note that $v[1]$ is received directly, thus Bob does not need to compute it;

\item Compute $w = ((\rho'_1, \rho_1), \ldots, (\rho'_{|w|}, \rho_{|w|})) \in ([l_i] \times [|y|])^{|w|}$ which is the maximum matching between $x'$ and $y$ under $h_1, \ldots, h_{l_i}$, using $v[i]$, by Lemma \ref{dpformatch};

\item Evaluate $\tilde{x}$ according to the matching, i.e. let  $\tilde{x}'[\rho'_j] = y[\rho_j, \rho_j + b_i - 1]$;

\end{enumerate}

\item In the $L$'th level, apply the decoding of Theorem \ref{agcode} on the blocks of $\tilde{x}$ and $ z_{\mathsf{final}}$ to get $x$;

\item Return $x$.

\end{enumerate}

\end{construction}

\begin{lemma}
\label{dIMS:timec}
Both Alice's and Bob's algorithms
are in polynomial time.
\end{lemma}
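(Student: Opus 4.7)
The plan is to bound the time complexity of each individual step in both Alice's and Bob's algorithms, then multiply by the number of levels $L = O(\log(n/k)) = O(\log n)$. Since every sub-routine invoked has already been shown to run in polynomial time earlier in the paper, summing over a logarithmic number of levels will yield an overall $\poly(n)$ bound.

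First I would analyze Alice's side. The parameter setup in step~1 is trivial. Within each of the $L$ levels: the block partition of $x$ costs $O(n)$; the construction of the sequence of $\eps$-self-matching hash functions is handled by Theorem~\ref{selfmatchhash}, which explicitly guarantees $\poly(n)$ time for producing the $O(\log n)$-bit description $u[i]$; computing $v[i]$ requires $l_i \leq n$ hash evaluations, each of which costs $\poly(n)$ by the second part of Theorem~\ref{selfmatchhash}; and computing the redundancy $z[i]$ via the algebraic geometry code of Theorem~\ref{agcode} is polynomial in the message length, hence $\poly(n)$. The computation of $z_{\mathsf{final}}$ is analogous. Finally, sending $u, z, v[1], z_{\mathsf{final}}$ is linear in the sketch size. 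Multiplying a $\poly(n)$ per-level bound by $L = O(\log n)$ still gives $\poly(n)$ for the entire algorithm.

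Next I would bound Bob's running time level by level. Partitioning $\tilde{x}$ into blocks and updating $\tilde{x}$ from the matching each cost $O(n)$. Decoding the AG code to recover $v[i]$ takes polynomial time by Theorem~\ref{agcode} (since the number of errors is guaranteed to be below half the minimum distance, a property one shows in the correctness analysis of the protocol; for the time-complexity claim here we only need polynomial-time decoding). The main non-trivial step is computing the maximum monotone matching between $\tilde{x}'$ and $y$ under the hash functions $h_1, \ldots, h_{l_i}$; by Lemma~\ref{dpformatch} this runs in $O(n^2 (t_h + \log n))$ time, where $t_h$ is the per-evaluation cost of a hash function. Theorem~\ref{selfmatchhash} guarantees $t_h = \poly(n)$, so the dynamic programming is also $\poly(n)$. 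The final-level AG decoding on the blocks of $\tilde{x}$ together with $z_{\mathsf{final}}$ is once again $\poly(n)$ by Theorem~\ref{agcode}. Summed over $L = O(\log n)$ levels, Bob's algorithm runs in $\poly(n)$.

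I do not expect any genuine obstacle here: every ingredient (the $\eps$-almost $\kappa$-wise independence generator, the AG code encoder/decoder, the matching dynamic program, and the self-matching hash-function construction) has already been stated with explicit polynomial-time guarantees. The only thing to track carefully is that the hash-evaluation cost $t_h$ from Theorem~\ref{selfmatchhash} is substituted correctly into the complexity of Lemma~\ref{dpformatch}, and that all parameters ($l_i$, $b_i$, $b^*$, the number of blocks for the AG code) are at most $\poly(n)$, which is immediate from the choices $b_i = n/(3 \cdot 2^i k)$ and $b^* = O(\log(n/k))$.
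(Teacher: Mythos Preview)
Your proposal is correct and follows essentially the same argument as the paper: bound each per-level step using Theorem~\ref{selfmatchhash}, Theorem~\ref{agcode}, and Lemma~\ref{dpformatch}, then aggregate over $L = O(\log(n/k))$ levels. The only cosmetic difference is that the paper attributes the $t_h = \poly(n)$ bound to Theorem~\ref{almostkwiseg} directly, whereas you (equally correctly) cite the evaluation clause of Theorem~\ref{selfmatchhash}.
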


\begin{proof}
We first consider Alice's algorithm. For the $i$-th level, $i\in [L]$, dividing $x$ into blocks takes time $O(n)$. Computing the description and doing evaluation of $\eps$-self-matching hash functions takes time $\poly(n)$ by Theorem \ref{selfmatchhash}.

The redundancy $z[i], i \in [L]$ and $z_{\mathsf{final}}$ can be computed in polynomial time by Theorem \ref{agcode}.

Thus the time complexity for Alice is $\poly(n)$.

Next we consider Bob's algorithm. Creating $\tilde{x}'$ at the beginning of each level takes $O(n)$ time. Decoding of $  h_1(\tilde{x}'[1])\circ h_2(\tilde{x}'[2])\circ \ldots \circ h_{l_i}(\tilde{x}'[l_i]) \circ z[i]$ takes $\poly(n)$ time by Theorem \ref{agcode}. By Lemma \ref{dpformatch}, computing the maximum matching between $x'$ and $y$ under $h_1, \ldots, h_{l_i}$ takes time $O(n^2 (t_h + \log n))$ where $t_h$ is the time complexity of evaluating any one of the hash functions. By Theorem \ref{almostkwiseg}, $t_h$ is $\poly(n)$, so computing the maximum matching takes $\poly(n)$. Decoding on the last level of $\tilde{x}$'s blocks and  $z_{\mathsf{final}}$ also takes polynomial time by Theorem \ref{agcode}.

So the overall running time for Bob is also $\poly(n)$.
\end{proof}

\begin{lemma}
\label{dIMS:communicationc}
The communication complexity  
is $O(k \log^2 \frac{n}{k})$.
\end{lemma}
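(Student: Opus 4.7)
The plan is to sum the lengths of the four pieces Alice transmits in Construction~\ref{dIMS}: the seed vector $u=(u[1],\dots,u[L])$, the per-level redundancies $z=(z[1],\dots,z[L])$, the first-level hash vector $v[1]$, and the final-level redundancy $z_{\mathsf{final}}$. First I would pin down the number of levels. Since $b_i = n/(3\cdot 2^i k)$ halves each level and we stop once $b_L = \Theta(\log(n/k))$, we get $L = O(\log(n/k))$.

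Next I would bound each per-level term. By Theorem~\ref{selfmatchhash} the seed $u[i]$ has length $O(\log n)$, and by Theorem~\ref{agcode} the redundancy $z[i]$ consists of $\Theta(k)$ symbols over an alphabet of $b^* = O(\log(n/k))$ bits, so $|z[i]| = O(k\log(n/k))$. Summing across the $L$ levels gives $|u| = O(L \log n) = O(\log n \log(n/k))$ and $|z| = O(L \cdot k\log(n/k)) = O(k \log^2(n/k))$. The $|z|$ bound is already in the required form. For the $|u|$ bound, I would invoke the elementary inequality $\log n = O(k \log(n/k))$, valid for $1 \le k \le \alpha n$ (from $k \log(n/k) - \log n = (k-1)\log(n/k) - \log k$, nonnegative for the relevant range), which promotes $O(\log n \log(n/k))$ to $O(k \log^2(n/k))$.

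Finally the two remaining pieces contribute only lower-order terms: $v[1]$ is a sequence of $l_1 = n/b_1 = 6k$ hash values of $b^* = O(\log(n/k))$ bits each, hence $|v[1]| = O(k \log(n/k))$; and $z_{\mathsf{final}}$ is $\Theta(k)$ blocks of $b_L = O(\log(n/k))$ bits each, also $O(k \log(n/k))$. Summing all four contributions yields total communication $O(k \log^2(n/k))$, as required.

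The proof is essentially an accounting exercise with no serious combinatorial obstacle; the only mild subtlety is confirming the inequality $\log n = O(k \log(n/k))$ so that the seed contribution does not dominate. Everything else is a direct substitution of parameters from Construction~\ref{dIMS} into the bounds already granted by Theorems~\ref{selfmatchhash} and~\ref{agcode}.
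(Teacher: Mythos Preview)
Your proof is correct and follows essentially the same approach as the paper: you decompose the communication into the same four components $u$, $z$, $v[1]$, $z_{\mathsf{final}}$, bound each using Theorems~\ref{selfmatchhash} and~\ref{agcode} exactly as the paper does, and absorb the seed contribution via the inequality $\log n = O(k\log(n/k))$, which the paper also invokes (stated there as $k\log\tfrac{n}{k}\ge\log n$ for $k\le\alpha n$). Your explicit algebraic justification $(k-1)\log(n/k)-\log k\ge 0$ is a bit more detailed than the paper's one-line remark, but the argument is otherwise identical.
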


\begin{proof}

For every $i\in [L]$, $|u[i]| = O(\log n)$ by Theorem \ref{selfmatchhash}. Thus the total number of bits of $u$ is $L \cdot O(\log n) = O( \log \frac{n}{k} \log n)$.

Also for every $i\in [L]$, by Theorem \ref{agcode} the number of bits in $z_i$ is $O(k \log \frac{n}{k})$.  So the total number of bits of $z$ is $O(k \log^2 \frac{n}{k})$.

Again by Theorem \ref{agcode}, $z_{\mathsf{final}}$  has  $O(k \log \frac{n}{k}) $ bits.

Note that $v[1]$ has $O(k \log \frac{n}{k}) $ bits, since there are $l_1 = O(k)$ blocks in the first level and the length of each hash value is $ O(\log \frac{n}{k})  $.

Thus the total communication complexity is $O( \log \frac{n}{k} \log n + k \log^2 \frac{n}{k}) = O( k \log^2 \frac{n}{k})$, since $k \log \frac{n}{k} \geq \log n$ when $k \leq \alpha n$.

\end{proof}

Next we show the correctness of the construction. We first establish a series of lemmas.
\begin{lemma}
\label{missbond}

For any $i \leq L-1$, if $ v[i]$ is correctly recovered, then in the $i$-th level the number of bad pairs in $w$ is at most $2k$.

\end{lemma}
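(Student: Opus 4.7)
The plan is to argue by contradiction, chaining together Lemma \ref{matchToSelfMatch} and the defining property of the $\eps$-self-matching hash functions supplied by Theorem \ref{selfmatchhash}. Recall that in the $i$-th level, Alice constructs $h_1, \ldots, h_{l_i}$ so as to be $\eps$-self-matching with respect to $x$ for $\eps = k/n$. The matching $w$ that Bob produces is, by construction, a monotone matching between $x'$ (the blocking of $x$ used in level $i$) and $y$ under these same $h_j$'s, because the recovery of $v[i]$ guarantees that the hash values he uses on Alice's side are exactly $h_j(x'[j])$.

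Assume for contradiction that $w$ contains at least $2k+1$ bad pairs. Since we are in the regime $\ED(x, y) \leq k$ (the assumption of the protocol), Lemma \ref{matchToSelfMatch} applies with $m = 2k+1$ and yields a self-matching $w'$ of $x$ (under the same $h_1, \ldots, h_{l_i}$) containing at least $m - k = k+1$ bad pairs. On the other hand, the $\eps$-self-matching property of $h_1, \ldots, h_{l_i}$, instantiated with $\eps = k/n$, states that every self-matching of $x$ has at most $\eps n = k$ bad pairs. This contradicts the existence of $w'$ with $k+1$ bad pairs, so $w$ has at most $2k$ bad pairs.

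There is no serious obstacle: the proof is essentially a bookkeeping exercise once the right $\eps$ is chosen. The only point one has to check carefully is the setting $\eps = k/n$ used in step~2.2 of Construction \ref{dIMS}, which makes $\eps n$ land at exactly $k$ and thus leaves the slack of $k$ that Lemma \ref{matchToSelfMatch} consumes. If instead one tried to push this any further (say aiming for $k$ bad pairs in $w$), the chain of inequalities would fail, which is why $2k$ is the natural bound to state here.
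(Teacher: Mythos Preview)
Your proof is correct and follows essentially the same approach as the paper's own proof: assume more than $2k$ bad pairs, invoke Lemma~\ref{matchToSelfMatch} to pass to a self-matching with more than $k$ bad pairs, and contradict the $\eps$-self-matching property with $\eps = k/n$. The paper adds a cosmetic step of extracting the bad pairs into a completely wrong self-matching, but this is not needed given the definition, and your version is if anything slightly cleaner in tracking the strict inequality ($k+1 > \eps n$).
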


\begin{proof}

We prove by contradiction.

Suppose there are more than $2k$ bad pairs in $w$.
By Lemma \ref{matchToSelfMatch}, there is a self-matching having at least $k$ bad pairs.\ By picking all the wrong pairs in this matching, we get
a completely wrong self-matching $\tilde{w}$ having size at least $k = \eps n$. This is a contradiction to the fact that $h_1, \ldots, h_{l_i}$ is a sequence of $\eps$-self-matching hash functions.

\end{proof}

Next we show that $w$ is large enough so that in each level Bob can recover many blocks of $x$ correctly.

\begin{lemma}
\label{lenofw}

For any $i \leq L-1$, in the $i$-th level, we have $|w| \geq l_i-k$.
\end{lemma}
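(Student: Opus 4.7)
The plan is to explicitly construct a monotone matching of size at least $l_i - k$ between $x'$ and $y$ under $h_1, \ldots, h_{l_i}$; since $w$ is the \emph{maximum} such matching, the bound then transfers to $|w|$. The matching will be built from an optimal edit-distance alignment between $x$ and $y$, using the simple fact that each insertion or deletion can spoil at most one block.

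More concretely, I would fix any alignment witnessing $\ED(x, y) \leq k$; this specifies, for each position $\iota \in [n]$ of $x$, either a position $\pi(\iota)$ of $y$ at which the character is preserved or that $\iota$ is deleted, together with the (at most $k$) positions of $y$ occupied by insertions, so that the total number of deletions plus insertions is at most $k$. Call a block $x'[j]$, covering positions $P_j = \{(j-1)b_i+1, \ldots, jb_i\}$, \emph{clean} if no position of $P_j$ is deleted and no insertion occurs strictly between two consecutive positions of $P_j$. Each deletion lies in exactly one block, and each insertion that lies strictly inside some $P_j$ lies in exactly one block (boundary insertions, i.e.\ those between $jb_i$ and $jb_i+1$, lie inside no $P_j$ and so spoil nothing). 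Hence the number of dirty blocks is at most the total number of edit operations, which is at most $k$, leaving at least $l_i - k$ clean blocks.

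For each clean block $x'[j]$, set $\rho_j = \pi((j-1)b_i+1)$. By cleanness, the characters of $x'[j]$ map consecutively (with no intervening insertions) to $y[\rho_j, \rho_j + b_i - 1]$, so $y[\rho_j, \rho_j + b_i - 1] = x'[j]$ as strings and therefore $h_j(x'[j]) = h_j(y[\rho_j, \rho_j + b_i - 1])$. Monotonicity of $\pi$ and the disjointness of the $P_j$'s ensure that the pairs $(j, \rho_j)$ taken over clean $j$ satisfy $j_1 < j_2 < \cdots$ together with $\rho_{j_l} + b_i - 1 < \rho_{j_{l+1}}$, i.e., the non-overlap condition of Definition~\ref{matchdef}. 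This exhibits a valid monotone matching of size at least $l_i - k$, yielding $|w| \geq l_i - k$.

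The argument is essentially bookkeeping and nothing hashing-specific is needed; the only subtle point is to charge each edit operation to at most one block, which requires noting that insertions that sit exactly between two blocks are harmless to both blocks' preservation. I expect no further difficulty.
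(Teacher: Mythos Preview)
Your proposal is correct and follows essentially the same approach as the paper: exhibit an explicit monotone matching of size at least $l_i-k$ by taking the blocks of $x'$ untouched by the $k$ edit operations, and conclude since $w$ is maximum. The paper states this in two sentences without spelling out the alignment or the clean/dirty bookkeeping; your version is a more careful execution of the same idea, including the observation that boundary insertions are harmless.
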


\begin{proof}
The $k$ edits which the adversary makes on $x$ can change at most $k$ blocks of $x'$. The remaining unchanged blocks induce a matching between $x' $ and $y$ of size at least $l_i-k$. Since $w$ is the maximum matching between $x'$ and $y$, $|w| \geq l_i-k$.

\end{proof}

\begin{lemma}
\label{dIMS:correct}
Bob computes $x$ correctly.
\end{lemma}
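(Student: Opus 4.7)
The plan is to argue by induction on the level index $i$, establishing two invariants: (I1) $v[i]$ is recovered correctly at the start of Bob's level-$i$ processing, and (I2) immediately after Bob's matching-and-fill step at level $i$, the number of level-$i$ blocks of $\tilde{x}$ that differ from the corresponding block of $x'$ is at most $3k$.

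For the base case $i=1$, invariant (I1) is immediate because Alice sends $v[1]$ directly in the sketch. Granted (I1), Lemma \ref{lenofw} gives $|w| \geq l_1 - k$, so at most $k$ positions of $x'$ are left unmatched; Lemma \ref{missbond}, whose hypothesis is exactly (I1), bounds the number of bad pairs in $w$ by $2k$. Every other position of $\tilde{x}'$ is filled by a good pair and therefore agrees with $x'$, so at most $k + 2k = 3k$ level-$1$ blocks of $\tilde{x}$ are incorrect, establishing (I2).

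For the inductive step, I would assume (I2) at level $i-1$ and derive both (I1) and (I2) at level $i$. Since level-$i$ blocks are obtained by splitting each level-$(i-1)$ block evenly into two halves, at the start of Bob's level $i$ at most $2 \cdot 3k = 6k$ of the $l_i$ level-$i$ blocks of $\tilde{x}$ disagree with $x'$. Consequently the vector $h_1(\tilde{x}'[1]) \circ \cdots \circ h_{l_i}(\tilde{x}'[l_i])$ differs from the true $v[i]$ in at most $6k$ coordinates of size $b^*$; appending $z[i]$ produces a received word within Hamming distance $6k$ of the Alice-side codeword, so the AG decoder of Theorem \ref{agcode}, whose code has distance $14k$ and therefore corrects up to $7k-1 \geq 6k$ symbol errors, returns the correct $v[i]$. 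This gives (I1) at level $i$, and then the same argument as in the base case (invoking Lemmas \ref{lenofw} and \ref{missbond} at level $i$) yields (I2) at level $i$. Finally, for the level-$L$ decoding, invariant (I2) at level $L-1$ combined with one further refinement bounds the number of incorrect level-$L$ blocks of $\tilde{x}$, and applying the AG decoder to these blocks together with $z_{\mathsf{final}}$ recovers $x$.

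The main obstacle is controlling the geometric propagation of errors across levels: each refinement step doubles the block count, which if left unchecked would blow past the decoding radius of the code used for $z[i]$. What buys us the induction is precisely the $\eps$-self-matching property of the hash functions from Theorem \ref{selfmatchhash} (packaged inside Lemma \ref{missbond}): it forces the number of bad pairs to stay at most $2k$ regardless of $l_i$, so that the bad-block count resets to $3k$ after each matching step rather than growing level after level. A secondary concern is that Bob's $\tilde{x}$ contains the placeholder symbol $*$ in positions never filled by any level, but these placeholders simply count as erroneous coordinates when applying the AG decoder, which is already accounted for by the $6k$ bound above.
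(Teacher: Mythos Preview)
Your proposal is correct and follows essentially the same approach as the paper: induct over levels, using Lemmas~\ref{lenofw} and~\ref{missbond} to cap the number of wrong blocks at $3k$ after each matching step, then observe that one refinement yields at most $6k$ symbol errors, which the distance-$14k$ AG code corrects so that the next $v[i]$ is recovered. Your formulation with the two explicit invariants (I1) and (I2) is just a cleaner packaging of the paper's own argument.
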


\begin{proof}
Alice can do every step in her algorithm correctly due to Theorem \ref{selfmatchhash} and \ref{agcode} since she only constructs $\eps$-self-matching hash functions, doing evaluation of these functions and computing redundancies of some sequences.

So the remaining is to show that Bob can compute $x$ correctly once he receives Alice's message.

We first show that for every level $i$, Bob can recover $v[i]$ correctly, by indcution.

For the first level, Bob can do it because $ v[1] $ is sent directly to him from Alice.

For level $i = 2, \ldots, L-1$, assume Bob gets $v[i-1]$ correctly. By Lemma \ref{missbond}, the matching $w$ has at most $2k$ bad pairs. By Lemma \ref{lenofw}, $|w| \geq l_i-k$. Thus $ |w| $ gives at least $l_i-k-2k = l_i - 3k$ correctly matched pairs of blocks. So according to $w$, Bob can recover at least $l_i-3k$ blocks of $x$ correctly. Thus in the $i$-th level, there are at most $3k \times 2 = 6k$ wrong blocks in $\tilde{x}$. So $h_1(\tilde{x}_1)\circ \ldots \circ h_{l_i}(\tilde{x}_{l_i}) \circ z[i]$ is a word in $(\{0,1\}^{b^*})^{l_i+\Theta(k)}$ having distance  at most $6k$ to a codeword of an $(l_i+ \Theta(k), l_i, \Theta(k))$-code. Let the distance of the code $\Theta(k)$ be at least $14k$  s.t. by Theorem \ref{agcode} the decoding algorithm can compute $v[i]$ correctly.

Note that for the last level, there are at least $n-3k$ correctly matched pairs of blocks. So there are at most $3k$ wrong blocks in $\tilde{x}$. The redundancy length $|z_{\mathsf{final}}| = \Theta(k)$. So $ \tilde{x} \circ z_{\mathsf{final}} $ has hamming distance at most $3k$ from a codeword $x \circ z_{\mathsf{final}}$ of an $(n+ \Theta(k) ,n, \Theta(k) )$-code with distance at least $8k$. Thus the decoding algorithm of Theorem \ref{agcode} can compute the message $x$ correctly.

\end{proof}

\begin{theorem}

\label{deterdocexc}
There exists a deterministic protocol for document exchange, having communication complexity (redundancy) $O(k \log^2 \frac{n}{k}) $, time complexity $\poly(n)$, where $n$ is the input size and $k$
is the edit distance  upper bound.

\end{theorem}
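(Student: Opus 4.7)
The plan is to invoke Construction~\ref{dIMS} as the candidate protocol and observe that the three lemmas preceding the theorem already contain all the ingredients. First, I would split into the two regimes stipulated in the construction: when $k > \alpha n$ for the fixed constant $\alpha$, Alice simply transmits $x$ verbatim, which trivially satisfies both complexity bounds since $n = O(k) = O(k \log^2(n/k))$ in that range. The interesting regime is $k \leq \alpha n$, where the layered IMS-style protocol with $L = O(\log(n/k))$ levels kicks in.

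For the interesting regime, I would string together the three already-proved lemmas. Correctness is Lemma~\ref{dIMS:correct}: it shows by induction on the level $i$ that Bob correctly recovers the hash vector $v[i]$, using the fact that the $\eps$-self-matching property (Lemma~\ref{missbond}) caps the number of bad pairs in the maximum matching $w$ by $2k$, that Lemma~\ref{lenofw} guarantees $|w| \geq l_i - k$, and that the Algebraic Geometry code of Theorem~\ref{agcode} (distance $14k$ at intermediate levels, $8k$ at the final level) can absorb the resulting $O(k)$ corruptions in the hash-value vector. The final decoding step applies the same AG code at block-size $b_L = O(\log(n/k))$ to recover $x$ from $\tilde{x}$ and $z_{\mathsf{final}}$.

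Next I would cite Lemma~\ref{dIMS:communicationc} for the sketch size. The bookkeeping there adds up four contributions: the level-wise hash-function descriptions $|u| = L \cdot O(\log n) = O(\log(n/k) \log n)$, the level-wise redundancies $|z| = L \cdot O(k \log(n/k)) = O(k \log^2(n/k))$, the initial hash vector $|v[1]| = O(k \log(n/k))$, and the final redundancy $|z_{\mathsf{final}}| = O(k \log(n/k))$. Since $k \log(n/k) \geq \log n$ throughout the regime $k \leq \alpha n$, everything collapses to $O(k \log^2(n/k))$. Finally, polynomial running time follows from Lemma~\ref{dIMS:timec}, whose only nontrivial inputs are the $\poly(n)$-time construction and evaluation of the $\eps$-self-matching hash functions (Theorem~\ref{selfmatchhash}), the $\poly(n)$-time AG code encoding/decoding (Theorem~\ref{agcode}), and the $O(n^2(t_h + \log n))$ dynamic programming for the maximum monotone matching (Lemma~\ref{dpformatch}).

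Because the statement of the theorem is essentially the conjunction of these three lemmas, I do not anticipate any genuine obstacle; the only spot that requires care is making explicit that the inductive correctness argument in Lemma~\ref{dIMS:correct} uses the self-matching guarantee of Theorem~\ref{selfmatchhash} with $\eps = k/n$, so that the $\eps n = k$ threshold on bad self-matches translates, via Lemma~\ref{matchToSelfMatch}, into the desired $2k$ bound on bad pairs of $w$ in each level, which in turn is what makes the chosen AG code distances $14k$ and $8k$ sufficient. With these pieces assembled, the theorem follows.
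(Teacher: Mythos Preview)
Your proposal is correct and follows exactly the paper's approach: invoke Construction~\ref{dIMS} and cite Lemmas~\ref{dIMS:timec}, \ref{dIMS:communicationc}, and \ref{dIMS:correct} for time, communication, and correctness respectively. Your write-up supplies more explanatory detail (the regime split, the roles of Lemmas~\ref{missbond} and~\ref{lenofw} inside the correctness induction, and the specific AG-code distances), but the logical structure is identical to the paper's three-line proof.
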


\begin{proof}

Construction \ref{dIMS} gives the deterministic protocol for document exchange. The communication complexity is $O(k \log^2 \frac{n}{k}) $ by Lemma \ref{dIMS:communicationc}. The time complexity is $\poly(n)$ by Lemma \ref{dIMS:timec}. The correctness is proved by Lemma \ref{dIMS:correct}.

\end{proof}

\section{Document exchange for a uniform random string}\label{sec:randdocexc}

In this section we prove the following theorem.

\begin{theorem} \label{randdocexmaintheorem}
	There exists a deterministic document exchange protocol for a uniformly random string with success probability $1 - 1/\poly(n)$ and
	redundancy size $O(k \log n)$.
\end{theorem}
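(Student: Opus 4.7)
The plan is to combine the structural properties of a uniform random string with a two-stage protocol, following the outline sketched in the techniques overview. First I will establish that with probability $1-1/\poly(n)$ a uniform random $x$ satisfies three properties simultaneously: (i) the $B$-distinct property for $B = O(\log n)$, i.e.\ every length-$B$ substring of $x$ is distinct; (ii) Property 1, that every interval of length $O(s 2^s \log n)$ contains an occurrence of the pattern $p = 1 \circ 0^{s-1}$ (a $p$-split point); and (iii) Property 2, that every interval of length $O(2^s \log n)$ contains a \emph{good} $p$-split point, meaning one whose successor is at distance $\geq 2^s/2$. These are standard concentration/union-bound arguments on independent random bits and on patterns of length $s = \log\log n + O(1)$. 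Together they guarantee that the induced block partition of $x$ (keeping only good split points) has every block of length between $2^s/2 = \Omega(\log n)$ and $\poly\log(n)$, with all blocks distinct because each contains a distinct $B$-prefix.

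Stage 1 builds on this partition. Alice forms a vector $V$ of length $2^B = \poly(n)$ indexed by the $B$-prefix of each block, storing in each nonempty entry the block's length and the $B$-prefix of the next block, using $O(\log n)$ bits per entry. Bob forms the analogous vector $V'$ from $y$ using the same pattern $p$ (resolving collisions as empty). The key combinatorial claim, which I would prove by charging each discrepancy to an edit operation that either destroys or creates a $p$-split point or disturbs a $B$-prefix, is that $V$ and $V'$ differ in at most $O(k)$ entries. Alice then sends the Reed--Solomon redundancy of $V$ over an alphabet of size $\poly(n)$ with distance $\Theta(k)$, costing $O(k \log n)$ bits; Bob recovers $V$ exactly. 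Bob then fills a tentative $\tilde x$ by locating, for each entry of $V$ in order, a substring of $y$ with the prescribed $B$-prefix and length. I would argue that after this step, at most $O(k)$ blocks of $\tilde x$ are wrong or missing, again by charging each error to one of the $k$ edits.

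Stage 2 repairs these $O(k)$ incorrect or missing blocks. Since each block has length only $\poly\log(n)$, I would run a modified IMS protocol that, at each level, subdivides each current block into $O(\log^{0.4} n)$ sub-blocks rather than two, reaching block size $O(\log n)$ in only $O(1)$ levels. At each level Alice transmits the description of a sequence of $\Omega(1)$-synchronization hash functions with output $O(\log^{0.5} n)$ bits, together with the AG-code redundancy of the hash-value packages (groups of $O(\log^{0.4} n)$ hashes, so each package has $O(\log n)$ bits); Bob uses dynamic programming to find a maximum monotone matching against $y$ and refills $\tilde x$. The synchronization property ensures that after each level the number of wrong or missing blocks remains $O(k)$, so the AG redundancy per level is $O(k \log n)$, and with $O(1)$ levels the total cost of Stage 2 is $O(k \log n)$. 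Adding the two stages and the $O(\log n)$-bit descriptions of the hash-function families gives the claimed $O(k \log n)$ redundancy, and every Alice/Bob step runs in $\poly(n)$ time.

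The hard part will be the construction and analysis of $\eps$-synchronization hash functions with output only $O(\log^{0.5} n)$ bits while retaining a description short enough to transmit. As indicated in the overview, I would split the functions into two components: an almost $\kappa$-wise independent part, with seed length $O(\log n)$ and output $O(\log^{0.5} n)$ bits, that handles the large-interval constraint via a union bound over dyadic interval scales (using Lemma \ref{largeIntvToSmallOne} to reduce to medium scales and Lemma \ref{matchToSelfMatch} style accounting for bad matches); and a per-block injective part, with output $O(\log\log n)$ bits, that rules out small-interval matchings by using the $B$-distinct property of $x$. The latter component has longer total description, but the sequence of per-block functions differs from Bob's locally-computed counterpart in only $O(k)$ packaged entries, so Alice transmits only $O(k \log n)$ bits of AG redundancy for it. Verifying that these two components combine into a genuine $\eps$-synchronization sequence with $\eps = \Omega(1)$, and that the matching-size bound $\tfrac{1+2\eps}{1-2\eps} k$ translates into the $O(k)$ damaged-block invariant needed to iterate, is where the technical heart of the proof lies.
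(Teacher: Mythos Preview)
Your proposal is correct and follows essentially the same approach as the paper's proof: the two-stage protocol (pattern-based partition with the $B$-prefix vector $V$ in Stage~I, then the constant-level IMS variant with $\eps$-synchronization hash functions built from a $\phi$-part and a $\theta$-part in Stage~II) and the accompanying redundancy accounting match the paper almost step for step, including the $\tfrac{1+2\eps}{1-2\eps}k$ bad-pair bound. The only minor discrepancy is that the paper's Property~2 is stated for intervals \emph{starting at a $p$-split point} rather than arbitrary intervals, but this does not affect your argument.
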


We will modify the protocol to obtain an error correcting code in Section~\ref{sec:InsdelCode}.

\subsection{String properties}
\begin{definition}[$p$-split point\cite{8022906}]
	For a string $p \in \{0, 1\}^s$ and $x \in \{0, 1\}^n$, a $p$-split point of $x$ is an index $1 \le i \le n - s + 1$ such that
	$x[i, i+s) = p$.
\end{definition}

\begin{definition}[next p-split point]\label{nextsplitpoint}
	Let $p$ and $x$ be two strings,
	and $i$ be a $p$-split point of $x$.
	Define the \emph{next $p$-split point} of $i$ to be the smallest $j$ such that $j$ is a $p$-split point and $j > i$.
	If such $j$ does not exist, we define the \emph{next $p$-split point} of $i$ to be $(n+1)$.
\end{definition}

We will use the following properties of a uniform random string.
\begin{theorem} \label{Bdistinctandinterval}
	For a uniform random string $x \in \{0, 1\}^n$, let $p = 1 \circ 0^{s-1}$ be a string of length $s$.
	There exist three integers $B_1 = O(s 2^s \log n), B_2 = O(2^s \log n)$ and $B = O(\log n)$ such that 
	the following properties hold with probability $1 - 1/\poly(n)$.
	
	\begin{description}
		\item [\namedlabel{uniformproperty1}{Property 1}] Any interval of $x$ with length $B_1$ contains a $p$-split point.
		
		\item [\namedlabel{uniformproperty2}{Property 2}] Any interval of $x$ with length  $B_2$ starting at a $p$-spit point contains a $p$-split point $i$ such that its next $p$-split point $j$ satisfies $j - i > 2^s / 2$.
		
		\item [\namedlabel{Bdistinct}{$B$-distinct}] Every two substrings of length $B$ at different positions of $x$ are distinct.
	\end{description}
\end{theorem}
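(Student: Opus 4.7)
The plan is to establish each of the three properties with failure probability $1/\poly(n)$ for an appropriate choice of $B, B_1, B_2$, and then combine them via a final union bound.

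For the $B$-distinct property, I would fix two start positions $i \neq j$ and set $d = |j - i|$. The equalities $x[i + \ell] = x[j + \ell]$ for $\ell = 0, \ldots, B - 1$ define a graph on the $B + d$ involved positions whose edges link indices at distance $d$; a direct case analysis shows this graph decomposes into exactly $d$ connected components regardless of whether the two windows overlap or not, so the equality probability is $2^d / 2^{B + d} = 2^{-B}$. A union bound over the $O(n^2)$ ordered pairs then forces $B = c \log n$ for any desired inverse-polynomial loss.

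For Property 1, I would fix a starting position $a$ and consider the $B_1/s$ positions $a, a + s, a + 2s, \ldots$ inside $[a, a + B_1)$. They give disjoint candidate $p$-occurrences (their dependency windows are pairwise disjoint), each an actual $p$-split point independently with probability $2^{-s}$, so the interval contains no $p$-split point with probability at most $(1 - 2^{-s})^{B_1/s} \leq e^{-B_1/(s 2^s)}$. Choosing $B_1 = \Theta(s 2^s \log n)$ drives this to $1/n^{c+1}$, and a union bound over the $O(n)$ intervals finishes.

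Property 2 is the tightest and is the main obstacle. I would fix $i_0$ and condition on $i_0$ being a $p$-split point, which pins $x[i_0, i_0 + s - 1] = p$ but leaves $x[i_0 + s], x[i_0 + s + 1], \ldots$ uniform. Set $L = \lceil 2^s / 2 \rceil$ and partition $[i_0 + s, i_0 + B_2)$ into consecutive chunks of length $L + s - 1$; the event ``no $p$-split point starts in the first $L$ positions of a chunk'' depends only on that chunk's bits, so chunk events are mutually independent. By Markov, this good event has probability at least $1 - L \cdot 2^{-s} \geq 1/2 - 2^{-s}$, a constant $\rho > 0$ for $s \geq 2$. Hence with $M = \Theta(B_2 / 2^s)$ chunks, the probability that every chunk is bad is at most $(1 - \rho)^M \leq 1/n^{c+1}$ once $B_2 = \Theta(2^s \log n)$. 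Given any good chunk $[a, a + L)$, let $i^*$ be the largest $p$-split point with $i^* < a$; this $i^*$ exists and lies in $[i_0, i_0 + B_2)$ because $i_0 < a$ is itself a $p$-split point, and the next $p$-split point after $i^*$ is at least $a + L$, yielding a gap of at least $L + 1 > 2^s / 2$. A union bound over the $n$ choices of $i_0$ finishes Property 2, and a final union bound over the three properties completes the theorem.

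The delicate step in Property 2 is the balance in $L$: it must be at least $2^s / 2$ to force the required gap, yet only $\Theta(2^s)$ so that each chunk is independently good with constant probability. This is precisely what lets $O(\log n)$ independent chunks suffice and yields $B_2 = O(2^s \log n)$, rather than the $O(2^{2s}\log n)$ one gets by insisting on both ``$p$ at the start of a chunk'' and ``no $p$ shortly afterwards'' as a single compound event per chunk.
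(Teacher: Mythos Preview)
Your proposal is correct. For the $B$-distinct property and Property~1 your arguments are essentially the paper's (the paper also splits a length-$B_1$ window into $B_1/s$ disjoint length-$s$ pieces, and does the same two-case collision computation for $B$-distinct).

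For Property~2 you take a genuinely different route. The paper argues via the renewal structure of the $p$-split points: it considers the successive gaps $x_t = i_{t+1}-i_t$, bounds $\Pr[x_t \le 2^s/2] \le 1/2$ by a union bound, asserts the $x_t$ are independent, and notes that a violation of Property~2 forces at least $2\log n$ such gaps to all be short. Your approach instead partitions $[i_0+s,i_0+B_2)$ into disjoint chunks of length $L+s-1$ and works with the explicitly independent events ``no $p$-split point starts in the first $L$ positions of this chunk.'' This is more self-contained: the independence of your chunk events is immediate from disjointness, whereas the independence of the gaps $x_t$ in the paper's argument relies implicitly on the specific structure of the pattern $p=10^{s-1}$ (it has no nontrivial self-overlap, so the process of $p$-occurrences is renewal). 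Both arguments hit the same $B_2=\Theta(2^s\log n)$ for the same reason---one gets $\Theta(\log n)$ independent constant-probability trials---but yours makes that reason more transparent and avoids the unstated justification of gap independence.
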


\begin{proof}
	For \ref{uniformproperty1}, let $B_1 = 2 s 2^s \log n$.
	For any fixed interval $I$ of length $B_1$, we divide $I$ into small intervals of length $s$.
	Then we have at least $2\cdot 2^s \log n$ small intervals.
	As the strings in each small interval are independent, the probability that $p$ does not appear in $I$ is at most 
	\begin{equation}
	\left(1 - \frac{1}{2^s}\right)^{2 \cdot 2^s \log n} < \left(\frac{1}{e}\right)^{2 \log n} < \frac{1}{n^2}
	\end{equation}
	Applying the union bound,
	the probability that such an interval $I$ exists is at most $n / n^2 = 1/n$.
	
	For \ref{uniformproperty2}, let $B_2 = 2^s \log n$. If there is a fixed interval $I = [i, i + B_2)$ such that $i$ is a $p$-split point,
	and $I$ violates the second property, i.e., the $p$-split points $i_1 < i_2 < i_3... $ in $I$ satisfy $i_{t+1} - i_t \le 2^s/2$ for $t = 1, 2, \dots$.
	Denote the random variables $x_t = i_{t+1} - i_t, t = 1, 2, \dots$. Applying the union bound, $\Pr[x_t \le 2^s / 2] \le (2^s / 2) /2^s = 1/2$.
	As $x_t$ are independent, and we have at least $B_2 / (2^s/2) = 2 \log n$ such random variables $x_t$,
	we have
\[\Pr[x_t \le 2^s / 2, \forall t \le 2\log n] \le (\Pr[x_t \le 2^s / 2])^{2 \log n} = (1/2)^{2 \log n} < \frac{1}{n^2}\]
	Applying the union bound, the probability that such an interval $I$ exists is at most $n / n^2 = 1/n$.
	
	For \ref{Bdistinct}, let $B = 3 \log n$. Consider any two distinct fixed indexes $1 \le i \le n - B + 1$ and $1 \le j \le n - B + 1$.
	W.L.O.G, we can assume $i < j$. There are two cases: $j - i \ge B$ and $j - i < B$.
	If $j - i \ge B$, then $\Pr[x[i, i+B) = x[j, j+B)] = \sum_{r\in\{0, 1\}^B} \Pr[x[j, j+B) = r \mid x[i, i+B) = r ] \Pr[x[i, i+B) = r] = 1/2^B$. If $j - i < B$, then $x[j, j+B)$ is equal to the $B$-prefix of $x[i, j) \circ x[i, j) \circ x[i, j) \dots$,
	which is completely determined by $x[i, j)$.
	So we have

\begin{align*}
&\Pr[x[i, i+B) = x[j, j+B)]\\
 = &\sum_{r \in \{0, 1\}^{j - i}} \Pr[x[j, j+B)
 = (r \circ r \circ r \dots )[1, B] \mid x[i, j) = r] \Pr[x[i, j) = r] \\
 = &1/2^B \sum_{r} \Pr[x[i, j) = r] \\
 = &1/2^B
\end{align*}
	By the union bound, the probability that $B$-distinct property doesn't hold is at most $n^2 / 2^B = n^2 / n^3 = 1/n$.
	
	By the union bound, the probability that all properties are satisfied is at least $1 - 3/n$.
\end{proof}

\subsection{$\eps$-synchronization hash function}
In this subsection,
we introduce the notation of
$\eps$-synchronization hash function.

\begin{definition}[$\eps$-synchronization hash function] \label{epssyncfuncdef}
	For any $0 < \eps < 1$ and $T, n' \in \N, T \geq B$, let $x$ be a \ref{Bdistinct} string of length $n = n' T$, and $x_T = (x[1, T], x[T+1, 2T], \dots, x[(n'-1)T+1, n])$ be a partition of $x$.
	Let $R$ be an integer, and $\Phi = (\Phi[1], \Phi[2], \dots, \Phi[n'])$ be a sequence of functions,
	where $\Phi[t]$ maps the B-prefix of $x_T[t]$ to $\{0, 1\}^R$ for any $t \in [n']$.
	
	If $y$ is a string of length $m$, $0 \le l_1 < r_1 \le n'$
	and $0 \le l_2 < r_2 \le m$ be some indexes,
	let $\mathsf{MATCH}_{\Phi}(x_T(l_1, r_1], y(l_2, r_2])$ denote the size of the maximum matching between $x_T(l_1, r_1]$
	and $y(l_2, \min(r_2+T, m+1))$ under $\Phi(l_1, r_1]$ (See Definition~\ref{matchdef}).
	
	$\Phi$ is a sequence of  $\eps$-synchronization hash function with respect to $x$, if it satisfies the following properties
	\begin{itemize}
		\item For any three integers $i, j, t$ where $i < Tj$ and $j < t$, denote $l_1 = t - j$ and $l_2 = Tj - i$.
		\begin{equation}
		\mathsf{MATCH}_{\Phi}(x_T(j, t], x(i, Tj]) < \eps \left(
		l_1 + \frac{l_2}{T}
		\right) \label{eq-sync-1}
		\end{equation}
		
		\item For any three integers $i, j, t$ where $i < j$ and $t > T(j-1)+1$, denote $l_1 = j - i$ and $l_2 = t - T(j-1) - 1$.
		\begin{equation}
		\mathsf{MATCH}_{\Phi}(x_T(i, j], x(T(j-1)+1, t]) < \eps \left(l_1 + \frac{l_2}{T}
		\right)
		\label{eq-sync-2}
		\end{equation}
	\end{itemize}
\end{definition}

%

\begin{lemma}\label{philemma}
Let $0<\eps <1$ be a constant.
Let $x\in \{0,1\}^n$ be \ref{Bdistinct} and $x_T$ be $x$ parsed into blocks of length $T$ with $T \geq B$. There exists $ \Phi = (\phi[1], \ldots, \phi[m])$ where $m = n/T, \forall t\in [m], \phi[t]: \{0,1\}^{B} \rightarrow \{0,1\}^R, R = \Theta(\log^{0.5} n + \log T)$ s.t. for any $0\leq i_1 < i_2 \leq m, 0\leq j_1 < j_2 \leq n, l_1 = i_2-i_1, l_2 = j_2-j_1, l_1 + l_2/T \geq   \log^{0.6} n  $, when the overlap of $ x(i_1 T, i_2 T ] $ and $x(j_1, \min(j_2 + T - 1, m)]$ has less than $T$ bits,
\[ \mathsf{MATCH}_{\Phi}(x_T(i_1, i_2], x(j_1, j_2]) < \eps \left(
		l_1 + \frac{l_2}{T}
		\right).  \]

There is an algorithm s.t. given $n\in \mathbb{N}$ and $x$, it can compute a description of $\Phi$ in polynomial time, where the length of the description is $O(\log n)$. Also, there is an algorithm s.t. given the description of $\Phi$, $i\in [m]$ and an input in $\{0,1\}^B$ can compute the corresponding output of $\phi_i$ in polynomial time.

\end{lemma}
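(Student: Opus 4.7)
The plan is to derandomize a natural probabilistic construction, mirroring the approach used for $\eps$-self-matching hash functions in Construction~\ref{selfmatchhashconstruct}, but with the extra slack $l_1 + l_2/T \geq \log^{0.6} n$ allowing a much smaller output range. Concretely, I would sample $\Phi$ from an $\eps_g$-almost $\kappa$-wise independent generator $g: \bit^d \to \bit^{m \cdot 2^B \cdot R}$ (Theorem~\ref{almostkwiseg}) with $\eps_g = 1/\poly(n)$ and $\kappa = \Theta(\log n)$, which has seed length $d = O(\log n)$, then parse the output as $\phi[t](a)$ for $t \in [m]$ and $a \in \bit^B$. The description is the seed $u$, and by Theorem~\ref{almostkwiseg} any single output bit, and hence any $\phi[t](a)$, is computable in $\poly(n)$ time. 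Exhaustively searching the $2^d = \poly(n)$ seeds gives $\Phi$ in polynomial time, provided a good seed exists.

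Existence of a good seed will be established by the probabilistic method in three steps. Fix a candidate violation: indices $(i_1, i_2)$, $(j_1, j_2)$ with $l_1 + l_2/T \geq \log^{0.6} n$, overlap less than $T$, and a matching $w$ of size at least $\eps(l_1 + l_2/T)$ under $\Phi$. \emph{Step~1 (reduce to bad matches).} By the \ref{Bdistinct} property, any good pair $(i,j) \in w$ forces $j = (i-1)T + 1$, and positions of this form are spaced exactly $T$ apart, so the overlap-less-than-$T$ hypothesis admits at most one good pair in $w$. Hence $w$ contains at least $\eps(l_1 + l_2/T) - 1 \geq (\eps/2)(l_1 + l_2/T)$ pairs that are hash collisions on distinct $B$-prefixes, where the last inequality uses $l_1 + l_2/T \geq \log^{0.6} n$. \emph{Step~2 (zoom in).} Set $m^* = \Theta(\log n / R)$. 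Apply Lemma~\ref{largeIntvToSmallOne} to the sub-sequence of bad pairs: we obtain sub-intervals of combined size $l_1'' + l_2''/T \leq 4m^* / \eps$ containing a \emph{completely wrong} matching of size $m^*$, whose paired $B$-prefixes are pairwise distinct by \ref{Bdistinct}.

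\emph{Step~3 (union bound).} A fixed size-$m^*$ completely wrong matching is realized under a random $\Phi$ with probability at most $2^{-m^* R} + 2^{m^* R}\eps_g$, by the almost $\kappa$-wise independence applied to the relevant $2 m^* R \leq \kappa$ output bits. The number of sub-interval pairs is $\poly(n)$, and inside each pair the number of size-$m^*$ matchings is bounded by $\binom{l_1''}{m^*}\binom{l_2''/T}{m^*} \leq (O(1/\eps))^{2m^*}$. Since $\eps$ is a constant, $m^* R = \Theta(\log n)$, and $\eps_g$ can be chosen as any sufficiently small $1/\poly(n)$, the total failure probability is bounded by
\[ \poly(n) \cdot (O(1/\eps))^{2m^*} \cdot \bigl(2^{-m^* R} + 2^{m^*R}\eps_g\bigr) \leq 1/\poly(n) < 1, \]
so a good seed exists.

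The main obstacle is Step~1: converting the requested bound on the \emph{total} matching size into a bound on bad pairs only. This is what lets us get away with the very small output length $R = \Theta(\log^{0.5} n + \log T)$, since only hash-collision events need to be controlled probabilistically. Both the \ref{Bdistinct} hypothesis on $x$ and the overlap-less-than-$T$ hypothesis are essential here; without them one could have arbitrarily many identity matches that no hash function can break. Everything else is standard derandomization via almost $\kappa$-wise independence plus the reduction from Lemma~\ref{largeIntvToSmallOne}, already used for Theorem~\ref{selfmatchhash}.
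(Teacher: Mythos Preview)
Your proposal is correct and follows essentially the same route as the paper: sample $\Phi$ from an $\eps_g$-almost $\kappa$-wise independence generator with seed length $O(\log n)$, reduce a hypothetical violation to a small pair of intervals via Lemma~\ref{largeIntvToSmallOne} with target size $\ell=\Theta(\log n/R)$, and finish with a union bound over $\poly(n)$ interval pairs times $(O(1/\eps))^{2\ell}=\poly(n)$ matchings. The only cosmetic difference is your Step~1: the paper observes that the overlap-less-than-$T$ hypothesis together with \ref{Bdistinct} actually forces \emph{zero} good pairs (a good pair would place an entire length-$T$ block inside the overlap), so it searches directly for any size-$\ell$ matching in the small windows rather than first stripping off a possible good pair.
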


\begin{proof}

Let $g:\{0,1\}^{d} \rightarrow \{0,1\}^{R m 2^B}$ be an $\eta $-almost $\kappa$-wise independence generator from Theorem \ref{almostkwiseg}, where $\eta  = 1/\poly(n), \kappa = O(\frac{\log n}{\eta } R)$, $d = O(\log n)$.
%

Let's fix $ \ell  = \Theta(\frac{\log n}{  R})$.
We view the output of $g$ as in $(\{0,1\}^R)^{[m] \times \{0,1\}^B}$. Then we exhaustively search a seed $u$ for $g$ s.t. for any length $l^*_1  $ interval $x_T(i^*_1, i^*_2]$ and any length $l^*_2 $ interval $x(j^*_1, j^*_2]$ where $l^*_1 + l^*_2/T \leq \frac{3 \ell}{\eps} $ and the overlap between the two intervals has less than $T$ bits, there does not exist a matching of size $\ell$ between the two intervals under  hash functions $\Phi(i^*_1 , i^*_2]$.
Finally, let $ \phi[t](\cdot) = g(u)[t][\cdot]$ for every $ t\in [m]$.
%

Now we show the correctness of the construction by contradiction. Suppose there are $x_T(i_1, i_2], x(j_1, j_2]$   s.t. $ \mathsf{MATCH}_{\Phi}(x_T(i_1, i_2], x(j_1, j_2]) \geq \eps  ( l_1 + \frac{l_2}{T} ) \geq \eps  \log^{0.6} n  \geq \ell$ and the overlap between $ x(i_1 T, i_2 T ] $ and $x(j_1, j_2]$ has length less than $T$. Then the match is actually between $x_T(i_1, i_2]$ and $x(j_1, \min(j_2 + T - 1, m)]$.
By Lemma \ref{largeIntvToSmallOne}, there exist intervals $x_T(i^*_1, i^*_2]$ with length $l^*_1  $ and $x(j^*_1, j^*_2]$ with length $l^*_2 $ s.t. $l^*_1 + l^*_2/T \leq \frac{2 \ell}{\eps(l_1 + \frac{l_2}{T})} (l_1 + \frac{l_2 + T-1}{T}) \leq \frac{3\ell}{\eps}$ and  there is a matching of size $\ell$ between the two intervals. Also note that by Lemma \ref{largeIntvToSmallOne},  $x_T(i^*_1, i^*_2]$ is a subinterval of $x_T(i_1, i_2]$ and $ x(j^*_1, j^*_2] $ is a subinterval of $x(j_1, \min(j_2 + T - 1, m)]$. So the overlap between $x_T(i^*_1, i^*_2]$ and $ x(j^*_1, j^*_2] $ has less than $T$ bits. This is a contradiction to our choice of $u$.

To finish the proof,
we need to show that there exists such a $u$. We show that a uniform random choice of $u$ satisfies the requirement with high probability. Fix $i^*_1, i^*_2, j^*_1, j^*_2$, where $ i^*_2- i^*_1 = l^*_1$ and $j^*_2 - j^*_1 = l^*_2$. Fix a sequence of pairs
$w= ((\rho_1, \rho'_1), \ldots, (\rho_{\ell}, \rho'_{\ell}))$
between $x_T(i_1^*, i_2^*], x(j_1^*, j_2^*]$,  the probability that $w$ is a matching under $\Phi$ is
\begin{equation}
\begin{split}
	 &\Pr \left [ \forall t^*\in [\ell], \phi[{t^*}]\left(x_T[\rho_{t^*}] \right) = \phi[t^*]\left( x[\rho'_{t^*}, \rho'_{t^*} + T-1] \right) \right] \\
\leq & \sum_{a_1, a_2, \ldots, a_{\ell}\in \{0,1\}^R} \Pr\left [ \forall t^*\in [\ell], \phi[t^*]\left(x_T[\rho_{t^*}]\right) = \phi[t^*]\left(x[\rho'_{t^*}, \rho'_{t^*} + T-1]\right) = a_{t^*}\right ] \\
\leq & 2^{R \ell} ( 2^{-2R\ell} + \eta  )\\
\leq & 1/\poly(n),
\end{split}
\end{equation}
as long as $\eta$ is small enough. Note that the overlap between  $x(i_1^*T, i_2^*T], x(j_1^*, j_2^*]$ has less than $T$ bits. So for every $t^* \in [n]$, the $B$-prefix of $  x_T[\rho_{t^*}]$ and $  x[\rho'_{t^*}, \rho'_{t^*} + T-1] $ are not equal.

There are at most $ {l^*_1 \choose \ell} {l_2^* \choose \ell} \leq (2e/\eps)^{\ell}(2e T/\eps)^{\ell}  = (2e/\eps)^{O(\frac{\log n}{  R})}(2e T/\eps)^{O(\frac{\log n}{  R})} = \poly(n)$ different matchings between  $x_T(i^*_1, i^*_2]$ and $x(j^*_1, j^*_2]$. The total number of different  $i^*_1, i^*_2, j^*_1, j^*_2$ is at most $\poly(n)$. As long as $\eta$ is small enough, by a union bound, a uniform random choice of $u$ satisfies the requirement with probability at least $1-1/\poly(n)$.

The time complexity of this procedure is $\poly(n)$. This is because in the exhaustive search we try $ O(n^2l^*_1 l^*_2) = \poly(n) $ pairs of intervals and for each pair we try $ {l^*_1 \choose \ell} {l_2^* \choose \ell} \leq (2e/\eps)^{\ell}(2e T/\eps)^{\ell} = \poly(n)$ different matchings. Moreover, we try at most $2^d = \poly(n)$ number of different seeds to find $u$.

\end{proof}

\begin{lemma}\label{thetalemma}
 	Let $S$ be a subset of $\{0, 1\}^B$ of size $m$.
	There exists a map $\theta$ from $\{0, 1\}^B$ to $\{0, 1\}^{R_{\theta}=O(\log m)}$ such that
	$\theta$ restricted on $S$ is injective and $\theta$ can be represented in $d  = O(\log B) + O(\log m)$ bits.
	In addition, there is an algorithm s.t. given $B, m, S$, it can find such a $\theta$ whose binary representation has the smallest lexicographical order in time $\poly(B, m)$; given $B, m, \theta$ and an input in $\{0,1\}^B$, the output of $\theta$ can be computed in time  $ \poly(B, m) $.

\end{lemma}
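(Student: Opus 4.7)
\medskip

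\noindent\textbf{Proof proposal.} I would construct $\theta$ as a composition of two hash layers: a first modular hash that collapses the domain $\{0,1\}^B$ to a range of size $\poly(m,B)$ while staying injective on $S$, and a second universal hash that further compresses the image to $O(\log m)$ bits without losing injectivity.

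For the first layer, view each $x \in \{0,1\}^B$ as an integer in $[0, 2^B)$ and set $\theta_1(x) = x \bmod p_1$ for a prime $p_1$ to be chosen in $[N_1, 2N_1]$ with $N_1 = C_1\, m^2 B \log(mB)$ for a sufficiently large constant $C_1$. Then $\theta_1(x)=\theta_1(y)$ for distinct $x,y \in S$ iff $p_1 \mid (x-y)$. Since $|x-y| < 2^B$, the number of distinct prime divisors of $x-y$ is at most $B$, so the number of ``bad'' primes across all $\binom{m}{2}$ differences is at most $\binom{m}{2} B$. By the prime number theorem, the interval $[N_1, 2N_1]$ contains $\Omega(N_1/\log N_1)$ primes, which by our choice of $N_1$ exceeds $\binom{m}{2} B$, so a valid $p_1$ exists.

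For the second layer, fix a prime $p_2 \in (2N_1, 4N_1]$ (exists by Bertrand's postulate) and let $M$ be the smallest power of two larger than $2\binom{m}{2}$, so $M = \Theta(m^2)$ and $\log_2 M = O(\log m)$. For a seed $a \in [p_2]$, define $\theta_{2,a}(y) = \bigl((a\cdot y) \bmod p_2\bigr) \bmod M$. By Carter--Wegman universality, for distinct $y,z \in [p_2]$ we have $\Pr_a[\theta_{2,a}(y)=\theta_{2,a}(z)] \le 1/M$. Applying a union bound over the $\binom{m}{2}$ pairs in $\theta_1(S)$ (which has size $m$ by injectivity of $\theta_1$) shows that some $a$ makes $\theta_{2,a}$ injective on $\theta_1(S)$. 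Set $\theta(x) = \theta_{2,a}(\theta_1(x))$; then $\theta$ is injective on $S$, its range is $[M]$ so $R_\theta = O(\log m)$, and the description consists of the pair $(p_1, a) \in [2N_1]\times[p_2]$, which fits in $O(\log N_1) = O(\log B) + O(\log m)$ bits.

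For the algorithmic claim, enumerate all binary strings of length $d = O(\log B)+O(\log m)$ in lexicographic order; each encodes a candidate pair $(p_1,a)$ (reject if $p_1$ is not prime or out of range, which is checkable in $\poly(B,\log m)$ time). For each accepted candidate, evaluate $\theta$ on every element of $S$ in $\poly(B,m)$ time and test whether the $m$ outputs are distinct; return the first candidate that passes. There are $2^d = \poly(B,m)$ candidates and each test takes $\poly(B,m)$, so the overall running time is $\poly(B,m)$. Given a description $(p_1,a)$ and any $x\in\{0,1\}^B$, computing $\theta(x)$ is a constant number of modular operations on integers of bit-length $O(B)$, hence $\poly(B,m)$ time. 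The only nontrivial step is the bound on the number of bad primes in the first layer and the universality calculation in the second layer; both are standard, and once set up the existence argument is a direct union bound, with no serious obstacle.
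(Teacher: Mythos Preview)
Your proof is correct and takes a genuinely different route from the paper. The paper builds $\theta$ directly from an $\eps$-almost $\kappa$-wise independence generator (Theorem~\ref{almostkwiseg}) with $\kappa = 2R_\theta$ and $\eps = 1/\poly(m)$: it views the generator's output as a truth table in $(\{0,1\}^{R_\theta})^{\{0,1\}^B}$, shows by a union bound over the $\binom{m}{2}$ pairs that a random seed gives an injective map on $S$ with probability at least $1 - 1/m$, and then brute-forces the $2^d = \poly(B,m)$ seeds. Your construction instead composes a prime-modulus reduction (using the bound on the number of prime divisors of a $B$-bit integer) with a Carter--Wegman layer, and brute-forces over the pair $(p_1,a)$.

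Both approaches yield the same description length $O(\log B + \log m)$ and the same running-time bounds. Your argument is more elementary and self-contained, relying only on basic number theory and universal hashing rather than the Alon--Goldreich--H{\aa}stad--Peralta generator; the paper's argument has the advantage of reusing machinery already set up (Theorem~\ref{almostkwiseg}) and being uniform with the other hash constructions in the paper. One small caveat: the family $y \mapsto ((a y)\bmod p_2)\bmod M$ without an additive shift is only near-universal (collision probability $\le 2/M$ rather than $1/M$ in general), and you should take $a \in \{1,\ldots,p_2-1\}$ rather than all of $[p_2]$; neither point affects the asymptotics, but you may want to either add the $+b$ term or double $M$ to make the union bound clean.
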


\begin{proof}

Let $g:\{0,1\}^{d} \rightarrow \{0,1\}^{R_{\theta} 2^B }$ be an $\eta$-almost $2R_{\theta}$-wise independence generator from Theorem \ref{almostkwiseg}, where $\eta = 1/\poly(m)$.
Regard the output of $g$ as a sequence in $(\{0,1\}^{R_{\theta}})^{\{0,1\}^B}$.
We exhaustively find $u\in \{0,1\}^d$ s.t. $\theta$ is injective restricting on $S$, where $ \theta(\cdot) = g(u)[\cdot] $.
It remains to show that there exists such a $u$. Assume $u$ is drawn from $U_d$. For each fixed pair of distinct elements $ a, b\in S $,
\[
	\Pr[\theta(a) = \theta(b)] = \sum_{v \in \{0,1\}^{R_{\theta}}}\Pr[\theta(a) = \theta(b) = v] \leq 2^{R_{\theta}}(2^{-2R_{\theta}} + \eta) \leq \frac{1}{m^3},
\]
as long as $\eta$ is small enough. Note that there are at most $m^2$ such pairs. By a union bound, with probability $1-1/m$, $\theta$ is injective restricted on $S$.

Since $g$ is highly explicit by Theorem \ref{almostkwiseg}, the evaluation of $\theta$ can be computed in time $\poly( B,  m )$.

Note that the time complexity for finding $\theta$ is $\poly(B, m)$ because the exhaustive search tries $O(2^d)$ different seeds. Checking whether $\theta$ is injective on $S$ takes time $O(m^2 ) \poly(B, m) = \poly(B, m)$.

\end{proof}

\begin{construction} \label{epssynchashfunctionconstruction}
For any $0 < \eps < 1$ and $T \in \N, T \geq B$, let $x$ be a \ref{Bdistinct} string of length $n = Tn'$, and
$x_T = (x[1, T], x[T+1, 2T], \dots, x[(n'-1)T+1, n])$ be a partition of $x$.
	
	For any $t \in [n']$,
	let $S_t$ be the set containing all the substrings $x[u, u+B)$ satisfying $|T(t-1) + 1 - u| < T\cdot \log^{0.6} n$ and $1 \le u \le n - B + 1$.
	Applying Lemma~\ref{thetalemma},
	we choose the $\theta[t]$ whose binary representation has the smallest lexicographical order.
	Let $\phi$
	be a series of functions generated in Lemma~\ref{philemma}. For any $t \in [n']$, let
	$\Phi[t] = (\phi[t], \theta[t])$, and define $\Phi[t](s) = (\phi[t](s), \theta[t](s))$ for all $s \in \{0, 1\}^B$.
\end{construction}

\begin{theorem}
	The function sequence $\Phi$ in Construction~\ref{epssynchashfunctionconstruction} is a sequence of $\eps$-synchronization hash functions  with respect to $x$.
\end{theorem}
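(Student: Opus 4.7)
The plan is to verify both properties in Definition~\ref{epssyncfuncdef} by splitting on whether $L := l_1 + l_2/T$ is at least $\log^{0.6} n$ (the \emph{large} regime) or strictly less (the \emph{small} regime). The bridge between the two regimes is the observation that $\Phi[t] = (\phi[t], \theta[t])$, so any pair matched under $\Phi$ is in particular matched under $\phi$ and under $\theta$; hence $\match_{\Phi} \le \min(\match_{\phi}, \match_{\theta})$, and it suffices to defeat either component in each regime.

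For the large regime I would invoke Lemma~\ref{philemma} applied to $\phi$. The only nontrivial hypothesis is that the relevant overlap has fewer than $T$ bits. For Property~1, the overlap of $x(jT, tT]$ (corresponding to $x_T(j, t]$) with $x(i, \min(Tj + T - 1, n)]$ starts at $jT + 1$ (since $i < Tj$) and ends at $\min(Tj + T - 1, n)$, giving length at most $T-1$; Property~2 admits a symmetric computation with the overlap of $x(iT, jT]$ and $x(T(j-1)+1, \min(t + T - 1, n)]$ again spanning length at most $T - 1$. Lemma~\ref{philemma} then yields $\match_{\phi} < \eps L$, so $\match_{\Phi} < \eps L$.

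For the small regime I would show that no pair can be matched under $\Phi$ at all, so $\match_{\Phi} = 0$. Fix a hypothetical pair $(i_k, j_k)$. A direct arithmetic bound using $L < \log^{0.6} n$ shows $|T(i_k - 1) + 1 - j_k| < T L < T \log^{0.6} n$, so both the positions $T(i_k - 1) + 1$ (the start of the $B$-prefix of $x_T[i_k]$) and $j_k$ lie inside the window defining $S_{i_k}$. Injectivity of $\theta[i_k]$ on $S_{i_k}$ from Lemma~\ref{thetalemma} then forces the two length-$B$ substrings to agree, and the \ref{Bdistinct} property of $x$ forces $j_k = T(i_k - 1) + 1$. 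But the matching constraints strictly separate these quantities: for Property~1, $i_k \geq j+1$ gives $T(i_k - 1) + 1 \geq Tj + 1 > Tj \geq j_k$, and for Property~2, $i_k \leq j$ gives $T(i_k - 1) + 1 \leq T(j-1) + 1 < T(j-1) + 2 \leq j_k$, each contradicting the forced equality.

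The main obstacle is the position bookkeeping in the small regime: correctly tracking $j_k$'s range coming from the $\min(r_2 + T, n+1)$ extension in Definition~\ref{epssyncfuncdef}, and making sure the bound $|T(i_k - 1) + 1 - j_k| < TL$ really falls strictly inside the $T \log^{0.6} n$ window used to define $S_{i_k}$. Boundary cases near the end of $x$ (where the $\min$ collapses to $n+1$) require a brief sanity check but do not affect the strict separation between $j_k$ and $T(i_k-1)+1$ that drives the contradiction. Once Property~1 is laid out carefully, Property~2 follows by an essentially symmetric computation.
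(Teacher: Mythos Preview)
Your proposal is correct and follows essentially the same approach as the paper: split on whether $l_1 + l_2/T$ is at least $\log^{0.6} n$, use Lemma~\ref{philemma} for the large regime via $\match_{\Phi}\le\match_{\phi}$, and use injectivity of $\theta[i_k]$ on $S_{i_k}$ together with \ref{Bdistinct} to force $\match_{\Phi}=0$ in the small regime. Your write-up is in fact more careful than the paper's in two places: you explicitly verify the ``overlap $<T$ bits'' hypothesis of Lemma~\ref{philemma} (the paper invokes the lemma without checking it), and you spell out the arithmetic $|T(i_k-1)+1-j_k|<TL$ and the strict separation $T(i_k-1)+1\neq j_k$, whereas the paper compresses this to the single observation that the whole interval $(i,Tk]$ has length $Tl_1+l_2<T\log^{0.6}n$ so every relevant $B$-substring lands in $S_t$.
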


\begin{proof}
	For three integers $i, j, k$ where $i < Tj$ and $j < k$, denote $l_1 = k - j$ and $l_2 = Tj - i$.
	
	If $l_1 + l_2 / T \ge \log^{0.6} n$, by Lemma~\ref{philemma},
	Equation~\ref{eq-sync-1} holds.
	As every matching under $\Phi$ must be a matching under $\phi$, we have
	\begin{equation}
	\mathsf{MATCH}_{\Phi}(x_T(j, k], x(i, Tj]) \le
	\mathsf{MATCH}_{\phi}(x_T(j, k], x(i, Tj]) \le \eps\left(l_1 + \frac{l_2}{T}\right).
	\end{equation}
	
	If $l_1 + l_2 / T < \log^{0.6} n$, the total length of the interval $(i, Tk]$ is $Tk - i = T l_1 + l_2 < T\log^{0.6} n$. From the construction of $(\theta[t])_{t \in [n']}$, $\theta[t], (j < t \le k)$ maps the substrings of length $B$ in $(i, Tk]$ into different values. Hence we have $\mathsf{MATCH}_{\Phi}(x_T(j, k], x(i, Tj]) = 0$.
	
	Thus Equation~\ref{eq-sync-1} holds for all possible $l_1$ and $l_2$. By symmetry Equation~\ref{eq-sync-2} also holds.
\end{proof}

\begin{lemma}\label{epshashlemma}
	For any $0 < \eps < 1$ and $T \in \N, T \geq B$, let $x$ be a \ref{Bdistinct} string of length $n = Tn'$, and
	$x_T = (x[1, T], x[T+1, 2T], \dots, x[(n'-1)T+1, n])$ be a partition of $x$.
	Let $\Phi = (\Phi[1], \Phi[2], \dots \Phi[n'])$ be a sequence of $\eps$-synchronization hash functions with respect to $x$, and $\Pi$ be a matching between $x_T$ and $x$ under $\Phi$.
	We say a pair $(i, j) \in \Pi$ is a good pair, if $x_T[i] = x[j, j+T)$. Otherwise we say it is a bad pair.
	Let $g$ be the number of good pairs in $\Pi$,
	and $b$ be the number of bad pairs in $\Pi$.
	Then
	\begin{equation*}
	b < 2\eps(n' - g).
	\end{equation*}
\end{lemma}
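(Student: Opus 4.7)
The plan is to split the bad pairs by shift direction and bound each half via the two (essentially symmetric) $\eps$-synchronization properties. Define
\[
b_R = |\{(i,j) \in \Pi : (i,j) \text{ bad},\ j > T(i-1)+1\}|, \quad b_L = |\{(i,j) \in \Pi : (i,j) \text{ bad},\ j < T(i-1)+1\}|.
\]
Any pair with $j = T(i-1)+1$ automatically satisfies $x[j, j+T) = x_T[i]$, so it is good by definition; hence $b = b_L + b_R$. It therefore suffices to prove $b_R < \eps(n'-g)$, since $b_L < \eps(n'-g)$ follows by the symmetric argument using property~\ref{eq-sync-1} in place of property~\ref{eq-sync-2}.

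For $b_R$, the first key observation is that for any admissible parameters, property~\ref{eq-sync-2} bounds a matching in $x_T \times x$ whose captured pairs $(i^*, j^*)$ automatically satisfy $j^* > T(i^*-1)+1$, so only right-shifted pairs can appear and good pairs are never captured. I would then greedily decompose the right-shifted bad pairs into groups $G_1, G_2, \ldots$ by scanning from right to left: each group starts with its rightmost element $(i_{b_\alpha}, j_{b_\alpha})$ and absorbs earlier right-shifted pairs $(i_l, j_l)$ while $j_l > T(i_{b_\alpha}-1)+1$, closing and starting a fresh group as soon as this fails. By construction, a single invocation of property~\ref{eq-sync-2} with block window $(i'_\alpha, i_{b_\alpha}]$ and position bound $j_{b_\alpha}$ (where $i'_\alpha$ is taken as the rightmost index of the next group to the left, or $0$) captures every pair in $G_\alpha$, yielding
\[
|G_\alpha| < \eps\bigl(l_1^{(\alpha)} + l_2^{(\alpha)}/T\bigr), \quad l_1^{(\alpha)} = i_{b_\alpha} - i'_\alpha, \quad l_2^{(\alpha)} = j_{b_\alpha} - T(i_{b_\alpha}-1) - 1.
\]

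Summing over $\alpha$ gives $b_R < \eps \sum_\alpha (l_1^{(\alpha)} + l_2^{(\alpha)}/T)$, and the proof finishes by checking that both sums are at most $n'-g$. With the telescoping choice of $i'_\alpha$, the intervals $(i'_\alpha, i_{b_\alpha}]$ are disjoint and cover a subset of $[n']$ that omits every good-pair index (good pairs cannot serve as group rightmost elements), so $\sum_\alpha l_1^{(\alpha)} \leq n'-g$. For $\sum_\alpha l_2^{(\alpha)}/T$, the strict monotonicity $j_{k+1} \geq j_k + T$ of $\Pi$ combined with the anchoring effect of good pairs (pinning the matching to the diagonal at their natural positions) forces each $l_2^{(\alpha)}/T$ to be chargeable to distinct $T$-sized chunks of $x$ lying outside the good-pair anchors. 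I expect the main obstacle to be this last step: because individual shifts $l_2^{(\alpha)}$ can be large, bounding $\sum_\alpha l_2^{(\alpha)}/T$ by $n'-g$ requires delicate bookkeeping of how successive groups interact with the good pairs between them, and the cleanest route is likely induction on the subsegments of $\Pi$ delimited by consecutive good pairs.
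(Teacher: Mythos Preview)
Your decomposition has a concrete gap at the step ``$\sum_\alpha l_1^{(\alpha)} \le n'-g$.'' With your choice of $i'_\alpha$ as the rightmost index of the next group (or $0$), the intervals $(i'_\alpha,i_{b_\alpha}]$ telescope to cover all of $(0,i_{b_1}]$, where $i_{b_1}$ is the largest right-shifted bad index. Your justification (``good pairs cannot serve as group rightmost elements'') only says the \emph{endpoints} are bad; it says nothing about the interiors. But good pairs with index below $i_{b_1}$ sit squarely inside these intervals, so $\sum_\alpha l_1^{(\alpha)}=i_{b_1}$ can far exceed $n'-g$. For instance, if good pairs occupy $1,\dots,g$ and right-shifted bad pairs occur at a few indices above $g$, your sum is roughly $g+(\text{a few})$ while $n'-g$ may be tiny.

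This is not just a bookkeeping slip in your particular grouping; the overall plan of proving $b_R<\eps(n'-g)$ and $b_L<\eps(n'-g)$ separately is where the factor $2$ goes wrong. In the paper's argument the factor $2$ does \emph{not} come from a left/right split of the bad pairs. Instead, all bad pairs are processed together (left to right), and each segment contributes one interval $l_1^{(k)}$ in block-index space and one interval $l_2^{(k)}$ in position space. The point is that the segment boundaries are chosen at ``diagonal crossings'': starting from the first remaining bad pair, one walks forward to the last pair whose position still lies on the same side of the diagonal $j=T(i-1)+1$. Monotonicity of $\Pi$ then forces every pair in that block-index range to be shifted the same way, hence bad, so the $l_1$-intervals genuinely avoid all good-pair indices; the analogous statement holds for the $l_2$-intervals in position space. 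Summing gives $\sum_k l_1^{(k)}\le n'-g$ and $\sum_k l_2^{(k)}/T\le n'-g$, and both halves are needed even when every bad pair is right-shifted. Your right-to-left grouping never interacts with the good pairs, so it cannot reproduce this exclusion, and the induction you sketch for $\sum_\alpha l_2^{(\alpha)}/T$ would face the same obstruction.
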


\begin{proof}
	As the string $x$ is \ref{Bdistinct},
	a pair $(i, j)$ is good if and only if $j = T(i - 1) + 1$.
	For any good pair $(i, j)$,
	we remove the block $x_T[i]$ in $x_T$ and $x[j, j+T)$ in $x$.
	Then we list all bad pairs in $\Pi$ $(i_1, j_1), (i_2, j_2), \dots, (i_b, j_b)$ such that $i_1 < i_2 < \dots < i_b$.
	We use the bad pairs to divide the strings into disjoint intervals such that each bad pair is contained in an interval as in Definition~\ref{epssyncfuncdef}, as follows.
	There are two cases: $T(i_1 - 1) + 1 > j_1$ and $T(i_1 - 1) + 1 < j_1$.
	
	\noindent
	\begin{figure}[H]
	\centering
	\includegraphics[width=0.5\textwidth]{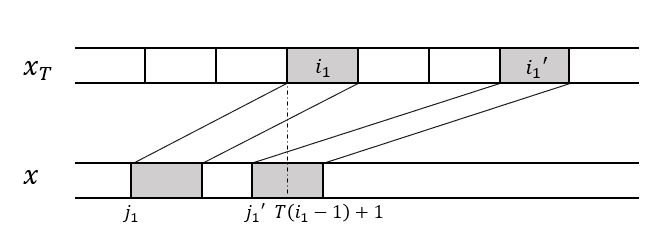}
	\caption{Pictorial representation for the case $T(i_1 - 1) + 1 > j_1$.}
	\label{fig-case1}
	\end{figure}
	If $T(i_1 - 1) + 1 > j_1$, then we find the largest $j_1'$ such that $j_1' < T(i_1 - 1) + 1$ and there exists an $i_1'$ such that $(i_1', j_1') \in \Pi$.
	Let $b_1$ be the number of bad pairs $(i, j)$ in $\Pi$ satisfying $i_1 \le i \le i_1'$, denote $l_1^{(1)} = i_1' - i_1 + 1$ and $l_2^{(1)} = T(i_1-1) - j_1 + 1$
	(See \figurename~\ref{fig-case1}).
	Applying Equation~\ref{eq-sync-1},
	we obtain
	\begin{equation*}
	b_1 \le \mathsf{MATCH}_{\Phi}(x_T[i_1, i_1'], x[j_1, T(i_1 - 1)]) <
	\eps (l_1^{(1)} + l_2^{(1)}/T)
	\end{equation*}
	
	\noindent
	\begin{figure}[H]
	\centering
	\includegraphics[width=0.5\textwidth]{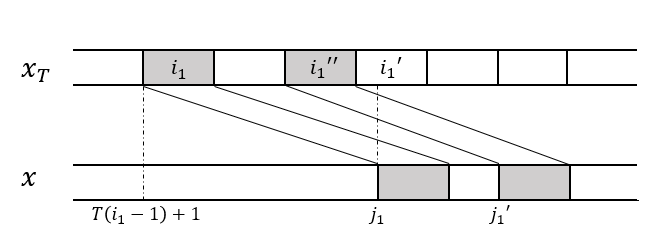}
	\caption{Pictorial representation for the case $T(i_1 - 1) + 1 < j_1$.}
	\label{fig-case2}
	\end{figure}
	If $T(i_1 - 1) + 1 < j_1$, we find the largest $i_1'$ satisfying $T(i_1' - 1) + 1 < j_1$. Then find the largest $i_1''$ such that $i_1'' \le i_1'$ and there exists $j_1'$ satisfying $(i_1'', j_1') \in \Pi$.
	Let $b_1$ be the number of the bad pairs $(i, j)$ in $\Pi$ satisfying $i_1 \le i \le i_1'$, and denote $l_1^{(1)} = i_1' - i_1 + 1$ and $l_2^{(1)} = j_1' - T(i_1' - 1) - 1$ (See \figurename~\ref{fig-case2}).
	Applying Equation~\ref{eq-sync-2},
	we obtain
	\begin{equation*}
	b_1 \le \mathsf{MATCH}_{\Phi}(x_T[i_1, i_1'], x(T(i_1'-1)+1, j_1']) < \eps (l_1^{(1)} + l_2^{(1)}/T)
	\end{equation*}

	Continue doing this for the bad pairs $(i, j)$ satisfying $i > i_1'$, we obtain $b_k < \eps (l_1^{(k)} + l_2^{(k)}/T)$ for some $k = 1, 2, \dots$.
	Adding them up, we have
\begin{equation}
b = \sum_k b_k < \eps\left(\sum_k l_1^{(k)} + \sum_k \frac{l_2^{(k)}}{T}\right)
\end{equation}
On the right hand side,
as the intervals corresponding to $l_1^{(k)}, l_2^{(k)}$ are disjoint and there are $g$ good pairs,
we have $\sum_k l_1^{(k)} \le n' - g$ and $\sum_k l_2^{(k)} / T \le (n - Tg) / T = n' - g$.
Hence we obtain $b < 2\eps (n' - g)$.
\end{proof}

\begin{theorem}\label{epshashtheorem}
	For any $0 < \eps < 1$ and $T \in \N, T \geq B$, let $x$ be a \ref{Bdistinct} string of length $n = Tn'$, and
	$x_T = (x[1, T], x[T+1, 2T], \dots, x[(n'-1)T+1, n])$ be a partition of $x$.
	Let $\Phi = (\Phi[1], \Phi[2], \dots \Phi[n'])$ be a sequence of $\eps$-synchronization hash functions with respect to $x$,
	and $y$ is a string satisfying $ED(x, y) \le k$.
	If $\Pi = \mathsf{MATCH}_{\Phi}(x_T, y)$,
	and denote the number of bad pairs in $\Pi$ as $b$, then
	\begin{equation}
		b < \frac{1 + 2\eps}{1 - 2\eps} k
	\end{equation}
\end{theorem}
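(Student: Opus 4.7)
The plan is to reduce the matching between $x_T$ and $y$ to a self-matching between $x_T$ and $x$, and then invoke Lemma~\ref{epshashlemma} on that self-matching. The bridge is Lemma~\ref{matchToSelfMatch}, which converts any matching between $x_T$ and $y$ into a self-matching that loses at most $k$ pairs in total and at most $k$ bad pairs, using the fact that $\ED(x,y) \le k$.

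In more detail, I would proceed as follows. Let $g$ denote the number of good pairs in $\Pi$, so that $|\Pi| = g + b$. First, I note that the adversary's $k$ edits disturb at most $k$ of the blocks $x_T[1], \dots, x_T[n']$; the remaining at least $n' - k$ unmodified blocks appear intact as substrings of $y$ in the correct left-to-right order, so the optimal matching satisfies $|\Pi| \ge n' - k$, i.e.\ $n' - |\Pi| \le k$. Second, applying Lemma~\ref{matchToSelfMatch} to $\Pi$ yields a self-matching $w'$ of $x$ under $\Phi$ with size $|w'| \ge |\Pi| - k$ and with at least $b' := b - k$ bad pairs (if $b \le k$ we are already done since $k \le \frac{1+2\eps}{1-2\eps} k$). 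Let $g'$ be the number of good pairs of $w'$, so $g' = |w'| - (\text{\# bad in } w') \ge |w'| - (\text{\# bad in } w')$ and in particular $g' + (\text{\# bad in } w') = |w'|$.

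Now I apply Lemma~\ref{epshashlemma} to $w'$, which gives
\begin{equation*}
(\text{\# bad in } w') < 2\eps (n' - g').
\end{equation*}
Using $g' = |w'| - (\text{\# bad in } w')$, this rearranges to
\begin{equation*}
(1 - 2\eps)\,(\text{\# bad in } w') < 2\eps\,(n' - |w'|).
\end{equation*}
Since $(\text{\# bad in } w') \ge b - k$ and $n' - |w'| \le n' - |\Pi| + k \le 2k$, we conclude
\begin{equation*}
(1 - 2\eps)(b - k) < 2\eps \cdot 2k = 4\eps k,
\end{equation*}
so $b < k + \frac{4\eps k}{1 - 2\eps} = \frac{(1+2\eps)k}{1 - 2\eps}$, as desired.

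The only non-routine step is establishing $|\Pi| \ge n' - k$ (which I expect to be straightforward: one exhibits an explicit matching using the blocks of $x_T$ not touched by the $k$ edits, and appeals to monotonicity of their positions inside $y$) and ensuring the chain of inequalities correctly treats the relationship between $(\text{\# bad in } w')$ and $b-k$. Everything else is a direct chaining of Lemmas~\ref{matchToSelfMatch} and~\ref{epshashlemma} with the simple bound on unmatched blocks.
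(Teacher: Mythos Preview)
Your proposal is correct and follows essentially the same approach as the paper: establish $|\Pi| \ge n'-k$, pass to a self-matching via Lemma~\ref{matchToSelfMatch} (losing at most $k$ in size and at most $k$ bad pairs), apply Lemma~\ref{epshashlemma}, and solve the resulting inequality. The only difference is cosmetic organization of the algebra---the paper substitutes $n'-g' \le b'+2k$ directly and solves for $b'$, while you substitute $g' = |w'|-(\text{\#bad})$ first---but the ingredients and logic are identical.
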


\begin{proof}
	As $ED(x, y) \le k$, there is a matching of size $n' - k$ between $x_T$ and $y$ under $\Phi$. Hence, $|\Pi| \ge n' - k$.
	From Theorem~\ref{matchToSelfMatch}, there exists a self matching $\Pi'$ between $x_T$ and $x$ under $\Phi$. Let $b'$ be the number of bad pairs in $\Pi'$ and $g'$ be the number of good pairs in $\Pi'$. Then the theorem guarantees
	$|\Pi'| \ge |\Pi| - k$ and $b' \ge b - k$.
	From Lemma~\ref{epshashlemma}, $b' < 2\eps (n' - g')$. On the other hand, we have
	\begin{equation}
		b' + g' = |\Pi'| \ge |\Pi| - k \ge n' - 2k
	\end{equation}
	So we obtain $b' < 2\eps (n' - g') \le 2\eps(b' + 2k)$. Manipulating it, we derive an uppers bound for $b'$.
	\begin{equation}
		b' < \frac{4\eps}{1 - 2\eps} k
	\end{equation}
	Hence we have
	\begin{equation}
		b \leq b' + k < \frac{1 + 2\eps}{1 - 2\eps} k
	\end{equation}
\end{proof}

\subsection{Construction}

In the first stage, the two parties use a fixed small string $p = 1 \circ 0^{s-1}$ of length $s$,
and find all $p$-split points in their strings.
As the string $x$ is uniform random, with high probability,
the distance between any two adjacent $p$-split points is $O(s 2^s \log n)$.
But some $p$-split points may be too close to each other.
So we only choose the $p$-split points $i$ such that the next $p$-split point of $i$ is at least $2^s/2$ away,
and use these chosen $p$-split points to partition the string into blocks.

\begin{construction}[\textbf{Stage I}]\label{firststage}
	Let $n$ denote the input length.
	Parameters $s = \log \log n + 3, B = 3 \log n, T_0 = s 2^s \log n$, and $p = 1 \circ 0^{s-1}$ is a fixed string of length $s$. To make the representation simple, we assume $n$ is a multiple of $T_0$.
	
	Alice: On input uniform random string $x \in \{0, 1\}^n$.
	\begin{enumerate}[label*=\arabic*.]
		\item Choose all $p$-split points $i$ of $x$ such that its next $p$-split point $j$ satisfies $j - i > 2^s/2$.
		Denote the chosen $p$-split points as $i_1, i_2, \dots, i_{n'}$.
		Partition the string $x$ into blocks $x[1, i_2)$, $x[i_2, i_3)$, $x[i_3, i_4)$ $\dots, x[i_{n'}, n]$,
		and index these blocks as $1, 2, 3, \dots, n'$.
		
		\item Create a set $V = \left\{(\textsf{len}_b, \textsf{B-prefix}_b, \textsf{B-prefix}_{b+1}) \mid 1 \le b \le n' - 1\right\}$,
		where $\textsf{len}_b$ is the length of the $b$-th block, and $\textsf{B-prefix}_b$ and
		$\textsf{B-prefix}_{b+1}$ are the B-prefix of the $b$-th block and
		the $(b+1)$-th block respectively.
		
		\item Represent the set $V$ as its indicator vector, which has size $\poly(n)$, 
		and send the redundancy $z_V$ being able to correct $4 k$ Hamming errors, using Theorem~\ref{agcode} (or simply using a Reed-Solomon code).
		
		\item Partition the string $x$ evenly into $n / T_0$ blocks, each of size $T_0$. \label{AlicelastI}
	\end{enumerate}
	
	Bob: On input string $y \in \{0, 1\}^{m}$ satisfying $ED(x, y) \le k$, and the redundancy $z_V$ sent by Alice.
	\begin{enumerate}[label*=\arabic*.]
		\item Choose the $p$-split points $i$ of $y$ such that its next $p$-split point $j$ satisfies $j - i > 2^s/2$.
		Denote the chosen $p$-split points as $i_1', i_2', \dots, i_{m'}'$.
		Partition the string $y$ into blocks $y[1, i_2')$, $y[i_2', i_3')$, $y[i_3', i_4')$ $\dots, y[i_{m'}', n]$,
		and index these blocks as $1, 2, 3, \dots, m'$.
		
		\item Create a set $V' = \left\{(\textsf{len}_b, \textsf{B-prefix}_b, \textsf{B-prefix}_{b+1}) \mid 1 \le b \le m' - 1\right\}$
		using the partition of $y$.
		
		\item Use the indicator vector of $V'$ and the redundancy $z_V$ to recover Alice's set $V$.
		
		\item Create an empty string $\tilde{x}$ of length $n$,
		and partition $\tilde{x}$ according to the set $V$ in the following way:
		first find the element $(\textsf{len}^{(1)}, \textsf{B-prefix}^{(1)}, \textsf{B-prefix}'^{(1)})$
		in $V$ such that
		for all elements $(\textsf{len}, \textsf{B-prefix}, \textsf{B-prefix}')$ in $V$,
		$\textsf{B-prefix}^{(1)} \neq \textsf{B-prefix}'$.
		Then partition $\tilde{x}[1, \textsf{len}^{(1)}]$ as the first block,
		and fill $\tilde{x}[1, B]$ with $\textsf{B-prefix}^{(1)}$. Then find the element $(\textsf{len}^{(2)}, \textsf{B-prefix}^{(2)}, \textsf{B-prefix}'^{(2)})$
		such that $\textsf{B-prefix}^{(2)} = \textsf{B-prefix}'^{(1)}$,
		and partition $\tilde{x}[\textsf{len}^{(1)}+1, \textsf{len}^{(1)} + \textsf{len}^{(2)}]$ as the second block, and fill $\tilde{x}[\textsf{len}^{(1)}+1, \textsf{len}^{(1)} + B]$ with $\textsf{B-prefix}^{(2)}$.
		Continue doing this until all elements in $V$ are used to recover the partition of $x$.
		
		\item \label{Bobfillblocks} For each block $b$ in $\tilde{x}$,
		if Bob finds a unique block $b'$ in $y$ such that the B-prefix of $b'$ matches the B-prefix of $b$ and the lengths of $b$ and $b'$ are equal,
		Bob fills the block $b$ using $b'$.
		If such $b'$ doesn't exist or Bob has multiple choices of $b'$,
		then Bob just leaves the block $b$ as blank. 
		
		\item Partition the string $\tilde{x}$ evenly into $n / T_0$ blocks, each of size $T_0$. \label{BoblastI}
	\end{enumerate}
	
\end{construction}

\begin{construction}[\textbf{Stage II}]
	The second stage consists of $L = \left\lceil \frac{\log( O(s 2^s) )}{\log (\log^{0.4} n)} \right\rceil = O(1)$ levels.
	Let $T' = \log^{0.6} n, T'' = \log^{0.4} n$, $T_{l} = T_{l-1} / T''$
	for $1 \le l \le L-1$.
	In the last level, we choose $T_L = B$ and $T_{L} \ge T_{L-1} / \log^{0.4} n$.
	To make the representation simple,
	in this stage we assume $n$ is a multiple of $T_l$, for all $l \in [L]$. \\

	Alice: For $l = 1, 2, \dots L$,
	in $l$-th level,
	\begin{enumerate}
		\item Partition the string $x$ evenly into $n'_l = n / T_l$ blocks, each of size $T_l$.
		\item Applying construction~\ref{epssynchashfunctionconstruction} with block size $T=T_l$,
		Alice gets a sequence of $\eps$-synchronization hash functions $\Phi = (\Phi[1], \Phi[2], \dots, \Phi[n'_l])$ with respect to $x$, where each $\Phi[t], t \in [n'_l]$ consists of a pair of functions $(\phi[t], \theta[t])$.
		\item Alice sends the description of $(\phi[t])_{t \in [n'_l]}$ to Bob. By Lemma~\ref{philemma}, the description uses $O(\log n)$ bits.
		
		\item Alice packs every successive $T'$ elements of $(\theta[t])_{t \in [n'_l]}$ into a vector $V_{\theta}$, i.e. $V_{\theta} = (\theta[1,T'], \theta[T'+1, 2T'], \dots)$. Note that by Lemma~\ref{thetalemma}, each $\theta[t]$ has a description of size $O(\log \log n)$.
		Alice sends the redundancy $z_{\theta}$ being able to correct some $O(k)$ Hamming errors of $V_{\theta}$, using Theorem~\ref{agcode}.
		
		\item For any $t \in [n'_l]$,
		Alice evaluates $\Phi[t]$ on the $t$-th block of $x$, and obtains the hash values $I[t] = \Phi[t](x[T_l(t-1)+1, T_l(t-1) + B])$, and stores $(I[t])_{t \in [n'_l]}$ into a vector $I$.
		Then she packs every successive $T''$ elements of $I$ into a vector $V_I$, i.e. $V_{I} = (I[1, T''], I[T''+1, 2T''], \dots)$,
		and sends the redundancy $z_I$ being able to correct some $O(k)$ Hamming errors of $V_I$, using Theorem~\ref{agcode}.
	\end{enumerate}
	After $L$ levels,
	Alice evenly partitions her string $x$ into $n / T_L$ small blocks, each of size $T_L$. Alice then sends a redundancy $z_x$ being able to correct some $O(k)$ wrong blocks or unmatched blocks, using Theorem~\ref{agcode}. \\
	
	Bob: For $l = 1, 2, \dots, L$,
	in the $l$-th level, Bob receives the description of the functions $(\phi[t])_{t \in [n'_l]}$, and the redundancies $z_{\theta}, z_I$. Finally he receives $z_x$.
	\begin{enumerate}
		\item Partition the string $\tilde{x}$ evenly into $n'_l = n / T_l$ blocks, each of size $T_l$.
		
		\item
		For any $t \in [n'_l]$, denote $S_t = \left\{\tilde{x}[u, u+B) \mid |T(t-1)+1 - u| < T_l \log^{0.6} n, 1 \le u \le n - B + 1\right\}$.
		Bob applies Lemma~\ref{thetalemma} using $S_t$ for any $t \in [n'_l]$, and obtains $(\theta'[t])_{t\in[n'_l]}$.

		\item Bob packs every successive $T'$ elements of $(\theta'[t])_{t \in [n'_l]}$ into a vector $V_{\theta}' = (\theta'[1, T'], \theta'[T'+1, 2T'], \dots)$. Then he uses the redundancy $z_{\theta}$ and $V_{\theta}'$ to recover $V_{\theta}$. Bob unpacks $(\theta[t])_{t \in [n'_l]}$ from $V_{\theta}$ to
		obtain $\Phi$.
		
		\item
		For any $t \in [n'_l]$, Bob evaluates the hash function $\Phi[t]$ on the $t$-th block of $\tilde{x}$, so he obtains the hash values $I'[t] = \Phi[t](\tilde{x}[T_l(t-1)+1, T_l(t-1)+B])$
		. Bob packs every successive $T''$ elements of $I'$ into the a vector $V_{I}'$, i.e. $V_{I}' = (I'[1, T''], I'[T''+1, 2T''], \dots)$.
		
		\item Bob uses the redundancy $z_I$ and $V_{I}'$ to recover $V_I$, then obtains $I$ from $V_I$ by unpacking.
		
		\item Bob finds the maximum matching $\Pi$ between $x$ and $y$ under $\Phi$ using $I$. Note that in order to find such a matching, Bob only needs to know the hash values of $\Phi[t]$ on the $t$-th block of $x$, which can be obtained from the vector $I$.
		For each pair $(a, b)$ in $\Pi$, Bob fills the $a$-th block $\tilde{x}[T_l(a-1)+1, T_l a]$ with $y[b, b + T_l)$.
	\end{enumerate}
After $L$ levels,
Bob partitions $\tilde{x}$ evenly into blocks of length $T_L$, then uses the redundancy $z_x$ to recover $x$.

\end{construction}

\subsection{Analysis}

\begin{lemma} \label{blocklengthlemma}
	For Stage I,
	in Alice's partition of $x$ and
	Bob's partition of $y$, the size of every block is greater than $2^s / 2$.
	If all properties in Theorem~\ref{Bdistinctandinterval} hold,
	then in Alice's partition of $x$,
	the size of every block is at most $B_1 + B_2 = O(s 2^s \log n)$.
\end{lemma}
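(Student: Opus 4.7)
The plan is to prove the two bounds separately, both by exploiting the construction rule that every chosen split point $i_k$ has the property that its next $p$-split point $j$ satisfies $j - i_k > 2^s/2$, together with \ref{uniformproperty1} and \ref{uniformproperty2} of Theorem~\ref{Bdistinctandinterval}.

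For the lower bound, I would first handle the ``interior'' boundaries $i_k, i_{k+1}$ (for $2 \le k \le n'-1$). Since $i_{k+1}$ is itself a $p$-split point strictly greater than $i_k$, the next $p$-split point $j$ of $i_k$ (in the sense of Definition~\ref{nextsplitpoint}) satisfies $j \le i_{k+1}$, so by the definition of ``chosen'' we get $i_{k+1} - i_k \ge j - i_k > 2^s/2$. For the last block $x[i_{n'}, n]$, let $j$ be the next $p$-split point of $i_{n'}$. If $j \le n$ then $n - i_{n'} + 1 \ge j - i_{n'} > 2^s/2$, and if $j = n+1$ then directly $n + 1 - i_{n'} > 2^s/2$; either way the block has length $> 2^s/2$. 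For the first block $x[1, i_2)$, using $i_1 \ge 1$ and the interior argument applied to $k=1$, we get $i_2 - 1 \ge i_2 - i_1 > 2^s/2$. The same reasoning works verbatim for Bob's partition of $y$, and this part does not require any of the probabilistic properties to hold.

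For the upper bound, the plan is to show that between any two consecutive chosen split points the gap is at most $B_1 + B_2$. Suppose for contradiction that some block has length greater than $B_1 + B_2$. Starting just after $i_k$, apply \ref{uniformproperty1} to the interval of length $B_1$ immediately following $i_k$: it must contain some $p$-split point $q$. Now apply \ref{uniformproperty2} to the interval of length $B_2$ starting at $q$: since $q$ itself is a $p$-split point, this interval must contain some $p$-split point $i^*$ whose next $p$-split point is more than $2^s/2$ away, i.e.\ $i^*$ is a chosen split point. Then $i_k < i^* \le i_k + B_1 + B_2 < i_{k+1}$, contradicting the definition of $i_{k+1}$ as the chosen split point immediately after $i_k$. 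The first block is handled by the same argument starting at position $1$ (producing a chosen split point within $B_1 + B_2$ which must be at most $i_1 \le i_2$), giving $i_2 \le 2(B_1 + B_2) = O(s 2^s \log n)$.

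The only subtle point I expect to handle with a little care is the boundary behavior: if $i_k$ is close enough to $n$ that the intervals of length $B_1$ or $B_2$ used in the upper bound argument do not fit inside $[1, n]$, then the block $x[i_k, n]$ already has length at most $B_1 + B_2$ and the bound is trivial. Apart from this bookkeeping, the argument is a direct combination of the construction rule with Properties~1 and~2, and no new ideas are needed.
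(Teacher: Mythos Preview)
Your proposal is correct and follows essentially the same approach as the paper: the lower bound uses only the selection rule ``next $p$-split point is more than $2^s/2$ away'' together with the minimality of the next $p$-split point, and the upper bound is the same contradiction argument via \ref{uniformproperty1} followed by \ref{uniformproperty2}. You are in fact slightly more careful than the paper about the first block $x[1,i_2)$, since the chosen split point produced by the contradiction argument could be $i_1$ rather than a new boundary; your observation that this only costs a constant factor (yielding $2(B_1+B_2)=O(s2^s\log n)$) is exactly right, and the paper's proof glosses over this edge case.
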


\begin{proof}
	We first prove that the size of every Alice's block is greater than $2^s / 2$. The sizes of Bob's blocks follow the same routine.
	For any $t = 1, 2, \dots n'-1$, let $j$ be the next $p$-split point of $i_t$, then $j - i_t > 2^s / 2$.
	As $i_{t+1}$ is also a $p$-split point, from the Definition ~\ref{nextsplitpoint}, we have $i_{t+1} \ge j$. Hence, $i_{t+1} - i_t \ge j - i_t > 2^s / 2$. For the first block $x[1, i_2]$, its length is $i_2$, which is greater than $i_2 - i_1 > 2^s / 2$. For the last block $x[i_{n'}, n]$, let $j$ be the next $p$-split point of $i_{n'}$, then $j \le n+1$ and $j - i_{n'} > 2^s / 2$.
	Hence the length of this block is $n - i_{n'} + 1$, which is greater than $2^s / 2$.
	
	For sake of contradiction, assume there is a block $x[l, r)$ whose length is greater than $B_1 + B_2$, then by \ref{uniformproperty1}, there is a $p$-split point $j$ in the range of $[l + 1, l + B_1]$. Applying \ref{uniformproperty2},
	there is a $p$-split point chosen by Alice in the range of $[j, j+B_2)$. This violates the assumption that $x[l, r)$ is a block.
\end{proof}

\begin{lemma}\label{symdifflemma}
	If $2^s / 2 > B$, then the symmetric difference of $V$ and $V'$ has size at most $4k$.
\end{lemma}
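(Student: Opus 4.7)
My plan is to telescope over the $k$ edits. Write $x = y_0, y_1, \dots, y_k = y$ where each $y_{i+1}$ is obtained from $y_i$ by a single insertion or deletion, and let $V_i$ be the tuple set produced by applying Bob's partitioning rule to $y_i$. Then $V_0 = V$, $V_k = V'$, and by the triangle inequality for symmetric difference,
\[
|V \triangle V'| \;\le\; \sum_{i=0}^{k-1} |V_i \triangle V_{i+1}|,
\]
so it suffices to show $|V_i \triangle V_{i+1}| \le 4$ for each $i$.

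For a single edit at position $e$, both the $p$-split-point condition (depending on a window of length $s$) and the ``chosen'' condition (depending on whether another $p$-split point lies within $2^s/2$) are local, so only positions in an ``affected window'' of length $O(2^s)$ around $e$ can have their chosen status altered. By Lemma~\ref{blocklengthlemma}, consecutive chosen $p$-split points in any string are more than $2^s/2$ apart, and since $s = \log\log n + 3 \ll 2^s$, this window contains at most $2$ chosen split points in $y_i$ and at most $2$ in $y_{i+1}$. Outside the window the two chosen-split-point sequences correspond via the natural position shift induced by the edit, and because $2^s/2 > B$ every block has length at least $B$, so its length and $B$-prefix carry over intact; the corresponding tuples agree as set elements. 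Hence only tuples coming from blocks that touch the affected window can contribute to $V_i \triangle V_{i+1}$.

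The remaining step is a short case analysis inside the window. When a single chosen split point is destroyed, the two tuples whose blocks met at it collapse into one merged-block tuple (net change $\le 3$); when one is created, one tuple splits into two; a pair of changes combines to at most $4$ differing tuples. The $B$-distinctness of $x$ (preserved because every block has length $\ge B$) guarantees that tuples from genuinely different blocks do not accidentally coincide as set elements, so counting differing blocks really does count symmetric-difference elements of $V$. I expect the main obstacle to be bookkeeping the boundary tuple — the tuple of the block immediately before the affected window, whose third coordinate is the $B$-prefix of the first block \emph{inside} the window — and verifying across all sub-cases of split-point change (removal, creation, shift, and combinations thereof) that the count never exceeds $4$, especially when an insertion could convert a non-chosen split point into a chosen one at the interface.
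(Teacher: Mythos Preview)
Your telescoping approach is sound in spirit and does yield $|V\triangle V'|=O(k)$, but it differs substantially from the paper's argument and, as sketched, does not obviously deliver the precise constant~$4$.

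The paper does \emph{not} telescope. It fixes a single sequence of $k$ edits taking $x$ to $y$ and argues directly that any tuple $(\mathsf{len}_b,\textsf{B-prefix}_b,\textsf{B-prefix}_{b+1})\in V\setminus V'$ must have Alice's block $b$ or block $b+1$ touched by some edit (if neither is touched, the block $b$ together with the $B$-prefix of block $b+1$ survive verbatim in $y$, so the tuple reappears in $V'$). Since each edit touches at most one of Alice's blocks, at most $k$ blocks are touched and hence at most $2k$ tuples lie in $V\setminus V'$; symmetry gives $|V'\setminus V|\le 2k$. This is a two-line counting argument with no intermediate strings, no per-edit window, and no appeal to $B$-distinctness.

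The delicate step in your route is the per-edit bound $|V_i\triangle V_{i+1}|\le 4$. Your affected window has length roughly $2^s/2+s$, while Lemma~\ref{blocklengthlemma} only guarantees consecutive chosen split points are $>2^s/2$ apart, so each of $y_i,y_{i+1}$ can genuinely contain \emph{two} chosen split points in the window. In the configuration where two are destroyed and two are created (all distinct), three blocks on each side touch the window, giving up to six differing tuples, not four. Your case analysis stops at ``a pair of changes''; handling the $2{+}2$ case is exactly the bookkeeping you flag as the main obstacle, and it is not clear it closes at $4$ rather than $5$ or $6$. (Also note that the intermediate strings $y_i$ need not be $B$-distinct, so your appeal to $B$-distinctness to prevent accidental tuple collisions is not available there; fortunately collisions can only shrink the sets, so this does not hurt the upper bound.) None of this breaks the downstream application---one would simply take $z_V$ to correct $O(k)$ errors---but it leaves a gap relative to the lemma's stated constant. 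If you want exactly $4k$, the paper's global ``touched block'' count is both shorter and sharper.
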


\begin{proof}
	W.L.O.G, we only need to prove that
	the size of $V \setminus V'$ is at most $2k$.
	
	As $ED(x, y) \le k$, there exists a
	series of $k$ edit operations transforming $x$ into $y$.
	For any element $(\mathsf{len}_b, \textsf{B-prefix}_b, \textsf{B-prefix}_{b+1}) \in V$, where $b$ is a block index in the partition of $x$,
	if the $b$-th block and the $(b+1)$-th block are not involved in the edit operations,
	then the $b$-th block is still a block  in Bob's string $y$. By Lemma~\ref{blocklengthlemma},
	the sizes of $b$-th block and $(b+1)$-th block is at least $2^s / 2 = 4 \log n > B$, and so
	$\mathsf{len}_b$, $\textsf{B-prefix}_b$ and $\textsf{B-prefix}_{b+1}$ remain the same in Bob's string $y$. So we have $(\mathsf{len}_b, \textsf{B-prefix}_b, \textsf{B-prefix}_{b+1}) \in V'$. Hence, an element
	$(\mathsf{len}_b, \textsf{B-prefix}_b, \textsf{B-prefix}_{b+1}) \in V \setminus V'$
	implies that the $b$-th block or the $(b+1)$-th block is involved in the edit operations.
	So the size of $V \setminus V'$ is upper bounded by $2k$.
\end{proof}

\begin{theorem} \label{stageIthm}
	If $2^s/2 > B$, and all properties in Theorem~\ref{Bdistinctandinterval} hold,
	then after Stage I,
	at most $O(k)$ blocks of $\tilde{x}$ contains unfilled bits or incorrectly filled bits.
\end{theorem}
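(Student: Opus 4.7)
The plan is to establish the theorem in three stages: (i) verify that Bob recovers $V$ exactly from $V'$ and $z_V$; (ii) verify that Bob's reconstruction of the block-partition of $\tilde{x}$ matches Alice's partition of $x$; and (iii) bound by $O(k)$ the number of blocks left blank or filled incorrectly in step~\ref{Bobfillblocks}.

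For stage (i), I would invoke Lemma~\ref{symdifflemma}: because $2^s/2 > B$, we get $|V \triangle V'| \le 4k$, so as indicator vectors in $\{0,1\}^{\poly(n)}$ the Hamming distance between $V$ and $V'$ is at most $4k$; since $z_V$ is designed via Theorem~\ref{agcode} to correct $4k$ Hamming errors, Bob recovers $V$ exactly. For stage (ii), I would use the \ref{Bdistinct} hypothesis: all length-$B$ prefixes of Alice's blocks are pairwise distinct. Consequently the triple corresponding to the first block is the unique triple whose $\textsf{B-prefix}$ component never appears as some triple's $\textsf{B-prefix}'$, and every subsequent triple is then uniquely identified by matching its $\textsf{B-prefix}$ with the previous triple's $\textsf{B-prefix}'$. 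Thus Bob reconstructs the ordered sequence of triples exactly, and the partition of $\tilde{x}$ has the correct lengths and the correct $B$-prefixes at every block.

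For stage (iii), I would charge bad blocks directly to edit operations. Call a block of $x$ \emph{safe} if no insertion or deletion occurs inside it or inside either of its two immediately adjacent blocks, and \emph{touched} otherwise. Because a single edit perturbs the pattern of $p$-split points only within an $O(s)$-window around it and therefore alters only $O(1)$ chosen split points (this is essentially the same observation that underlies Lemma~\ref{symdifflemma}), the number of touched blocks is $O(k)$. Dually, let $B_y^{\text{edit}}$ denote the blocks in Bob's partition of $y$ that do not arise as intact blocks of $x$; the same reasoning gives $|B_y^{\text{edit}}| = O(k)$. Now consider a safe block $b$ of $\tilde{x}$ with prefix $\alpha$ and length $\ell$: there is exactly one ``safe'' block of $y$ with matching $(\alpha,\ell)$, namely the unchanged copy of $b$, and any other block of $y$ matching $(\alpha,\ell)$ must lie in $B_y^{\text{edit}}$. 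By \ref{Bdistinct} the prefixes of $\tilde{x}$'s blocks are pairwise distinct, so a single element of $B_y^{\text{edit}}$ can cause a non-unique match for at most one safe block. Hence at most $|B_y^{\text{edit}}| = O(k)$ safe blocks become bad, and adding the $O(k)$ touched blocks yields the desired $O(k)$ bound.

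The main obstacle will be rigorously justifying the claim that each edit touches only $O(1)$ blocks, i.e., that a single insertion or deletion modifies the sequence of chosen $p$-split points only locally; once that locality is established, the rest of the argument is a clean counting of collisions, keyed by the distinctness of $B$-prefixes in $x$. A secondary point worth verifying is that the additional length constraint in Bob's matching rule cannot turn a safe block into a bad block — this is easy, since that constraint only tightens the B-prefix matching criterion and therefore cannot create additional matches that did not already exist.
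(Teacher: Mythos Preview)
Your outline is close to the paper's proof and is essentially correct for the pattern-based partition. Stage (i) is identical to the paper's first step; stage (ii) spells out something the paper leaves implicit. For stage (iii) the paper uses a slightly different decomposition: rather than your safe/touched split together with the auxiliary set $B_y^{\text{edit}}$, it directly bounds three cases --- blocks $b$ for which Bob finds \emph{no} match (at most $k$, since such a $b$ must have been involved in an edit), blocks for which Bob finds \emph{multiple} matches (at most $k$, since by \ref{Bdistinct} two colliding $y$-blocks cannot both be intact copies of $x$-blocks, and the collision pairs are disjoint across different $b$), and blocks with a \emph{unique but wrong} match (at most $k$, again because the matched $y$-block must be edit-affected). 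Your charging argument arrives at the same $O(k)$ bound by a different bookkeeping; both rest on the same locality observation you flagged as the main obstacle (and which is exactly what drives Lemma~\ref{symdifflemma}).

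There is, however, one genuine omission. The statement concerns the \emph{uniform} $T_0$-partition produced in step~\ref{BoblastI}, not the pattern-based partition from step~\ref{Bobfillblocks}; your plan only bounds the latter. The paper closes this gap in its final paragraph by invoking Lemma~\ref{blocklengthlemma}: because every pattern-based block has length at most $B_1+B_2=O(s2^s\log n)=O(T_0)$, each bad pattern-based block overlaps only $O(1)$ of the $T_0$-blocks, so $O(k)$ bad pattern-based blocks yield $O(k)$ bad $T_0$-blocks. Note that this conversion step is precisely where the hypotheses \ref{uniformproperty1} and \ref{uniformproperty2} enter --- your argument as written uses only \ref{Bdistinct} and the assumption $2^s/2>B$, so without this step two of the theorem's hypotheses are never invoked and the conclusion about the final partition is not established.
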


\begin{proof}
	By Lemma~\ref{symdifflemma},
	Bob recovers $V$ correctly using the redundancy $z_V$.
	
	As $ED(x, y) \le k$, there exists a
	series of $k$ edit operations transforming $x$ into $y$.
	In Bob's step \ref{Bobfillblocks},
	for every block $b$,
	if Bob does not find a block $b'$ in $y$ that matches the $B$-prefix and the length of $b$, then the block $b$ must be involved in an edit operation. Hence the number of such blocks is at most $k$.
	
	For the case where Bob finds at least two blocks
	$b_1'^{(b)}$ and $b_2'^{(b)}$ in $y$ such that
	the B-prefix and the length of $b_1'^{(b)}$, $b_2'^{(b)}$ both match that of $b$,
	one of $b_1'^{(b)}$ and $b_2'^{(b)}$ must be involved in an edit operation, otherwise they are both substrings of $x$ and thus violate the \ref{Bdistinct} property.
	By the \ref{Bdistinct} property of the string $x$,
	$\left\{b_1'^{(b)}, b_2'^{(b)}\right\}_b$ are disjoint sets for different block $b$.
	As the total number of edit operations is at most $k$, the number of the blocks $b$ that Bob finds multiple $b'$ is at most $k$.
	
	For the blocks Bob fills in,
	at most $k$ of them are involved in edit operations, these blocks may be incorrectly filled. But for the remaining blocks, they must be correctly filled due to the \ref{Bdistinct} property.
	In total, the number of unfilled or incorrectly filled blocks is at most $3k$.
	
	In Bob's step \ref{BoblastI},
	Bob divides $\tilde{x}$ evenly into $n / T_0$ blocks, each of length $T_0$. 
	By Lemma~\ref{blocklengthlemma},
	the length of each block in step~\ref{Bobfillblocks} is $O(s 2^s \log n)$. As we set $T_0 = s 2^s \log n$, each unfilled or incorrectly filled block in step \ref{Bobfillblocks} can only affect $O(s 2^s \log n) / T_0 = O(1)$ blocks in step \ref{BoblastI}.
	Hence, the total number of blocks in step~\ref{BoblastI} containing unfilled or incorrectly filled bits is at most $O(k)$.
\end{proof}


\begin{theorem} \label{stageIItheorem}
	If $2^s/2 > B$, $\eps \leq 1/3$ and all properties in Theorem~\ref{Bdistinctandinterval} hold,
	then at the end of each level in stage II, the total number of unmatched blocks and incorrectly matched blocks between $x$ and $\tilde{x}$ is $O(k)$, where the constant hidden in big $O$ notation is independent of the levels.
\end{theorem}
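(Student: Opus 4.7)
The plan is to proceed by induction on the level $l$ of Stage II, establishing a uniform bound of $Ck$ on the number of bad blocks (blocks of $\tilde{x}$ whose contents do not equal the corresponding block of $x$) at the end of each level, for some absolute constant $C$ independent of $l$. For the base case, Theorem~\ref{stageIthm} directly supplies this bound for the initial partition into blocks of size $T_0$.

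Suppose inductively that at the start of level $l$ there are at most $Ck$ bad blocks of size $T_{l-1}$ in $\tilde{x}$. Upon refining to blocks of size $T_l = T_{l-1}/T''$, these give rise to at most $CkT''$ bad small blocks, but grouped into at most $Ck$ clusters, each being a contiguous run of $T''$ current-level blocks sitting inside a previous-level bad block. The first key step is to show Bob correctly recovers $V_\theta$. Since $\theta[t]$ depends deterministically (by Lemma~\ref{thetalemma}) on $S_t$, which is determined by substrings of length $B$ within a window of width $2T_l \log^{0.6} n$ around $T_l(t-1)+1$, we have $\theta'[t] = \theta[t]$ whenever this window avoids all bad small blocks in $\tilde{x}$. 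A single cluster of bad small blocks (of total width $T_{l-1}$) can corrupt $\theta[t]$ for $t$ in a range of length $O(T_{l-1}/T_l + \log^{0.6} n) = O(T')$. Thus the corrupted indices form at most $Ck$ clusters of length $O(T')$, and since $V_\theta$ packs $T'$ consecutive $\theta$'s per entry, each cluster contributes $O(1)$ differences, totalling $O(k)$ — within what $z_\theta$ can correct.

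Next, once $\Phi$ has been recovered, we have $I[t] \ne I'[t]$ only when the $B$-prefix of the $t$-th small block differs between $x$ and $\tilde{x}$, which happens only for bad small blocks. These indices form at most $Ck$ clusters of length $T''$, so when packed $T''$-at-a-time into $V_I$, each cluster contributes $O(1)$ differences, giving $O(k)$ total; redundancy $z_I$ suffices. Bob thus correctly recovers both $\Phi$ and $I$, and then computes the maximum matching $\Pi$ between $x_{T_l}$ and $y$ under $\Phi$.

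Since $\ED(x,y) \le k$, we have $|\Pi| \ge n'_l - k$, so at most $k$ blocks of $x_{T_l}$ are unmatched. Applying Theorem~\ref{epshashtheorem} with $\eps \le 1/3$, the number of incorrectly matched pairs in $\Pi$ is at most $\frac{1+2\eps}{1-2\eps}k \le 5k$. A block of $\tilde{x}$ is bad at the end of the level only if it is either (a) matched incorrectly by $\Pi$ and hence overwritten with wrong content, or (b) unmatched and already bad at the start of the level; category (a) contributes at most $5k$, and category (b) at most $k$ (bounded by the number of unmatched blocks). So at the end of level $l$ there are at most $6k$ bad blocks of size $T_l$, closing the induction by taking $C = \max(C_0, 6)$ where $C_0$ is the constant from Stage I. The main obstacle is the clustering bookkeeping: one must verify that the packing widths $T' = \log^{0.6} n$ and $T'' = \log^{0.4} n$ exactly absorb the propagation width of previous-level bad blocks under subdivision and hashing, so that $Ck$ clusters produce only $O(k)$ errors in both $V_\theta$ and $V_I$ and the constant does not grow with $l$.
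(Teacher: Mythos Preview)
Your proposal is correct and follows essentially the same inductive skeleton as the paper: use Stage~I as the base case, argue that the $O(k)$ previous-level bad blocks produce only $O(k)$ packed Hamming errors in $V_\theta$ and $V_I$ (via the width/packing bookkeeping you lay out), recover $\Phi$ and $I$, and then invoke Theorem~\ref{epshashtheorem} to bound the bad pairs in $\Pi$.

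The one noteworthy difference is in how the ``constant independent of levels'' is justified. The paper, after bounding bad pairs by $O(k)$ via Theorem~\ref{epshashtheorem}, simply notes that there are only constantly many levels and concludes. You instead make explicit the stronger observation: once $\Phi$ and $I$ are correctly recovered, the end-of-level bound comes \emph{freshly} from $k=\ED(x,y)$ alone --- at most $\tfrac{1+2\eps}{1-2\eps}k\le 5k$ incorrectly matched plus at most $k$ unmatched --- and is therefore $\le 6k$ regardless of the previous level's constant $C$. This is exactly what the theorem statement asserts, and your argument establishes it directly rather than falling back on $L=O(1)$. Your formulation also makes transparent what the redundancy constants in $z_\theta,z_I$ must be set to (namely enough to absorb $\max(C_0,6)$ clusters' worth of errors), which the paper leaves implicit. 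The clustering/packing arithmetic you give for $V_\theta$ (window of half-width $T_l\log^{0.6}n$, so $O(T'' + T')=O(T')$ affected indices per cluster, hence $O(1)$ packed entries) and for $V_I$ ($T''$ consecutive errors per cluster, packed $T''$-at-a-time) matches the paper's computation.
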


\begin{proof}
	We will prove by induction.
	We denote the the partition of $\tilde{x}$ in $0$-th level to be the partition in the last step of Stage I.
	For the $l$-th level, $l \ge 1$,
	we define the \emph{bad block} to be the block of $\tilde{x}$ containing unfilled or incorrectly filled bits in the $(l-1)$-th level. 
	For the $0$-th level, by Theorem~\ref{stageIthm}, the number of the bad blocks is at most $O(k)$, and thus the theorem holds for $0$-th level.
	Now we assume the theorem holds for the $(l-1)$-th level, then the number of the bad blocks in the $(l-1)$-th level is bounded by $O(k)$.
	As we set $T_l = T_{l-1} / T''$,
	the length of each bad block is at most $O(T_{l} \cdot T'')$.
	Each of the bad block is divided evenly into $O(T'')$ smaller successive blocks in step $1$ of the $l$-th level.
	
	For each $(\theta[t])_{t \in [n'_l]}$, $\theta[t]$ is computed deterministically by $S_t$ in Construction~\ref{epssynchashfunctionconstruction}, where $S_t$ contains all substrings of length $B$ of $x$ in the range of $[T_l(t-1)+1 - T_l\cdot \log^{0.6} n, T_l(t-1)+1 + T_l\cdot \log^{0.6} n]\cap[1, n]$.
	As we pack every $T' = \log^{0.6} n$ successive $\theta[t]$ into one element in $V_{\theta}$, and one block in $x$ contains $T_l$ bits, one bad block can only cause $O(T_l T'' + T_l \log^{0.6}n) / T_l T' = O(1)$ Hamming errors between $V_{\theta}$ and $V_{\theta}'$.
	Hence, there are at most $O(k)$ Hamming errors between $V_{\theta}$ and $V_{\theta}'$. Bob can thus recover Alice's $V_{\theta}$ correctly using the redundancy $z_{\theta}$.
	
	After Bob correctly recovers $\Phi$, he evaluates the $\eps$-synchronization hash function on each block.
	One \emph{bad block} can only cause $O(T'')$ successive Hamming errors between $I$ and $I'$. So after packing every $T''$ successive hash values, the number of the Hamming errors between the vector $V_I$ and $V_{I'}$ is upper bounded by $O(k)$. Hence the redundancy $z_I$ allows Bob to recover $V_I$.
	
	Now Bob obtains the $\eps$-synchronization hash functions $\Phi$ and its hash values,
	he computes $\Pi = \mathsf{MATCH}_{\Phi}(x_T, y)$ using the hash values in $I$. Since $\eps \leq 1/3$, by Theorem~\ref{epshashtheorem},
	the number of the bad pairs is upper bounded by $O(k)$.
	
	Note that we only have constant levels, so the number of unmatched and incorrectly matched blocks in the $L$-th level is still bounded by $O(k)$.
\end{proof}

\begin{proof}[Proof of Theorem~\ref{randdocexmaintheorem}]
	We choose $s = \log \log n + 3$,
	so that $2^s/2 = 4 \log n > B$.\ With probability $1 - 1/\poly(n)$,
	all properties in Theorem~\ref{Bdistinctandinterval} hold. In each level in Stage II, we use a sequence of $\eps$-synchronization hash functions with $\eps=1/3$, by construction~\ref{epssynchashfunctionconstruction}. 
	
	By Theorem~\ref{stageIItheorem},
	after Stage II ends,
	there are $O(k)$ blocks of length $B$ that are incorrectly matched or unmatched between $x$ and $\tilde{x}$.
	Hence, the redundancy $z_x$ allows Bob to recover $x$ correctly.
	
	In Stage I, Alice sends the redundancy $z_V$ to Bob. The dimension of the indicator vector of the set $V$ is $2^{2B} \poly \log n = \poly (n)$. If we use Reed-Solomon code or Theorem~\ref{agcode} to generate $z_V$, the size of the redundancy $z_V$ is $O(k \log n)$. 
	
	Stage II has a constant number of levels. In each level, Alice sends the description of $(\phi[t])_{t \in [n'_l]}, z_{\theta}, z_{I}$ to Bob. Finally Alice sends $z_x$ to Bob.
	By Lemma~\ref{philemma}, the description of $(\phi[t])_{t \in [n'_l]}$ has size $O(\log n)$.
	For the redundancy $z_{\theta}$,
	the size of each element in $V_{\theta}$ is $O(\log^{0.6} n) \cdot O(\log \log n)$ bits, which is smaller than $O(\log n)$ bits. Hence, the size of $z_{\theta}$ is $O(k \log n)$.
	For $z_I$, the hash value of $\phi[t], (t \in [n'_l])$ can be stored in $R$ bits, and the hash value of $\theta[t], (t \in [n'_l])$ can be stored in $O(\log \log n)$ bits, so the size of each element in $V_I$ is $O(\log^{0.4} n) \cdot (R + O(\log \log n)) < O(\log n)$ bits. Hence, the size of $z_I$ is $O(k \log n)$ bits.
	For $z_x$, its size is $O(k B) = O(k \log n)$ bits. Thus in total the size of the redundancy is $O(k \log n)$.
\end{proof}

\section{Explicit binary ECC for edit errors}
\label{sec:InsdelCode}
In this section we'll show how to use the document exchange protocol for uniform random strings in Section \ref{sec:randdocexc} to construct ECC for edit errors.

Our general strategy is as follows. For any given message $x \in \{0,1\}^n$, we use a generator to generate a mask string s.t. the xor of the mask and the message, say $y(U_r) \in \{0,1\}^n$, has the three properties: \ref{uniformproperty1}, \ref{uniformproperty2}, \ref{Bdistinct}. We ensure that the seed length for the generator is small enough s.t. we can exhaustively search the seed $u \in \{0,1\}^r$ s.t. $y(u)$ has the three properties. Then we apply the method in Section \ref{sec:randdocexc} to create a redundancy $z$ for $y(u)$.  After that we use an asymptotically good binary ECC for edit errors to encode the redundancy and the seed. Concatenating this with $y$ gives the final codeword.

\subsection{The generator for the mask}
We show that there exists an explicit generator of seed length $O(\log n)$ s.t. given any message $x \in \{0,1\}^n$, w.h.p. the xor of $x$ and the output of the generator has \ref{uniformproperty1}, \ref{uniformproperty2} and \ref{Bdistinct}, where the randomness is over the uniform random seed of the generator. Formally we have the following theorem.

\begin{theorem}
\label{gfor3property}
There exists an algorithm (generator) $g$ s.t. for every $  n \in \mathbb{N},  x\in \{0,1\}^n$,  with probability $1-1/\poly(n)$,  $g(n, U_r) + x $ satisfies \ref{uniformproperty1}, \ref{uniformproperty2} and \ref{Bdistinct},    where  $r= O(\log n)$. (Let $s$ in \ref{uniformproperty1}, \ref{uniformproperty2} be $ \log \log 
n+O(1)$.)
\end{theorem}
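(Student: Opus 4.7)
The plan is to build $g(n, \cdot) = g_1 \oplus g_2 \oplus g_3$, the bitwise XOR of three independent generators on disjoint seed pieces, each with seed length $O(\log n)$, where $g_i$ enforces the $i$-th property on $x \oplus g_i(U)$ for every fixed $x$. This XOR decomposition is valid because each of the three properties is a predicate on the final string: fixing the other two seeds collapses $x \oplus g_j \oplus g_k$ into an arbitrary but fixed ``effective message'' $x'$, and the isolated high-probability guarantee for $g_i$ on $x' \oplus g_i$ applies directly. A union bound over the three properties, each failing with probability at most $1/\poly(n)$, then completes the argument.

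For the \ref{Bdistinct} property I take $g_1$ to be the $\eps$-almost $\kappa$-wise independent generator of Theorem~\ref{almostkwiseg} with $\kappa = 2B$ and $\eps = 1/\poly(n)$; its seed length is $O(\log n)$. For any fixed $x'$ and distinct $B$-window starts $i \neq j$, the collision event depends on at most $2B \leq \kappa$ bits of $g_1$, so almost-$\kappa$-wise independence bounds its probability by $\sum_{r \in \bit^B}(2^{-2B} + \eps) = 2^{-B} + 2^B \eps = 1/\poly(n)$; a union bound over the $O(n^2)$ pairs exactly mirrors the uniform analysis inside Theorem~\ref{Bdistinctandinterval}.

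For \ref{uniformproperty1} I take $g_2$ periodic with period $T = B_1 = O(s 2^s \log n)$, sampled as $\lceil n/T \rceil$ copies of a single length-$T$ mask $m$. I partition $m$ into $t_2 = T/(s 2^s) = \Theta(\log n)$ sub-blocks of length $s 2^s = \poly\log n$, where the $\ell$-th sub-block is the output of the DNF-PRG of Theorem~\ref{PRGforCNF} (with constant error $1/8$) on seed $v_\ell$, and $(v_1, \ldots, v_{t_2})$ is the trajectory of a length-$t_2$ random walk on a constant-degree $(d, \lambda)$-expander from Theorem~\ref{explicitexpander} whose vertex set is the DNF-PRG seed space. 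For any fixed $x'$ and any length-$B_1$ window $W$ of $x' \oplus g_2$, $W$ contains at least $t_2 - 2 = \Omega(\log n)$ entire sub-blocks, and for each such sub-block the event ``this sub-block, XORed with its fixed piece of $x'$, contains the pattern $p$'' is a DNF on $s 2^s$ variables with $\poly\log n$ monomials of width $s$. Under uniform input its complement has probability at most $(1 - 2^{-s})^{s 2^s - s + 1} \leq e^{-(s-1)} < 1/4$, so the DNF-PRG gives per-sub-block bad-probability $\alpha < 1/2$. Theorem~\ref{ExapanderRWHitting} then bounds the probability that every sub-block in $W$ is bad by $(\alpha + \lambda)^{\Omega(\log n)} = 1/\poly(n)$ for $\lambda$ a small enough constant, and a union bound over the $O(n)$ windows delivers Property 1. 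The seed cost is $\poly\log\log n + O(t_2) = O(\log n)$.

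For \ref{uniformproperty2} the main obstacle is that the natural per-sub-block predicate ``this sub-block contains a good $p$-split point'' is not a DNF but the depth-$3$ formula $\bigvee_i \bigl(\phi_i \wedge \bigwedge_{j \in (i, i + 2^s/2]} \bar\phi_j\bigr)$, so Theorem~\ref{PRGforCNF} is not directly applicable. The key observation that lets me bypass this is that for Property 2 the sub-block length $L = \Theta(2^s) = O(\log n)$ is only polylogarithmic in $n$, so I can afford to replace the DNF-PRG of Property 1 with an $\eps$-almost $L$-wise independent generator (Theorem~\ref{almostkwiseg}) of seed length $O(\log n)$ and error $\eps = 1/\poly(n)$; this fools \emph{every} Boolean event on the $L$ sub-block bits (including the depth-$3$ predicate above) within additive error $2^L \eps = 1/\poly(n)$. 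Under a uniform sub-block of length $\Theta(2^s)$, restricting attention to ``good $p$-split points'' in the first half of the sub-block (so that the forward-looking check on positions $(i, i + 2^s/2]$ never leaves the sub-block) already gives the event constant probability, so the per-sub-block bad-probability under the PRG is still bounded by some constant $\alpha_0 < 1$. The rest of $g_3$ mirrors $g_2$: periodic mask of period $T' = B_2 = O(2^s \log n)$, sub-blocks produced by this PRG, seeds drawn along a length-$\Theta(\log n)$ random walk on a constant-degree expander over the PRG's seed space. Theorem~\ref{ExapanderRWHitting} then gives probability $(\alpha_0 + \lambda)^{\Omega(\log n)} = 1/\poly(n)$ that every sub-block in a length-$B_2$ window is bad, a union bound over the $O(n)$ windows delivers Property 2, and the total seed for $g_3$ is $O(\log n)$. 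Summing over the three generators yields $r = O(\log n)$, and the final union bound over the three properties completes the theorem.
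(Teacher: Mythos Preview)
Your proof is correct and follows the same overall scheme as the paper: $g = g_1 \oplus g_2 \oplus g_3$ on independent seed pieces, one generator per property, combined via the ``freeze the other two seeds'' reduction and a final union bound. Your treatment of \ref{Bdistinct} (almost $2B$-wise independence) and of \ref{uniformproperty1} (DNF-PRG on sub-blocks of length $s2^s$, seeds driven by a length-$\Theta(\log n)$ expander walk, periodic mask) coincides with the paper's Lemmas~\ref{g3} and~\ref{g1}.

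The one substantive difference is \ref{uniformproperty2}. You correctly observe that ``the sub-block contains a $p$-split point whose next $p$-split point is more than $2^s/2$ away'' is a depth-$3$ predicate, not a DNF, and you handle this by replacing the DNF-PRG with an $\eps$-almost $L$-wise independent generator on sub-blocks of length $L = \Theta(2^s) = O(\log n)$, which fools \emph{every} event on $L$ bits within $2^L\eps$. This works, though your claim that the event has constant probability under uniform is asserted rather than proved; it follows for instance from $\Pr\bigl[\exists\, p\text{-split in }[1,2^s]\ \wedge\ \text{no }p\text{-split in }(2^s, 2\cdot 2^s]\bigr] \approx (1-e^{-1})e^{-1} > 0$. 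The paper sidesteps the depth-$3$ issue entirely: it takes sub-blocks of length $b_0 = 2^s/2 + s$ and uses the much simpler per-sub-block event ``this sub-block contains \emph{no} occurrence of $p$'', which under uniform holds with probability $\geq 1/2$ by a union bound. If some sub-block is $p$-free, then the nearest $p$-split points on either side are more than $2^s/2$ apart, yielding a good split point. Since $b_0 = O(\log n)$, the paper then runs the expander walk \emph{directly} on the vertex set $\{0,1\}^{b_0}$, with no inner PRG at all. Your route is heavier but equally valid; the paper's buys a cleaner per-sub-block analysis and avoids the extra generator layer.
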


$g$ is the xor of three generators, each of which generates a string satisfying one of the three properties. We utilize random walks on expander graphs and PRG for $\AC^0$ circuits to reduce the seed length of generators for \ref{uniformproperty1} and \ref{uniformproperty2}. And we use almost $\kappa$-wise independence generator for \ref{Bdistinct}.








\begin{lemma}
\label{g1}
There exists an algorithm (generator) $g_1$ s.t. for every $   n \in \mathbb{N},  x\in \{0,1\}^n$,  with probability $1-1/\poly(n)$,  $g_1(n, U_{r_1}) + x  $ satisfies \ref{uniformproperty1}, where  $r_1 = O(\log n)$.
\end{lemma}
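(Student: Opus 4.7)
The plan is to let $g_1$ output a periodic mask of length $n$ built from $n/T$ repetitions of a single core mask $z \in \bit^T$ of length $T = \Theta(s 2^s \log n)$, chosen so that $B_1 = 2T$. It then suffices to guarantee, for every period-aligned block $x_i := x[(i-1)T+1, iT]$, that $x_i \oplus z$ contains the pattern $p$: then any length-$B_1$ interval contains a full aligned block and hence a $p$-split point. To construct $z$, I partition it into $t = \Theta(\log n)$ sub-blocks of length $s 2^s$, where the $j$-th sub-block is $\mathsf{PRG}(X_j)$ for the DNF-fooling generator of Theorem \ref{PRGforCNF}, and the seeds $X_0, \ldots, X_{t-1}$ come from a random walk on a constant-degree expander.

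For any fixed $\xi \in \bit^{s 2^s}$ and uniform $w \in \bit^{s 2^s}$, the event ``$w \oplus \xi$ contains $p$ starting at some position whose length-$s$ window lies entirely inside the sub-block'' is a DNF in $w$ with $\Theta(s 2^s) = \poly\log n$ terms of width $s$ (XORing with a fixed string preserves DNF structure and size). Restricting to $\Omega(2^s)$ disjoint length-$s$ windows, this probability is at least $1 - (1 - 2^{-s})^{\Omega(2^s)} \ge 1 - 1/e - o(1)$. Applying Theorem \ref{PRGforCNF} with a sufficiently small constant error $\eps_0$, I get $\mathsf{PRG} : \bit^d \to \bit^{s 2^s}$ with $d = \poly\log\log n$ such that for every $\xi$ the bad set
\[
A^{(i)}_j := \{ v \in \bit^d : \mathsf{PRG}(v) \oplus x_i^{(j)} \text{ contains no pattern wholly inside the sub-block} \}
\]
has density at most some constant $\alpha < 1$, where $x_i^{(j)}$ denotes the $j$-th sub-block of $x_i$.

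Next I invoke Theorem \ref{explicitexpander} to obtain a constant-degree $\lambda$-expander $G$ on $\bit^d$ with $\lambda$ so small that $\alpha + \lambda < 1$, and take a random walk $X_0, \ldots, X_{t-1}$ on $G$ using $d + (t-1) \cdot O(1) = O(\log n)$ random bits. For each fixed block $i$, Theorem \ref{ExapanderRWHitting} gives
\[
\Pr[X_j \in A^{(i)}_j \text{ for all } j \in \{0, \ldots, t-1\}] \le (\alpha + \lambda)^{t-1} \le 1/\poly(n)
\]
by choosing the hidden constant in $t = \Theta(\log n)$ large enough. Hence with this probability at least one sub-block of $x_i \oplus z$ contains a split point, so $x_i \oplus z$ does. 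A union bound over the at most $n/T \le n$ aligned blocks yields \ref{uniformproperty1} with failure probability $1/\poly(n)$ and total seed length $r_1 = d + O(t) = O(\log n)$.

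The main obstacle is the seed budget: directly PRG-fooling each sub-block event down to error $1/\poly(n)$ would require seed length $\poly\log n$ per sub-block by Theorem \ref{PRGforCNF}, blowing past the $O(\log n)$ total budget; coupling a constant-error DNF PRG with expander-walk amplification is what rescues the seed length. The one mild subtlety is to restrict each sub-block's DNF to patterns wholly inside that sub-block, so that $A^{(i)}_j$ is a function of $X_j$ alone and Theorem \ref{ExapanderRWHitting} applies cleanly; this trims only $O(s)$ terms from the DNF and still leaves $\Omega(2^s)$ disjoint windows for the $1 - 1/e$ lower bound.
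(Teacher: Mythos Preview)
Your proposal is correct and follows essentially the same approach as the paper: a periodic mask whose period is broken into $\Theta(\log n)$ sub-blocks of length $\Theta(s2^s)$, each generated by a constant-error DNF PRG (Theorem~\ref{PRGforCNF}), with the sub-block seeds produced by a random walk on a constant-degree expander so that Theorem~\ref{ExapanderRWHitting} amplifies the per-sub-block constant success probability to $1-1/\poly(n)$ per aligned block, followed by a union bound over blocks. Your write-up is in fact slightly more careful than the paper's in two places: you explicitly take $B_1=2T$ so that an arbitrary length-$B_1$ interval contains a full aligned block (the paper only argues about aligned blocks), and you note that the DNF must be restricted to pattern occurrences wholly inside a sub-block so that each bad event depends on a single walk vertex.
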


\begin{proof}

Regard $x$ as a sequence of  blocks, each of length $ B_1 = O(s 2^s \log n) $. W.l.o.g. we assume every $n$ is dividable by $B_1$. Since if not, we can simply ignore the last block whose length is less than $B_1$. Denote $t = n/B_1$.

We show how to generate a block $z \in \{0,1\}^{B_1}$ s.t.   $ x[1] + z$ contains a $p$-split point with probability $1-1/\poly(n)$.

We further divide $x[1]$ into a sequence of $t_0 = B_1/b_0 = O(\log n)$ smaller blocks, each of length $b_0 = \Theta(s 2^s)$.
Correspondingly we view $z$ as a sequence of $t_0$ blocks, each of length $b_0$.

For every $i\in [t_0]$, if $z[i]$ is uniform, then the probability that there is a $p$-split point in $ x[1][i] + z[1]$ is at least $2/3$. This is because for every consecutive $s$ bits, the probability that it is $p$ is $\frac{1}{2^s}$. Since there are $b_0/s$ number of distinct $s$-bits substrings in $ x[1][i] + z[1]$, the probability that none of them is $p$ is $(1-\frac{1}{2^s})^{b_0/s}$. So the probability that at least one of them is $p$ is  $1 - (1-\frac{1}{2^s})^{b_0/s} = 1 - (1-\frac{1}{2^s})^{O(2^s)}\geq 2/3$ if we let the constant in $b_0$ to be large enough. Note that checking whether $ x[1][i] + z[1]$ has a $p$-split point can be done by a CNF/DNF $f$ of size $m = O(b_0 s)$ since we can set up a test for every consecutive $s$-bits substring and each test checks whether the corresponding $s$-bits substring is $p$.  So if we instead use the generator $\tilde{g}$ from Theorem
\ref{PRGforCNF}, with error $\tilde{\eps} = 1/6 $, to generate $z[i]$, i.e. $z[i] = \tilde{g}(b_0, m = O(b_0), 1/6 , U_{\tilde{r}})$ then
\[\Pr[f(x[1][i]+z[i]) = 1]    \geq   \Pr[f(x[1][i]+U_{b_0}) = 1] - \tilde{\eps} \geq 2/3 - 1/6 = 1/2, \]
where  $\tilde{r} = O(\log^{2} m  \log \log m ) = \poly(\log \log n)$.

We generate $z$ by doing a random walk of length $ t_0 = O(\log n) $, on an $(2^{\tilde{r}}, \Theta(1), \lambda)$-expander graph with constant $\lambda \in (0,1)$.  The expander we use is from the strongly explicit expander family of Theorem \ref{explicitexpander}, so the random walk can be done in polynomial time.

Denote the vertices reached sequentially in the random walk as $w_1, w_2, \ldots, w_{t_0} \in \{0,1\}^{\tilde{r}}$.
The total length of random bits used here is $r_1 = \tilde{r} + O(t_0) = O(\log n)$. Note that for every $i$, the probability that $x[1][i] + z[i]$ contains a $p$-split point is  $\Pr[f(x[1][i] + z[i]) = 1] \geq 1/2$. By Theorem \ref{ExapanderRWHitting}, the probability that  $ x[1] + z$ does not contain a $p$-split point is
\[  \Pr[\forall i,  f(x[1][i] + z[i]) = 0  ] \leq  (\frac{1}{2}+\lambda)^{O(t_0)} = 1/\poly(n),
\]
if we pick the constant in $t_0$ to be large enough and $\lambda$ to be some small constant e.g. $1/3$.

Let $g_1(n, U_{r_1}) $ be the concatenation of $t$ number of strings $z$.\ By a union bound, with probability $1-t/\poly(n) = 1 - 1/\poly(n)$, every block of $x + g_1(n, U_{r_1})$ has a $p$-split point. Note that $r_1 = \tilde{r}+ O(\log n) = O(\log n)$.
\end{proof}

Next we give a generator for \ref{uniformproperty2}.
\begin{lemma}
\label{g2}
There exists an algorithm (generator) $g_2$ s.t. for every $ n \in \mathbb{N}, \forall x\in \{0,1\}^n$,  with probability $1-1/\poly(n)$,  $g_1(n, U_{r_2}) + x \in \{0, 1\}^{n}$ satisfies \ref{uniformproperty2},  where  $r_2 = O(\log n)$.
\end{lemma}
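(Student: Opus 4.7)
The plan is to follow the proof of Lemma~\ref{g1} almost verbatim, replacing only the sub-block test to enforce \ref{uniformproperty2}. I would partition $x$ into $t = n / B_2$ consecutive blocks of length $B_2 = O(2^s \log n)$ (ignoring a tail of length $< B_2$, as in the proof of Lemma~\ref{g1}) and let $g_2(n, U_{r_2})$ be $t$ copies of a single length-$B_2$ mask $z$. The mask $z$ itself is the concatenation of $\bar t = O(\log n)$ sub-block masks of length $b = 2 \cdot 2^s$, each produced by the CNF/DNF PRG $\tilde g$ of Theorem~\ref{PRGforCNF} on a seed $w_i$, where $w_1, \dots, w_{\bar t}$ are the successive vertices of a random walk on an explicit constant-degree $(2^{\tilde r}, \Theta(1), \lambda)$-expander from Theorem~\ref{explicitexpander}; the total seed length is $r_2 = \tilde r + O(\bar t) = O(\log n)$.

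The core change is the test applied inside each sub-block. On a sub-block $w \in \{0,1\}^b$ I would test the event $E' := D_1 \wedge \neg D_2$, where $D_1$ is the DNF ``some $i \leq b/4$ is a $p$-split point of $w$'' and $D_2$ is the DNF ``there exist $i \leq b/4$ and $i' \in (i, i + 2^s/2]$ that are both $p$-split points of $w$''. Both are polynomial-size DNFs (of size $\poly(2^s) = \poly \log n$), and since $D_2 \Rightarrow D_1$ we have $\Pr_U[E'] = \Pr_U[D_1] - \Pr_U[D_2]$. If $E'$ holds for a sub-block sitting at absolute position $l$ in $x$ and $i^*$ is the first $p$-split point of the sub-block in $[1, b/4]$, then the verification window $(i^*, i^* + 2^s/2]$ lies fully inside the sub-block (because $b/4 + 2^s/2 \leq b$), so the sub-block-local statement ``no $p$-split point in $(i^*, i^* + 2^s/2]$'' coincides with the same statement in $x$; hence $l + i^* - 1$ is a good $p$-split point of $x$ in the sense of \ref{uniformproperty2}.

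For the quantitative estimates, I would use that for $p = 1 \circ 0^{s-1}$ any two distinct $p$-split points differ by at least $s$, so the indicator variables of the positions in $[1, b/4-s+1]$ have non-positive covariance and a second-moment computation gives $E[N] \to 1/2$ and $\Pr_U[D_1] \geq (E[N])^2/E[N^2] \geq 1/3$ for $s$ large enough. A union bound over the at most $(b/4)(2^s/2) = 2^{2s-2}$ candidate pairs gives $\Pr_U[D_2] \leq 1/4$, hence $\Pr_U[E'] \geq c_0 := 1/12$. Since XOR by a fixed string is a measure-preserving bijection and $\tilde g$ individually $\tilde\eps$-fools $D_1$ and $D_2$ as DNFs, for every fixed window $x_{\mathsf{sub}}$ of $x$ we obtain $\Pr_w[E'(x_{\mathsf{sub}} \oplus \tilde g(w))] \geq c_0 - 2\tilde\eps \geq c_0/2$ by picking $\tilde\eps = c_0/4$; Theorem~\ref{PRGforCNF} achieves this with $\tilde r = \poly \log \log n$. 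Invoking Theorem~\ref{ExapanderRWHitting} on the bad sets $A_i := \{w : \neg E'(x_{\mathsf{sub},i} \oplus \tilde g(w))\}$, each of density $\leq 1 - c_0/2$, yields
\[
\Pr[\forall i \in [\bar t], w_i \in A_i] \leq (1 - c_0/2 + \lambda)^{\bar t - 1} \leq 1/\poly(n)
\]
for sufficiently small constant $\lambda$ and $\bar t = \Theta(\log n)$ with a large enough constant. A union bound over the $t$ blocks (together with at most a factor-$2$ enlargement of $B_2$ so that every interval of the stated length starting at a $p$-split point contains a full aligned block) establishes \ref{uniformproperty2} with probability $1 - 1/\poly(n)$.

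The main obstacle I anticipate is the design of $E'$: the natural check ``the sub-block contains a good $p$-split point'' is most naturally an OR of CNFs and is not itself a polynomial-size DNF or CNF, so Theorem~\ref{PRGforCNF} cannot be applied to it directly. The inclusion-exclusion decomposition $E' = D_1 \wedge \neg D_2$ with the ``first quarter'' restriction on both $D_1, D_2$ is chosen precisely to (i) make each of $D_1, D_2$ a polynomial-size DNF, (ii) maintain $D_2 \Rightarrow D_1$ so that the two PRG errors add linearly rather than requiring a depth-$3$ PRG, and (iii) keep $\Pr_U[D_1] - \Pr_U[D_2]$ bounded below by a positive constant while forcing any witness of $E'$ to lie in a window of length $< b$, so the whole test remains a local one that the sub-block mask can see.
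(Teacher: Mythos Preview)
Your proof is correct, but it takes a noticeably more elaborate route than the paper's. The paper's proof exploits a simpler local test: it partitions each block of length $B_2$ into sub-blocks of length $b_0 = 2^s/2 + s$ and tests the event ``the sub-block contains \emph{no} occurrence of $p$ at all.'' Under the uniform distribution this fails with probability at most $(2^s/2)\cdot 2^{-s} = 1/2$ by a bare union bound, so no second-moment computation and no inclusion--exclusion decomposition into $D_1, D_2$ is needed. A $p$-free sub-block $x[j,j']$ then yields a good $p$-split point by taking $j_1$ to be the last $p$-split point before $j$; its next $p$-split point must lie beyond $j'-s+1$, giving a gap of $b_0 - s + 1 > 2^s/2$.

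The second simplification in the paper is that, since $b_0 = O(\log n)$ (as is your $b = 2\cdot 2^s$), the expander walk can be taken directly on the vertex set $\{0,1\}^{b_0}$ of size $\poly(n)$, dispensing with the CNF/DNF PRG layer entirely. In Lemma~\ref{g1} the PRG is genuinely needed because there the sub-block length is $\Theta(s 2^s) = \Theta(\log n \log\log n)$, which would make the expander super-polynomial; here that obstruction disappears, but your template-following approach does not take advantage of this. Your construction still works---$D_1$ and $D_2$ are indeed $\poly(\log n)$-size DNFs, the identity $\Pr[E'] = \Pr[D_1] - \Pr[D_2]$ is valid because $D_2 \Rightarrow D_1$, and the window arithmetic $b/4 + 2^s/2 + s - 1 \le b$ checks out---so the argument is sound, just heavier than necessary. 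What the paper's approach buys is a one-line probability bound and no dependence on Theorem~\ref{PRGforCNF}; what yours buys is a uniform template that would continue to work if the sub-block size were pushed above $\Theta(\log n)$.
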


\begin{proof}

Again let's view $x$ as a sequence of  blocks, each of length $ B_2 = O( 2^s \log n) $. W.l.o.g. we assume every $n$ is dividable by $B_2$, since if not, we can simply ignore the last block which is not a whole block. Denote $t = n/B_2$.

We show how to generate a block $z \in \{0,1\}^{B_2}$ s.t.   with probability $1-1/\poly(n)$, $ x[1] + z$ contains an interval of length at least $ 2^s/2 + s $ which does not contain a substring $p$.

We further divide $x[1]$ into a sequence of $t_0 = B_2/b_0 = O(\log n)$ smaller blocks, each of length $b_0 =  2^s/2 +s $.
Correspondingly we view $z$ as a sequence of $t_0$ blocks, each of length $b_0$.

For every $i\in [t_0]$, if $z[i]$ is uniform, then we claim that the probability that
$x[1][i]+z[i]$ contains a substring $p$ is at most $\frac{1}{2^s} \cdot ( \frac{2^s}{2}  ) = \frac{1}{2}  $. This is because for a fixed consecutive $s$ bits the probability, that it is $p$, is $\frac{1}{2^s}$. There are at most $ \frac{2^s}{2}  $ different intervals of length $s$. The claim holds by a union bound.

We generate $z$ by doing a random walk, of length $ t_0 = O(\log n) $, on an $(2^{b_0}, \Theta(1), \lambda)$-expander graph with constant $\lambda \in (0,1)$.  The expander we use is from the strongly explicit expander family of Theorem \ref{explicitexpander}, so the random walk can be done in polynomial time.

Denote the vertices reached sequentially in the random walk as $w_1, w_2, \ldots, w_{t_0} \in \{0,1\}^{b_0}$.
The total length of random bits used here is $r_2 = b_0 + O(t_0) = O(\log n)$. Note that for every $i$, the probability that $x[1][i] + z[i]$ contains  $p$ is  at most $1/2$. By Theorem \ref{ExapanderRWHitting}, the probability that  every block of  $ x[1] + z$   contains   $p$  is
\[  \Pr[\forall i,  x[1][i] + z[i] \mbox{ contains a substring } p  ] \leq  (\frac{1}{2}+\lambda)^{O(t_0)} = 1/\poly(n),
\]
if we pick the constant in $t_0$ to be large enough and $\lambda$ to be some small constant e.g. $1/3$.

Let $g_2(n, U_{r_2}) = z^t$.
By a union bound, the probability that every block of $x + g_2(n, U_{r_2})$ has at least $1$ sub-block which does not contain $p$ as a substring, is at least $1-t/\poly(n) = 1 - 1/\poly(n)$.

Note that for every interval of length $2B_2+s$ starting at a $p$-split point $i$, the last $2B_2$ bits must contain a block of length $b_0$ which does not contain substring $p$. Assume it's $x[j, j']$. We can find the left closest $p$-split point to the left end of $x[j, j']$. Assume its index is $j_1$. Note that this means $ i \leq j_1 \leq j-1 $. Assume the right closest $p$-split point to $x[j'-s+1] $ is $ j_2 $. We know $j_2 > j'-s+1 $. So $j_2 - j_1 > j'-s+1 -j+1 = b_0 + 1 - s > 2^s/2$.

Thus \ref{uniformproperty2} is satisfied.
\end{proof}

\begin{lemma}
\label{g3}
There exists an algorithm (generator) $g_3$ s.t. for every $  n \in \mathbb{N},  x\in \{0,1\}^n$,  with probability $1-1/\poly(n)$,  $g_3(n, U_r) + x  $ is \ref{Bdistinct},  where  $r_3 = O(\log n)$.
\end{lemma}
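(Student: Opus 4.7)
The plan is to instantiate $g_3$ directly from the almost $\kappa$-wise independence generator of Theorem~\ref{almostkwiseg}, mimicking the uniform-case calculation for \ref{Bdistinct} in Theorem~\ref{Bdistinctandinterval}. Specifically, set $B = 3\log n$, $\kappa = 2B$, and $\eps = 1/n^{c}$ for a sufficiently large constant $c$. Theorem~\ref{almostkwiseg} then gives an $\eps$-almost $\kappa$-wise independence generator $g_3:\{0,1\}^{r_3}\to\{0,1\}^n$ with $r_3 = O(\log(\kappa\log n /\eps)) = O(\log n)$, which is the desired seed length.

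To establish the high-probability $B$-distinctness of $y := x + g_3(n, U_{r_3})$, I would fix a pair of starting positions $1 \le i < j \le n-B+1$ and express the collision event $E_{i,j} := \{y[i,i+B) = y[j,j+B)\}$ as a system of $B$ linear equations over $\mathbb{F}_2$ in the bits $z_s$ of $z := g_3(n,U_{r_3})$, namely $z[i+k] + z[j+k] = x[i+k]+x[j+k]$ for $k\in[0,B)$. The variables involved lie in $S := \{i,\dots,i+B-1\}\cup\{j,\dots,j+B-1\}$ with $|S|\le 2B = \kappa$. I would split into two cases: if $j-i\ge B$ (disjoint windows), the $B$ equations involve $2B$ distinct bits of $z$ and admit exactly $2^B$ solutions out of $2^{2B}$; if $p := j-i < B$ (overlapping windows), the recurrence $z[i+k+p] = (x[i+k]+x[j+k]) + z[i+k]$ determines $z[i+p],\dots,z[i+B+p-1]$ from the $p$ free bits $z[i],\dots,z[i+p-1]$, giving $2^p$ solutions out of $2^{B+p}$. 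In both cases the number of satisfying assignments is $2^{|S|-B}$.

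By $\eps$-almost $\kappa$-wise independence, each specific assignment to the $|S|\le 2B$ bits of $z$ occurs with probability at most $2^{-|S|}+\eps$, so summing over the $2^{|S|-B}$ good assignments gives
\[
\Pr[E_{i,j}] \;\le\; 2^{|S|-B}\bigl(2^{-|S|}+\eps\bigr) \;\le\; 2^{-B} + 2^{2B}\eps.
\]
A union bound over the at most $n^2$ pairs $(i,j)$ yields
\[
\Pr[y \text{ is not } B\text{-distinct}] \;\le\; n^2\bigl(2^{-B}+2^{2B}\eps\bigr) \;=\; \tfrac{1}{n} + n^2 \cdot n^{6}\eps,
\]
which is $1/\poly(n)$ once $c \ge 9$. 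Choosing $g_3$ as above therefore satisfies the lemma.

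The only nontrivial point is the counting for overlapping windows, but the recurrence $z[i+k+p] = c_k + z[i+k]$ makes the solution count exactly $2^p$ immediately; no further obstacle is expected. Evaluation of $g_3$ in $\poly(n)$ time follows from the explicitness guarantee of Theorem~\ref{almostkwiseg}.
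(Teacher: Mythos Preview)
Your proposal is correct and follows essentially the same approach as the paper: instantiate $g_3$ as an $\eps$-almost $\kappa$-wise independence generator with $\kappa=2B$ and $\eps=1/\poly(n)$, bound the collision probability for each pair of length-$B$ windows by $2^{-B}+\poly(n)\cdot\eps$, and conclude by a union bound over the $O(n^2)$ pairs. If anything, your treatment is more careful than the paper's, which writes the bound $\Pr[u=a,v=a]\le 2^{-2B}+\eps$ uniformly without explicitly handling the overlapping-window case; your solution-counting via the recurrence $z[i+k+p]=c_k+z[i+k]$ makes that case rigorous.
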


\begin{proof}

Let $g_3$ be the $\eps$-almost $\kappa$-wise independence generator from Theorem \ref{almostkwiseg}, where $ \eps = 1/\poly(n)$, $\kappa = 2B$, seed length $r_3 = O(\log \frac{\kappa \log n}{ \eps })= O(\log n)$.

Given a fixed $x$, $g_3(n, U_{r_3})+x$ is $\eps$-almost $\kappa$-wise independent. So for every pair of two intervals $u, v \in \{0,1\}^{B}$ of it,
\[
	\Pr[u  = v] = \sum_{a \in \{0,1\}^{B}} \Pr[ u = a, v = a ] \leq \sum_{a \in \{0,1\}^{B}} (\frac{1}{2^{2B}} + \eps) \leq \frac{1}{2^B} + 1/\poly(n) \leq \frac{1}{2^{B-1}},
\]
where the first inequality is due to the definition of $\eps$-almost $\kappa$-wise independence. The second inequality holds since $\eps = 1/\poly(n)$ is small enough. The third inequality is because  $\eps$ is small enough and we can take the constant in $ B =O(\log n)$ to be large enough.
By a union bound over all $O(n^2)$ pair of $u,v$, it concludes that $g_3(n, U_{r_3}) + x \in \{0, 1\}^{n}$ is \ref{Bdistinct} with probability $1- 1/\poly(n)$ since $B$ is large enough.

\end{proof}

Now we can prove Theorem \ref{gfor3property}.
\begin{proof}[Proof of Theorem \ref{gfor3property}]

Let $g(n, U_r) = g_1(n, U_{r_1}) + g_2(n, U_{r_2}) + g_3(n, U_{r_3})$, where $ U_r = U_{r_1} \circ U_{r_2}\circ U_{r_3}$ is uniform random, $r = r_1 + r_2 + r_3 = O(\log n)$.

Fix $x \in \{0,1\}^n$. Note that $U_{r_1}, U_{r_2}, U_{r_3}$ are independent. So we have the following.

By Lemma \ref{g1}, with probability $1-1/\poly(n)$,  $g(n, U_r) + x  = g_1(n, U_{r_1})+ ( x + g_2(n, U_{r_2}) + g_3(n, U_{r_3}) )$ has \ref{uniformproperty1}.

By Lemma \ref{g2}, with probability $1-1/\poly(n)$, $g(n, U_r) + x  = g_2(n, U_{r_2})+ ( x + g_1(n, U_{r_1}) + g_3(n, U_{r_3}) )$ has \ref{uniformproperty2}.

By Lemma \ref{g3}, with probability $1-1/\poly(n)$, $g(n, U_r) + x  = g_3(n, U_{r_3})+ ( x + g_1(n, U_{r_1}) + g_2(n, U_{r_2}) )$ has \ref{Bdistinct}.

So by the union bound, with probability $1-1/\poly(n)$,  $g(n, U_r) + x \in \{0, 1\}^{n}$ has \ref{uniformproperty1}, \ref{uniformproperty2} and \ref{Bdistinct}.

\end{proof}

\subsection{Constructing binary ECC using redundancies}
Based on our results of document exchange protocols, we can construct binary codes which can correct up to $k$ edit errors. The idea is that the sketch sent by Alice in the document exchange protocol can be used to reconstruct the original message.





\begin{lemma}
\label{concatenationlem}
For every $n,r, k \in \mathbb{N}$, every $x\in \{0,1\}^n, z\in \{0,1\}^r$, if $C$ is a binary code with message length $r$, codeword length $n_C$, that can correct up to $4k$ edit errors, then for every $y\in \{0,1\}^*$ s.t. $\ED(y, x\circ C(z)) \leq k$, one can get $z$ from $y$ using the decoding algorithm of $C$.
\end{lemma}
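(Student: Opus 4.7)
The plan is to show that a suitably chosen suffix $y'$ of $y$ has edit distance at most $4k$ from $C(z)$; then applying the decoder of $C$ to $y'$ recovers $z$.

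First, I would invoke the standard subadditivity property of edit distance over concatenations: given $\ED(y, x \circ C(z)) \leq k$, any optimal edit script aligning $y$ with $x \circ C(z)$ induces a split $y = y_1 \circ y_2$ such that $\ED(y_1, x) + \ED(y_2, C(z)) \leq k$. Since $|x| = n$ and $\ED(y_1, x) \leq k$, this forces $|y_1| \in [n-k, n+k]$ (assuming $n \geq k$; the degenerate case $n < k$ is handled separately by taking $y' = y$ and using $\ED(y, C(z)) \leq k + n \leq 2k$).

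Next, I would define $y' := y[n-k+1, |y|]$, which is explicitly computable by the receiver without any knowledge of the true split point. Because $|y_1| \geq n-k$, the suffix $y'$ decomposes as some suffix of $y_1$ of length $|y_1| - (n-k) \leq 2k$ concatenated with the entirety of $y_2$. Deleting those at most $2k$ leading characters transforms $y'$ into $y_2$, so $\ED(y', y_2) \leq 2k$. By the triangle inequality for edit distance,
\[
\ED(y', C(z)) \;\leq\; \ED(y', y_2) + \ED(y_2, C(z)) \;\leq\; 2k + k \;=\; 3k \;\leq\; 4k.
\]
Since $C$ corrects up to $4k$ edit errors by hypothesis, feeding $y'$ into the decoder of $C$ returns $z$.

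There is no real obstacle here: the only nontrivial ingredient is the decomposition of the edit distance of a concatenation, which is a classical fact (from any optimal alignment, mark the position in $y$ corresponding to the boundary between $x$ and $C(z)$). The slack of $k$ in the requirement $4k$ is actually comfortable, which is useful later when the same construction is composed with another sketch for $x$, since then the effective edit distance from some prefix of $y$ to $x$ also needs a bound.
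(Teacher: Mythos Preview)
Your proof is correct and takes essentially the same approach as the paper: both select the suffix $y' = y[n-k+1,\,|y|]$ and argue that $\ED(y', C(z))$ falls within the decoding radius of $C$. The only difference is in how that bound is established. The paper argues directly via the LCS identity $\ED = |y'| + n_C - 2\,\LCS$, observing that at least $n_C - k$ symbols of $C(z)$ survive in $y'$ and that $|y'| \le n_C + 2k$, which yields $\ED(y', C(z)) \le 4k$. Your alignment-split argument (find $y = y_1 \circ y_2$ with $\ED(y_1,x)+\ED(y_2,C(z))\le k$, then peel off at most $2k$ extra leading symbols) is a slightly different bookkeeping that actually gives the tighter bound $3k$; the paper's extra $k$ of slack is not used anywhere downstream, so nothing is lost either way.
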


\begin{proof}

We can run the decoding of $C$ on $y[n+1-k, |y|]$, where $|y| \leq n+n_C + k$. Since there are at most $k$ edit operations,  $\LCS(y[n+1-k, |y|], C(z)) \geq n_C-k$. So
\begin{equation}
\begin{split}
\ED(y[n+1-k, |y|], C(z) ) & = |y[n+1-k, |y|]| + n_C - 2\LCS(y[n+1-k, |y|], C(z)) \\
 & \leq |y[n+1-k, |y|]| + n_C - 2(n_C-k)\\
 & = |y|-(n + 1-k) + 1 + n_C - 2(n_C-k) \\
 & \leq (n+ n_C +k) - (n+1-k)+1 + n_C - 2(n_C-k)\\
 &\leq 4k.
\end{split}
\end{equation}
Thus the decoding can output the correct $z$.

\end{proof}

\begin{theorem}
\label{codeFromSk}

If there exists an explicit document exchange protocol with communication complexity $r(n,k)$, where $n\in \mathbb{N}$ is the input size and $k \in \mathbb{N}$ is the upper bound on the edit distance, then there exists an explicit family of binary ECCs with codeword length $n_C=n+O(r(n,k))$, message length $n$, that can correct up to $k$ edit errors.

\end{theorem}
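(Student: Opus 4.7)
The plan is to combine the given document exchange protocol with the asymptotically good binary insdel code of Schulman and Zuckerman from Theorem~\ref{asympGoodECCforInsdel}. The codeword consists of the raw message followed by a heavily protected short sketch; the sketch is tuned for edit distance bound $3k$ rather than $k$, in order to tolerate the slack introduced by not knowing precisely where the image of $x$ ends inside the received string.

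For encoding, given $x \in \bit^n$ I would run Alice's side of the hypothesized document exchange protocol with error parameter $3k$ to obtain a sketch $z$ of length $r' = r(n, 3k)$. Padding $z$ if necessary so that its length is at least a sufficiently large constant multiple of $k$, I then apply the Schulman--Zuckerman code to produce $\Enc_{SZ}(z)$ of length $n_0 = O(r')$ which can correct $4k$ edit errors. The output is $\Enc(x) := x \circ \Enc_{SZ}(z)$, of length $n + n_0 = n + O(r(n, k))$.

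For decoding, suppose the decoder receives $y$ with $\ED(y, \Enc(x)) \leq k$. First, I would apply Lemma~\ref{concatenationlem} directly to the suffix $y[n+1-k, |y|]$ to recover $z$; this step uses only the fact that $\Enc_{SZ}$ corrects $4k$ edit errors. Second, I would set $y' := y[1, n+k]$ (well-defined since $|y| \geq n + n_0 - k \geq n+k$ for any reasonable $n_0$) and argue that $\ED(y', x) \leq 3k$: the image of $x$ under the at most $k$ edits occupies some prefix $y[1, n']$ with $|n' - n| \leq k$, so $\ED(y[1, n'], x) \leq k$, and extending the prefix to length $n+k$ costs at most $2k$ additional insertions or deletions, for a total of at most $3k$. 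Finally, I would invoke Bob's algorithm from the document exchange protocol on input $y'$ together with the recovered sketch $z$; since $z$ is a valid sketch for edit distance $3k$ and $\ED(x, y') \leq 3k$, the protocol's guarantee produces $x$ exactly.

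The main technical obstacle is keeping the parameter accounting clean. I need $r(n, 3k) = O(r(n, k))$ so that the stated redundancy $n + O(r(n,k))$ holds, and I need $r(n,k) = \Omega(k)$ so that the Schulman--Zuckerman code, applied at message length $r'$, is long enough to accommodate the required $4k$ error correction capacity (if not, a short padding of $z$ resolves it). Both conditions hold for every non-trivial document exchange protocol, since the trivial lower bound $r(n,k) = \Omega(k \log(n/k))$ already forces $r(n,k) \geq \Omega(k)$, and in all protocols constructed in this paper the sketch size depends polynomially on $k$ so tripling $k$ changes $r$ only by a constant factor. Polynomial-time encoding and decoding then follow from the explicitness of the document exchange protocol and of the Schulman--Zuckerman code.
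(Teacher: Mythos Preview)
Your proposal is correct and follows essentially the same approach as the paper: compute the document-exchange sketch for error parameter $3k$, protect it with the Schulman--Zuckerman code against $4k$ edits, append the result to $x$, and on decoding recover the sketch via Lemma~\ref{concatenationlem} and then run Bob's algorithm on the length-$(n+k)$ prefix of the received word. The paper establishes $\ED(c'[1,n+k],x)\le 3k$ via an $\LCS$ calculation rather than your image-of-$x$ argument, and it leaves implicit the parameter compatibility $r(n,3k)=O(r(n,k))$ and $r(n,k)=\Omega(k)$ that you explicitly flag, but otherwise the two proofs coincide.
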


\begin{proof}
The encoding algorithm is as follows: for message $x\in\{0,1\}^n$, we first compute the redundancy $z$ for $x$ and $3k$ edit errors using the document exchange protocol. Then we encode the redundancy $z$ to be $\tilde{c}$ using the binary ECC from Theorem \ref{asympGoodECCforInsdel} which can correct $\alpha$ fraction of errors with constant rate. Here $\alpha$ is a constant such that $\alpha |\tilde{c}| \geq 4k$. Assume the length of the code is $n_0 = O(r)$.
The final codeword $c$ is the concatenation of the original message $x$ and the encoded redundancy $\tilde{c}$. That is, $c=x\circ \tilde{c}$ and $|c| =  n+ n_0$.

By Lemma \ref{concatenationlem}, we can get the redundancy $z$ from the corrupted codeword $c'$. Note that there are at most $k$ edit errors, $
\ED(c'[1, n+k], x) = n+k + n - 2\LCS(c'[1, n+k], x) \leq n+k+n-2(n-k) \leq 3k$.  Here $\LCS(c'[1, n+k], x) \geq n-k$ is because $c'[1, n+k]$ contains a subsequence of $x$ which are the symbols that are not deleted. There are at most $k$ deletions so the subsequence has length at least $n-k$.

Finally we run the document exchange protocol to compute the original message $x$, where Bob's string is $c'[1, n+k]$ and the redundancy from Alice is $z$.


\end{proof}

\subsection{ Binary ECC for edit errors with almost optimal parameters}

A direct corollary of Theorem \ref{codeFromSk} is the following binary ECC.

\begin{theorem}
For any $n, k \in \mathsf{N}$ with $k \leq n/4$, there exists an explicit binary error correcting code with message length $n$, codeword length $n+O(k \log^2 \frac{n}{k})$ that can correct up to $k$ edit errors. 

\end{theorem}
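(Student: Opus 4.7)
The plan is to derive this theorem as an immediate corollary of two results already established in the paper. First, I would invoke Theorem~\ref{deterdocexc}, which supplies an explicit deterministic document exchange protocol with communication complexity $r(n,k) = O(k \log^2 \frac{n}{k})$ running in time $\poly(n)$, for any edit distance bound $k$. Then I would feed this protocol into the generic reduction of Theorem~\ref{codeFromSk}, which transforms any explicit document exchange protocol with communication complexity $r(n,k)$ into an explicit binary ECC with message length $n$ and codeword length $n + O(r(n,k))$ capable of correcting $k$ edit errors.

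Concretely, the construction would be: given a message $x \in \{0,1\}^n$, run Alice's side of the protocol from Theorem~\ref{deterdocexc} with edit-distance parameter $3k$ to obtain a sketch $z$ of length $O(k \log^2 \frac{n}{k})$; then encode $z$ using the asymptotically good binary insdel code from Theorem~\ref{asympGoodECCforInsdel}, with the rate chosen so that the resulting codeword $\tilde{c}$ has length $O(|z|)$ and tolerates $4k$ edit errors (which is feasible since $|z|$ grows at least linearly in $k$ whenever the statement is nontrivial). The output codeword is $x \circ \tilde{c}$. For decoding, Lemma~\ref{concatenationlem} guarantees that the suffix of the received word decodes to $z$, and then running Bob's side of the protocol on the prefix $c'[1, n+k]$ (which has edit distance at most $3k$ from $x$) with sketch $z$ recovers $x$.

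Plugging the parameters in, the codeword length is $n + O(k \log^2 \frac{n}{k})$, which matches the claim. Both encoding and decoding run in $\poly(n)$ time, since the document exchange protocol and the asymptotically good insdel code are both explicit in polynomial time. The constraint $k \le n/4$ ensures we stay in the regime where the protocol of Theorem~\ref{deterdocexc} is meaningful and the target tolerance $3k$ remains within the range $k \le \alpha n$ for which the protocol is designed.

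There is essentially no technical obstacle, as the heavy lifting has already been done in Sections~\ref{sec:determdocexc} and in Theorem~\ref{codeFromSk}. The only minor bookkeeping concern is making sure the slackness factors (using the protocol for $3k$ edit errors, and the inner code for $4k$ edit errors) are set consistently with Lemma~\ref{concatenationlem}, but this is exactly the accounting already done inside the proof of Theorem~\ref{codeFromSk}, so the corollary follows by direct substitution $r(n,k) = O(k \log^2 \frac{n}{k})$.
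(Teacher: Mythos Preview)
Your proposal is correct and matches the paper's own proof exactly: the paper simply states that the result ``follows directly from Theorem~\ref{codeFromSk} and~\ref{deterdocexc},'' and you have spelled out precisely that derivation, including the $3k$/$4k$ slackness bookkeeping already handled inside Theorem~\ref{codeFromSk}.
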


\begin{proof}

It follows directly from Theorem \ref{codeFromSk} and \ref{deterdocexc}.

\end{proof}

\begin{corollary}

There exists a constant $0<\alpha<1$ such that for any  $0<\eps \leq \alpha$ there exists an explicit family of binary error correcting codes with codeword length $n$ and message length $m$, that can correct up to $k=\eps n$ edit errors with rate $m/n=1- O(\eps \log^2 \frac{1}{\eps})$. 
\end{corollary}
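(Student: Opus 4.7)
The plan is to obtain the corollary as a direct parameter calibration of the preceding Theorem (which, given message length $m$ and error budget $k \leq m/4$, produces an explicit binary insdel code of codeword length $m + Ck\log^2(m/k)$ for some absolute constant $C$). The task is simply to re-parameterize so that the error budget is a fixed $\eps$-fraction of the \emph{codeword} length rather than the message length, and to verify that the resulting rate has the claimed form.

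Concretely, fix $\alpha\in(0,1)$ to be chosen small enough at the end. Given a target codeword length $n$ and fraction $\eps \leq \alpha$, set $k = \eps n$ and define the message length $m$ implicitly via $n = m + Ck\log^2(m/k)$. Choosing the trial value $m_0 = n\bigl(1 - 2C\eps \log^2(1/\eps)\bigr)$ and verifying $m_0 \geq n/2$ (which holds for all $\eps \leq \alpha$ once $\alpha$ is small enough), we have $\tfrac{1}{2\eps} \leq m_0/(\eps n) \leq 1/\eps$, so $\log^2(m_0/(\eps n)) = \Theta(\log^2(1/\eps))$. Consequently $n - m_0 = \Theta(\eps n \log^2(1/\eps))$, and by continuity/monotonicity in $m$ the equation $n = m + Ck\log^2(m/k)$ has an integer solution $m$ with $m/n = 1 - O(\eps\log^2(1/\eps))$.

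It remains to check the side condition $k \leq m/4$ needed to invoke the preceding theorem: since $m \geq n/2$ and $k = \eps n \leq \alpha n$, it suffices to take $\alpha \leq 1/8$. Instantiating the theorem with these $(m, k)$ then yields an explicit binary code of message length $m$, codeword length exactly $n$, correcting up to $k = \eps n$ edit errors, and of rate $m/n = 1 - O(\eps\log^2(1/\eps))$, as claimed. Explicitness (polynomial-time encoding and decoding) is inherited verbatim from the preceding theorem.

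There is no real obstacle; the only mild subtlety is the circular dependence of $m$ on itself through the term $\log^2(m/k)$, but this is tamed by insisting $m \geq n/2$, which collapses the logarithm to $\Theta(\log(1/\eps))$ uniformly in the regime $\eps \leq \alpha$. A single choice of $\alpha$ (say $\alpha = 1/100$) makes every inequality above go through simultaneously.
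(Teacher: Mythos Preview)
Your proposal is correct and matches the paper's (implicit) approach: the paper states the corollary without proof, treating it as an immediate re-parameterization of the preceding theorem, which is exactly what you carry out. Your inversion of the codeword-length formula is slightly more elaborate than strictly necessary (one could simply apply the theorem with message length $m$ and $k=2\eps m$, observe the resulting codeword length $N=m(1+O(\eps\log^2(1/\eps)))\le 2m$ so that $k\ge \eps N$, and read off the rate), but the reasoning is sound and the side condition $k\le m/4$ is handled correctly.
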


By utilizing the document exchange protocol in Section \ref{sec:randdocexc}, the generator from Theorem \ref{gfor3property} and the concatenation technique of Lemma \ref{concatenationlem} we can get a binary ECC for edit errors with even better parameters (in fact, optimal redundancy) for any $k = O(n^{a})$,  where $a$ can be any constant less than $1$.

\begin{theorem}
For any $n, k \in \mathsf{N}$, there exists an explicit binary error correcting code with message length $n$, codeword length $n+O(k \log n)$ that can correct up to $k$ edit errors. 

\end{theorem}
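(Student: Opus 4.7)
The plan is to reduce the construction to three ingredients that are already established in the paper: (1) the document exchange protocol for uniform random strings from Theorem~\ref{randdocexmaintheorem}, whose analysis actually only uses the three structural properties \ref{uniformproperty1}, \ref{uniformproperty2}, \ref{Bdistinct}; (2) the $O(\log n)$-seed generator $g$ from Theorem~\ref{gfor3property} that XOR-masks any fixed $x$ into a string satisfying these three properties with probability $1 - 1/\poly(n)$; and (3) the concatenation scheme of Lemma~\ref{concatenationlem} and Theorem~\ref{codeFromSk}, combined with the asymptotically good binary insdel code from Theorem~\ref{asympGoodECCforInsdel}. (For the regime $k = \Omega(n)$, the bound $O(k \log n) = \Omega(n \log n)$ is so large that Theorem~\ref{asympGoodECCforInsdel} applied directly to $x$ already yields the desired redundancy, so we henceforth assume $k \le \alpha n$ for a suitable constant $\alpha$.)

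The encoding algorithm on input $x \in \bit^n$ proceeds as follows. First, exhaustively search over the $2^r = \poly(n)$ seeds of $g$ for some $u$ with the property that $y := x \oplus g(n, u)$ satisfies \ref{uniformproperty1}, \ref{uniformproperty2}, and \ref{Bdistinct}; such a $u$ exists by Theorem~\ref{gfor3property}, and each property can be verified in $\poly(n)$ time. Second, run Alice's side of the protocol from Section~\ref{sec:randdocexc} on $y$ with edit-distance budget $3k$ to obtain a sketch $z$ of size $O(k \log n)$. Third, concatenate $w := u \circ z$ of length $O(\log n) + O(k \log n) = O(k \log n)$, and encode $w$ with the binary insdel code of Theorem~\ref{asympGoodECCforInsdel} tuned to correct $4k$ edit errors, producing $\tilde{c}$ of length $O(k \log n)$. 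The final codeword is $c := y \circ \tilde{c}$, of length $n + O(k \log n)$. For decoding a received word $c'$ with $\ED(c', c) \le k$, one first applies Lemma~\ref{concatenationlem} to the tail of $c'$ to recover $w$, hence $u$ and $z$. The prefix $c'[1, n+k]$ then satisfies $\ED(c'[1, n+k], y) \le 3k$ by the LCS-based calculation used in Theorem~\ref{codeFromSk}, so running Bob's side of the Section~\ref{sec:randdocexc} protocol with input $c'[1, n+k]$ and sketch $z$ (for error budget $3k$) reconstructs $y$; finally one outputs $x := y \oplus g(n, u)$.

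The main obstacle is justifying that the document exchange protocol of Section~\ref{sec:randdocexc} is usable in a \emph{deterministic} encoder/decoder, since Theorem~\ref{randdocexmaintheorem} is stated for uniformly random inputs. Inspecting the proof, the randomness of $y$ is only invoked to ensure the three properties of Theorem~\ref{Bdistinctandinterval}; given those properties, Theorems~\ref{stageIthm} and \ref{stageIItheorem} and the final redundancy bound hold \emph{deterministically}. Since our exhaustive search guarantees the three properties hold for the chosen $y$, the protocol's correctness and its $O(k \log n)$ sketch size transfer to our setting. A minor secondary point is bookkeeping: $u$ must be \emph{prepended} (or otherwise placed) inside $w$ so that the decoder has access to $g(n, u)$ before needing to un-mask $y$, which is straightforward since $w$ is fully recovered from the tail before $y$ is reconstructed.
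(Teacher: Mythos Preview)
Your proposal is correct and follows essentially the same approach as the paper: derandomize the mask via exhaustive search over the $O(\log n)$-bit seed of Theorem~\ref{gfor3property}, run the Section~\ref{sec:randdocexc} protocol on the masked string $y$ with budget $3k$, append the seed to the sketch, and feed the result through the concatenation scheme of Lemma~\ref{concatenationlem}/Theorem~\ref{codeFromSk} with the asymptotically good insdel code of Theorem~\ref{asympGoodECCforInsdel}. Your explicit observation that Theorems~\ref{stageIthm} and~\ref{stageIItheorem} depend only on the three structural properties (and not on the randomness of $y$ per se) is exactly the point the paper relies on implicitly, and the ordering of $u$ versus $z$ inside the sketch is immaterial.
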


\begin{proof}
Without loss of generality we can assume that $k \leq n/4$, since for larger $k$ we can just use the asymptotically good code from Theorem \ref{asympGoodECCforInsdel}, where the codeword length is $O(k)$.

Fix a message $x\in \{0,1\}^n$. Let $g$ be the generator from Theorem \ref{gfor3property}. We know that with probability $1-1/\poly(n)$, $ x + g(n, U_r)$ has \ref{uniformproperty1}, \ref{uniformproperty2} and \ref{Bdistinct}. Using an exhaustive search, we can find a seed $u\in \{0,1\}^r$ s.t. $y = x + g(n, u)$ has the three properties. This takes polynomial time since $r = O(\log n)$.

Then we apply the method in Section \ref{sec:randdocexc} to create a redundancy $z$ for $y$, which can correct $3k$ errors, where $|z| = O(k\log n)$.  We further add $u$ to $z$ to form a new sketch $v=z \circ u$ which has size $O(k\log n)$. Now by Theorem~\ref{codeFromSk} we can construct an error correcting code for $y$ with codeword length $n+O(k\log n)$, where the sketch is $v=z \circ u$. 

For decoding, we can first compute $y$, and then compute $x = y + g(n, u)$.
\end{proof}

\section{Discussions and open problems} \label{sec:open} In this paper we constructed deterministic document exchange protocols and binary error correcting codes for edit errors. Our results significantly improve previous results, and in particular we have obtained the first explicit constructions of binary insdel codes that are optimal or almost optimal for a wide range of error parameters $k$. We introduced several new techniques, most notably $\eps$-self matching hash functions and $\eps$-synchronization hash functions. We note that while similar in spirit to $\eps$-self matching strings and $\eps$-synchronization strings introduced in \cite{haeupler2017synchronization}, there are many important differences between these objects. For example, the objects we introduced are hash functions, while the objects in \cite{haeupler2017synchronization} are fixed strings. In particular, although we can also show that random functions are $\eps$-self matching hash functions and $\eps$-synchronization hash functions, the deterministic constructions of $\eps$-self matching hash functions and $\eps$-synchronization hash functions depend on the string $x$. In contrast, $\eps$-self matching strings and $\eps$-synchronization strings do not depend on any input string. 

We believe our techniques can be useful in other applications, and one natural open problem is to get optimal  deterministic document exchange protocols and binary insdel codes for all error parameters.

\bibliographystyle{plain}
\bibliography{ref}

\end{document}